\let\set\mathbb
\def\<#1>{\langle#1\rangle}
\numberwithin{equation}{section}
\newtheorem{theorem}{Theorem}[section]
\newtheorem{prop}[theorem]{Proposition}
\newtheorem{thm}[theorem]{Theorem}
\newtheorem{corollary}[theorem]{Corollary}
\newtheorem{lemma}[theorem]{Lemma}
\newtheorem{remark}[theorem]{Remark}
\newtheorem{defi}[theorem]{Definition}
\newtheorem{example}[theorem]{Example}
\newtheorem{assumption}[theorem]{Assumption}
\newcommand{\myitem}[1]{%
	\item[(#1)]\protected@edef\@currentlabel{#1}%
}
\def\eatspace#1{#1}
\def\step#1#2{\par\kern1pt\hangindent#2em\hangafter=1\noindent\rlap{\small#1}\kern#2em\relax\eatspace}
\let\set\mathbb
\def\<#1>{\langle#1\rangle}
\def\val{\operatorname{val}}
\def\diag{\operatorname{diag}}
\def\a{\alpha}
\begin{document}
\title{On the Existence of Telescopers for P-recursive Sequences
	\thanks{L.\ Du was supported by the Austrian FWF grant P31571-N32. 
	}}

\author{  \bigskip
	Lixin Du\\ \normalsize
	Institute for Algebra, Johannes Kepler University, Linz, A4040, Austria\\
	{\sf \normalsize lixindumath@gmail.com}\\ \\ 
%	Dedicated to the memory of Marko Petko\v sek.
}
\date{}
\maketitle
\vspace{-3em}
\begin{abstract}
	We extend the criterion on the existence of telescopers for hypergeometric terms to the case of P-recursive sequences. This criterion is based on the concept of integral bases and the generalized Abramov-Petkov\v sek reduction for P-recursive sequences.

	\medskip
	\emph{Keywords.} Symbolic summation; Abramov-Petkov\v sek reduction; Integral bases.
\end{abstract}

\section{Introduction}
Given a sequence $f(t,x)$ in two variables, the task of creative telescoping is to find a nonzero recurrence operator~$T$, in $t$ only, such that 
\[T\cdot f(t,x) = g(t, x+1) - g(x),\]
where $g$ belongs to the same class of sequences as $f$. Such an operator $T$ is called a {\em telescoper} for $f$ and $g$ is called a {\em certificate} for $T$. In 1990, Zeilberger developed an algorithm to construct a telescoper for hypergeometric terms~\cite{Zeilberger1991,PWZbook1996}. Zeilberger's algorithm is a useful tool for proving combinatorial identities, which has been developed for various class of functions~\cite{Zeilberger1990,chyzak00,koutschan10b}. In the differential case, telescopers always exists for D-finite functions~\cite{Zeilberger1990}. In the shift case, this is not true. 

At the beginning, Wilf and Zeilberger~\cite{Wilf1992} proved that telescopers exist for proper hypergeometric terms. After ten years, Abramov and Le~\cite{AbramovLe2002} gave a necessary and sufficient condition on the existence of telescopers for rational functions using the additive decomposition~\cite{Abramov1995,Pirastu1995b}. More precisely, any rational function $f(t,x)$ can be decomposed as \[g(t,x+1)- g(t,x) + h(t,x),\] where $g$ and $h$ are rational functions and the denominator of $h$ has the minimal degree with respect to~$x$. The~existence criterion says that $f$ has a telescoper if and only if the denominator of $h$ is integer-linear. This criterion was soon extended to the hypergeometric case by Abramov~\cite{Abramov03} using Abramov-Petkov\v sek reduction~\cite{AbramovPetkovsek01,AbramovPetkovsek01b}.

For a more general class of functions, Zeilberger~\cite{Zeilberger1990} proved that telescopers always exist for holonomic functions. Chen, Kauers and Koutschan proposed the notion of proper $\partial$-finite functions, which gives a sufficient condition on the existence of telescopers for $\partial$-finite functions. For P-recursive sequences, there are several additive decomposition algorithms~\cite{vanderHoeven18b,BrochetBruno23,chen23b}, which have been used to develop reduction-based telescoping algorithms. They can compute a telescoper if it exists. If a telescoper does not exist, the algorithms will run into an infinite loop to search a telescoper. The termination of these algorithms is equivalent to the existence problem of telescopers, which is not completely solved although some sufficient conditions are given in~\cite{Zeilberger1990,chen14,vanderHoeven18b}. 

In this paper, we shall modify the notion of {\em proper} P-recursive sequences in~\cite{chen14}, and give a complete answer to the existence problem of telescopers for P-recursive sequences,  see Section~\ref{sec:proper}. The main result is as follows: a P-recursive sequence has a telescoper if and only if it can be decomposed into a summable part and a proper part (Theorem~\ref{Thm: main}). Similar to the hypergeometric case~\cite{Abramov03,Huang16}, our existence criterion is based on the generalized Abramov-Petkov\v sek reduction~\cite{chen23b} via integral bases. The proof of the criterion is given in Sections \ref{sec: proper implies existence} and~\ref{sec: existence implies proper}. For a proper P-recursive sequence, we give an upper bound on the order of minmial telescopers (Corollary~\ref{Cor: tele bound}).

\section{Proper P-recursive sequences}\label{sec:proper}
Let $C$ be a field of characteristic zero and $\bar C$ be the algebraic closure of $C$. Let $C(t,x)$ be the field of rational functions in $t$ and $x$ over $C$. Let $\sigma_t,\sigma_x\colon C(t,x)\to C(t,x)$ be field automorphisms such that $\sigma_t(f(t,x))=f(t+1,x)$ and $\sigma_x(f(t,x)) = f(t,x+1)$ for all $f\in C(t, x)$. Let $C(t,x)[S_t, S_x]$ be an Ore algebra~\cite{ChyzakSalvy98}, where $S_t$ and $S_x$ are shift operators satisfying the multiplication rules $S_tS_x = S_xS_t$, $S_tf=\sigma_t(f)S_t$ and $S_xf = \sigma_x(f)S_x$ for all $f\in C(t,x)$. The difference operators $S_t-1$ and $S_x-1$ are denoted by $\Delta_t$ and $\Delta_x$, respectively. 

Let $J$ be a left ideal of $C(t, x)[S_t, S_x]$, such that $A=C(t, x)[S_t, S_x]/J$ is a $C(t,x)$-vector space of finite dimension $r$. When there is no ambiguity, an equivalence class $f+J$ in $A$ is also denoted by~$f$. An element $f\in A$ is called {\em summable} in $A$ (with respect to $x$) if there exists $g\in A$ such that $f=\Delta_x(g)$. If $f\in A$ is not summable, one can ask to find a nonzero operator $T \in C(t)[S_t]$ (free of $x$) such that $Tf$ is summable in $A$. Such an operator $T$ if it exists is called a \emph{telescoper} for $f$ of type~$(S_t; S_x)$.

Let $W=(\omega_1,\ldots, \omega_r)$ be a basis of $A$ as a $C(t,x)$-vector space. Let $e, e_t\in C[t,x]$ and $M, M_t\in C[t,x]^{r\times r}$ be such that $S_xW = \frac{1}{e}MW$ and $S_tW = \frac{1}{e_t}M_tW$, where $S_xW=(S_x\omega_1,\ldots, S_x\omega_r)$ and $S_tW=(S_t\omega_1,\ldots, S_t\omega_r)$. In this paper, we make the following assumption.
\begin{assumption}\label{Assump}
	There exists a basis $W$ of $A$ such that the matrices $\frac{1}{e}M$ and $\frac{1}{e_t}M_t$ are invertible over $C(t,x)$.
\end{assumption}
 Let $\tilde W$ be another basis of $A$ and let $T\in C(t,x)^{r\times r}$ be an invertible matrix such that $\tilde W=T W$. Then $S_x\tilde W = \sigma_x(T)S_xW=\sigma_x(T)\frac{1}{e}MT^{-1}\tilde W$ and $S_t\tilde W=\sigma_t(T)S_tW=\sigma_t(T)\frac{1}{e_t}M_tT^{-1}\tilde W$, where the shift of $T$ is the matrix obtained by taking the shift of all its entries. So if we write $S_x\tilde W=\frac{1}{\tilde e}\tilde M\tilde W$ and $S_t \tilde W= \frac{1}{\tilde e_t}\tilde M_t\tilde W$, where $e,\tilde e\in C[t,x]$ and $M,\tilde M\in C[t,x]^{r\times r}$, then $\frac{1}{\tilde e}\tilde M$ and $\frac{1}{\tilde e_t}\tilde M_t$ are also invertible over $C(t,x)$. Therefore Assumption~\ref{Assump} can be checked by choosing an arbitrary basis of $A$.

 Under the above assumption that $\frac{1}{e}M$ is invertible, there exists a cyclic vector $\gamma$ with respect to $S_x$, see~\cite[Theorem B.2]{HendricksSinger99}. This means that $\{\gamma, S_x\gamma, \ldots, S_x^{r-1}\gamma\}$ is a basis of $A$ over $K(x)$ with $K=C(t)$. Then $\gamma$ is annihilated by $L$ and $S_t - u_t$ for some $L, u_t\in K(x)[S_x]$. Without loss of generality, we assume that $L=\ell_0 + \ell_1S_x+\cdots+\ell_rS_x^r\in K[x][S_x]$ with polynomial coefficients $\ell_i \in K[x]$ and $\ell_0\ell_r\neq 0$. Then $r$ is called the {\em order} of $L$. Every element $f$ in $A$ can be written as $P_f\gamma$ for some $P_f \in K(x)[S_x]$. From now on, we identify $A$ with $K(x)[S_x]/\<L>$ by the map sending $f$ to $P_f+\<L>$ if $f = P_f \gamma$ for some $P_f\in K(x)[S_x]$. It can be checked that this map is a $K(x)[S_x]$-module isomorphism.
 
 %Without loss of generality, we may assume $L=\ell_0 + \ell_1S_x+\cdots+\ell_rS_x^r\in K[x][S_x]$ with polynomial coefficients $\ell_i \in K[x]$ and $\ell_r\neq 0$. For the basis $W=\{\gamma, S_x\gamma,\ldots, S_x^{r-1}\gamma\}$, the determinant of $\frac{1}{e}M$ is $(-1)^r(\ell_0/\ell_r)$. So by Assumption~\ref{Assump}, we have $\ell_0\neq 0$.
An irreducible polynomial $p\in C[t,x]$ is called {\em integer-linear} over $C$ if there exist a univariate polynomial $h\in C[z]$ and two integers $m,n\in \set Z$ such that $p = h(mt+nx)$. A polynomial $p\in C[t,x]$ is called {\em integer-linear} over $C$ if all its irreducible factors are integer-linear. A rational function $f\in C(t,x)$ is called {\em integer-linear} over $C$ if its numerator and denominator are integer-linear. For a given polynomial $p$, there is an algorithm~\cite{AbramovLe2002} for deciding whether $p$ is integer-linear or not without computing its irreducible factorization.

\begin{defi}\label{Defi: proper}
	Let $W$ be a $K(x)$-vector space basis of $A$. Any $f\in A$ can be represented as
	\[f = \frac{aW}{u},\]
	where $a=(a_1,\ldots, a_r)\in C[t,x]^r$, $u\in C[t,x]$ and $\gcd(a_1,\ldots, a_r, u) = 1$. We say that $f$ is a {\em proper} P-recursive sequence (with respect to $x$) if there exists a suitable basis $W$ (with respect to $x$) at $\{\beta_1,\ldots, \beta_I\}$ for some $\beta_i\in \bar K$ such that the denominator $u$ is integer-linear. 
\end{defi}

We shall recall the definition~\cite{chen23b} of suitable bases in Section~\ref{sec: integral}. For a suitable basis $W$, if we write $S_xW=\frac{1}{e}MW$ with $e\in C[t,x]$ and $M\in C[t,x]^{r\times r}$, then $W$ has the feature that $e$ is shift-free with respect to $x$, i.e., $\gcd(e, \sigma_x^i(e)) = 1$ (as polynomials in $K[x]$) for all $i\in \set Z\setminus\{0\}$. Such a suitable basis exists and can be computed~\cite{chen23b}. We will prove in Corollary~\ref{Cor: indep of suitable} that the properness of P-recursive sequences is independent of the choice of suitable bases. Furthermore, we will show in Corollary~\ref{Cor: sym} that the properness with respect to $x$ is equivalent to the properness with respect to $t$. The classical definition of {\em proper} hypergeometric terms can be viewed as a special case of our definition, see Corollary~\ref{Cor: hypergeo}. %The relation between our definition and the previous definition of properness in~\cite{chen14} will be explained in Remark~\ref{Rem: compare proper}.

The main results of this paper are as follows.
\begin{theorem}\label{Thm: main_shiftfree}
	Let $W$ be a suitable basis of $A$ at $\{\beta_1,\ldots, \beta_I\}$. Let $f = \frac{aW}{u}$, where $a=(a_1,\ldots, a_r)\in C[t,x]^r$, $u\in C[t,x]$ is shift-free with respect to $x$ and $\gcd(a_1,\ldots, a_r, u) = 1$. If $f$ has a telescoper of type $(S_t;S_x)$, then $u$ is integer-linear.
\end{theorem}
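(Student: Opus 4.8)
The plan is to argue by contradiction: assume $f=\frac{aW}{u}$ has a telescoper $T\in C(t)[S_t]$ but $u$ has an irreducible factor $p$ that is \emph{not} integer-linear. The first step is to localize the analysis at the singularity $p$. Since $W$ is a suitable basis, the matrix $\frac1eM$ giving $S_xW$ has $e$ shift-free with respect to $x$; this means that applying $S_x-1$ (or any operator in $C(t,x)[S_x]$) to an element whose denominator is coprime to all shifts of $e$ cannot \emph{cancel} an arbitrary pole, only move it to a bounded shift-orbit. I would make this precise by passing to the $p$-adic (or $\sigma_x$-orbit) valuation: for a nonzero rational vector $v/w\,W$ in lowest terms, track the multiplicity of the shift-orbit $\{\sigma_x^i(p):i\in\mathbb Z\}$ of irreducible factors of $w$. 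The key structural fact, coming from the generalized Abramov--Petkov\v sek reduction via the suitable basis, is that $\Delta_x(g)$ for $g\in A$ can only have poles on $p$'s orbit that are, in a suitable sense, "removable up to orbit shifts"; equivalently, the obstruction to $f$ being summable that lives on the orbit of a non-integer-linear $p$ is \emph{non-trivial and cannot be killed by $\Delta_x$}.

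The second and central step is to understand what a telescoper does to this obstruction. Write $T=\sum_{j}c_j(t)S_t^j$. Then $Tf$ summable means $Tf=\Delta_x(g)$ for some $g\in A$. Now $S_t$ acts on $f$ via the companion relation $S_tW=\frac1{e_t}M_tW$, so $S_t^j f$ has denominator dividing (a power of) a product of $x$-shifts and $t$-shifts of $u$, $e$, $e_t$. The crucial point is that $p$ is \emph{not} integer-linear, so its $\sigma_t$-orbit and its $\sigma_x$-orbit are "independent": no $\sigma_t^j(p)$ lies in the $\sigma_x$-orbit of $p$ (this is exactly where integer-linearity is the dividing line — an integer-linear $p=h(mt+nx)$ would have $\sigma_t^m$ and $\sigma_x^{-n}$ acting identically on it). Hence the pole of $f$ along the orbit of $p$ is not cancelled, diminished, or shifted onto itself by any $S_t^j$ with $j\neq 0$; the different terms $c_j(t)S_t^j f$ have their $p$-orbit-poles on genuinely distinct orbits. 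Therefore the $p$-orbit component of the obstruction of $Tf$ equals $c_0(t)$ times that of $f$, which is nonzero — so $Tf$ cannot be summable, contradiction. This forces every irreducible factor of $u$ to be integer-linear, i.e. $u$ is integer-linear.

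To carry this out rigorously I would introduce, following the reduction machinery of \cite{chen23b}, a "residue-type" invariant $\rho_p(f)\in\bar K$ (or a small finite-dimensional object) attached to each $\sigma_x$-orbit of irreducible polynomials $p$ appearing in the denominator, with the two properties: (i) $\rho_p(\Delta_x(g))=0$ for all $g\in A$ whose denominator's orbit-part is controlled by $e$ — this is the summability-obstruction property of suitable bases; and (ii) $\rho_p$ is additive and $\rho_{p}(S_t^j f)$ is supported on the orbit of $\sigma_t^{-j}(p)$, so that $\rho_p\big(\sum_j c_j S_t^j f\big)=c_0\,\rho_p(f)$ whenever no two of the $\sigma_t^j(p)$ share a $\sigma_x$-orbit. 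Establishing (ii) — i.e. that the orbits $\{\sigma_x^i\sigma_t^j(p)\}$ for distinct $j$ are disjoint precisely when $p$ is not integer-linear — is a clean polynomial computation (monomials $mt+nx$ versus their $t$-shifts). The main obstacle is step (i): pinning down exactly which local obstruction $\rho_p$ is preserved by the suitable-basis structure and proving it is nonzero for $f$ when $p\mid u$ in lowest terms. This is where the shift-freeness of $e$ and the precise local normal form of $S_xW$ at the orbit of $p$ — the content of the "suitable basis at $\{\beta_1,\dots,\beta_I\}$" construction to be recalled in Section~\ref{sec: integral} — must be used in full; everything else is bookkeeping with valuations and orbits.
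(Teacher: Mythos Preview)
Your overall strategy is close to the paper's, and step~(i) --- that the AP reduction in a suitable basis furnishes a summability obstruction localized at each non-integer-linear $\sigma_x$-orbit --- is essentially Corollary~\ref{Cor: AP-remainder} together with the fact that $e$ is integer-linear (Theorem~\ref{Thm: compatibility}); your worry about~(i) is misplaced. The genuine gap is in~(ii). The condition ``no two of the $\sigma_t^j(p)$ share a $\sigma_x$-orbit'' holds automatically for non-integer-linear $p$, but it does \emph{not} give $\rho_p(S_t^j f)=0$ for $j\neq0$, because $u$ may have \emph{other} non-integer-linear factors in the $\langle\sigma_t,\sigma_x\rangle$-orbit of $p$. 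For instance $q:=\sigma_t^{-1}\sigma_x(p)$ lies in a different $\sigma_x$-orbit from $p$ (so both can divide the shift-free $u$), yet $\sigma_t(q)=\sigma_x(p)$ is in the $\sigma_x$-orbit of $p$, so the $q$-pole of $f$ feeds into $\rho_p(S_t f)$. Then $c_0\rho_p(f)+c_1\rho_p(S_t f)$ can vanish even with $\rho_p(f)\neq0$; and separately, nothing forbids $c_0=0$. A single residue $\rho_p$ therefore does not yield a direct contradiction.

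The paper repairs both issues by working at the top of $T$ rather than the bottom: choose $p$ with $\sigma_t^i(p)\nmid\bar u$ for all $i>0$, let $\rho$ be the order of $T$, and show that $\sigma_t^\rho(p)$ divides the stem of $Tf$ (this needs $\det(M_t)$, not just $e_t$, to be integer-linear --- Theorem~\ref{Thm: compatibility} and Lemma~\ref{Lem: gcd} --- so that the matrices $\sigma_t^j(M_t)$ cannot cancel the pole). Crucially, the paper then does \emph{not} claim a residue identity; instead it invokes the weaker ``spread'' criterion (Theorem~\ref{Thm:spread_sum}): if $Tf$ were summable, some $\sigma_x^{j_0}\sigma_t^\rho(p)$ with $j_0\neq0$ would also lie in the stem of $Tf$, and tracing back shows that \emph{every} irreducible factor of $\bar u$ admits a strictly $\sigma_t$-positive bivariate shift still dividing $\bar u$. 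The finiteness Lemma~\ref{Lem: integer-linear thm6} then forces $\bar u$ to be integer-linear --- the desired contradiction. Your single-$p$ residue argument tries to bypass this pigeonhole step, and that is exactly where it breaks.
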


\begin{theorem}\label{Thm: main}
	A P-recursive sequence $f\in A$ has a telescoper of type $(S_t; S_x)$ if and only if 
	\[f = \Delta_x(g) +h\]
	where $g,h\in A$ and $h$ is a proper P-recursive sequence with respect to $x$.  
\end{theorem}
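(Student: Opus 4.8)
The plan is to derive Theorem~\ref{Thm: main} from the ``shift-free'' version, Theorem~\ref{Thm: main_shiftfree}, together with the generalized Abramov--Petkov\v sek reduction of~\cite{chen23b}. For the nontrivial direction (existence of a telescoper implies the decomposition), I would first fix a suitable basis $W$ of $A$ at a suitable set $\{\beta_1,\dots,\beta_I\}$ and apply the reduction to $f$: this produces $g\in A$ and a remainder $h = \frac{aW}{u}\in A$ with $f=\Delta_x(g)+h$, where $u$ is shift-free with respect to $x$ and $\gcd(a_1,\dots,a_r,u)=1$, and with the property that $h$ is summable in $A$ if and only if $h=0$. The key point is that the reduction is ``telescoper preserving'': since $f-h=\Delta_x(g)$ is summable, $f$ has a telescoper $T$ of type $(S_t;S_x)$ if and only if $h$ does. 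Now if $f$ has a telescoper, then so does $h$, and Theorem~\ref{Thm: main_shiftfree} applies directly to conclude that $u$ is integer-linear; hence $h$ is a proper P-recursive sequence by Definition~\ref{Defi: proper}, and $f=\Delta_x(g)+h$ is the desired decomposition.

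\medskip

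For the converse direction, suppose $f=\Delta_x(g)+h$ with $h$ proper. Then $f$ has a telescoper $T$ of type $(S_t;S_x)$ if and only if $h$ does, because $f-h$ is summable and summability is preserved under applying any $T\in C(t)[S_t]$ (the operators commute with $\Delta_x$ up to the $S_t$-action, and $\Delta_x$ of an element of $A$ stays in $A$). So it suffices to show every proper P-recursive sequence $h$ has a telescoper. This is exactly the content of Section~\ref{sec: proper implies existence} (``proper implies existence''), which I would invoke: using a suitable basis witnessing properness one writes $h=\frac{aW}{u}$ with $u$ integer-linear, and then an explicit dimension/degree-counting argument over the relevant function field---analogous to the proper-hypergeometric case of Wilf--Zeilberger and Abramov---produces a nonzero $T\in C(t)[S_t]$ annihilating $h$ modulo $\Delta_x(A)$, yielding the order bound of Corollary~\ref{Cor: tele bound} as a byproduct.

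\medskip

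Two subtleties need care. First, the reduction only guarantees a \emph{shift-free} denominator, whereas Definition~\ref{Defi: proper} asks for an integer-linear denominator; bridging this gap is precisely why Theorem~\ref{Thm: main_shiftfree} is stated for shift-free $u$, so one must make sure the normalization $\gcd(a_1,\dots,a_r,u)=1$ and the shift-freeness of $u$ coming out of the reduction match the hypotheses of Theorem~\ref{Thm: main_shiftfree} verbatim---this is routine but must be checked. Second, one should confirm that ``$h$ is summable in $A$ iff $h=0$'' after full reduction is really what~\cite{chen23b} delivers, since it is what makes the telescoper-equivalence between $f$ and $h$ an honest ``if and only if'' rather than just one implication. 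The main obstacle, however, is not in this gluing argument---it is entirely in the two one-directional results (Theorem~\ref{Thm: main_shiftfree} and the ``proper implies existence'' construction) proved in Sections~\ref{sec: proper implies existence} and~\ref{sec: existence implies proper}; granting those, the proof of Theorem~\ref{Thm: main} is a short synthesis via the reduction.
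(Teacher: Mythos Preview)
Your proposal is correct and matches the paper's own approach: both directions are obtained by synthesizing the generalized Abramov--Petkov\v sek reduction with Theorem~\ref{Thm: main_shiftfree} (for ``only if'') and the machinery of Section~\ref{sec: proper implies existence} (for ``if''). Two minor remarks: your second ``subtlety'' is a red herring---the telescoper equivalence between $f$ and $h$ follows immediately from $f-h\in\Delta_x(A)$ and $T\Delta_x=\Delta_x T$, without any need for the ``$h$ summable iff $h=0$'' property---and in the ``if'' direction the paper does not invoke Section~\ref{sec: proper implies existence} as a black box on $h$, but instead applies the AP reduction (Corollary~\ref{Cor: AP-remainder}) and then uses Theorem~\ref{Thm: compatibility} and Corollary~\ref{Cor: tilde e integer-linear} to check that the resulting shift-free denominator $d$ is integer-linear (its factors being $\langle\sigma_x\rangle$-equivalent to factors of $e\tilde e\tilde u$), which is the actual hypothesis of Theorem~\ref{Thm: tele add}; you gloss over this bridging step.
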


Let $W$ be a suitable basis of $A$ at $\{\beta_1,\ldots, \beta_I\}$. Applying the generalized Abramov-Petkov\v sek reduction for P-recursive sequences~\cite{chen23b} to $f$, see Section~\ref{sec: add}, one can decompose $f = \Delta_x(g)+h$, where $g,h\in A$ and the denominator $\tilde u$ of $h$ with respect to $W$ is shift-free with respect to $x$. Then $f$ has a telescoper of type $(S_t;S_x)$ if and only if $h$ has a telescoper of type $(S_t;S_x)$. By Theorems \ref{Thm: main_shiftfree} and~\ref{Thm: main}, $h$ has a telescoper of type $(S_t;S_x)$ if and only if $\tilde u$ is integer-linear. This gives an algorithm for deciding the existence of telescopers for a given P-recursive sequences. The proof of Theorems \ref{Thm: main_shiftfree} and~\ref{Thm: main} will be given later.

% \begin{remark}
%	TODO: explain the assumption that $\frac{1}{e}M$, $\frac{1}{e_t}M_t$ are invertible
%\end{remark}
\begin{example}\label{Eg: proper1}
	We consider the existence of telescopers for rational functions. Let $F=\frac{x^2 +t}{(x-t)^2}$, which is annihilated by
	\[L = (x+1-t)^2(x^2+t)S_x - ((x+1)^2 + t)(x-t)^2\quad \text{and}\quad S_t -\frac{(x^2+t+1)(x-t)^2}{(x^2+t)(x-t-1)^2}.\]
	A suitable basis of $A = K(x)[S_x]/\<L>$ with $K=\set C(t)$ at $\{\beta_1,\beta_2,\beta_3\}$ is $W = \{\frac{(x-t)^2}{x^2 + t}\}$, where $\beta_1= t$, $\beta_2$ and $\beta_3$ are distinct roots of $x^2 + t$. %Then $S_x W = W$ and $S_tW= W$. 
	\begin{enumerate}[{(i)}]
		\item The function $F$ corresponds to $f= 1\in A$. Since $f = \frac{x^2 + t}{(x-t)^2} W$ with denominator $u = (x-t)^2$ being integer-linear, $f$ is proper. By Theorem~\ref{Thm: main}, $f$ has a telescoper of type $(S_t;S_x)$. Indeed, $T = -(t^2 + t) S_t^2 + (2t^2 + 4 t) S_t -(t^2+3t+2)$ is a telescoper for $f$.
		\item The function $\frac{1}{x^2 + t} = \frac{(x-t)^2}{(x^2 + t)^2} F$ corresponds to $f =\frac{(x-t)^2}{(x^2+t)^2}\in A$. Since $f =\frac{1}{(x^2 + t)}W$ with denominator $u=x^2 +t$ being shift-free with respect to $x$ but not integer-linear, it follows from Theorem~\ref{Thm: main_shiftfree} that $f$ does not have any telescoper of type $(S_t;S_x)$. 
	\end{enumerate} 
\end{example}
\begin{example}\label{Eg: proper2}
	Let $F=\binom tx^3$ be the same as in~\cite[Example 6.6]{chen15a}. Then $F$ is annihilated by
	\[L = (1+x)^3S_x - (t-x)^3\quad \text{and}\quad S_t - \frac{(1+t)^3}{(1+t-x)^3}.\] 
	%Then $S_x W = \frac{(1+x)^3}{(t-x)^3}W$ and $S_tW= \frac{(1+t)^3}{(1+t-x)^3}W$.
	A suitable basis of $A = K(x)[S_x]/\<L>$ with $K=\set C(t)$ at $\{-1,t\}$ is $W = \{1\}$. 
	The hypergeometric term $F$ corresponds to $f =1\in A$. Since $f=W$ with denominator $u=1$ being integer-linear, $f$ is proper. By Theorem~\ref{Thm: main}, $f$ has a telescoper of type $(S_t;S_x)$. Indeed, $T=(t+2)^2S_t^2 - (7 t^2+ 21 t +16 )S_t -8 (t+1)^2$ is  a telescoper for $f$.
\end{example}
\begin{example}\label{Eg: proper3}
	Let $F=x+t^2+\frac{1}{x!}$ be the same as in~\cite[Example 26]{chen23b}. Then $F$ is annihilated by
	\[L =(x+2)(x^2+(t^2+1)x+1)S_x^2 - (x^3+(t^2+5)x^2+(3t^2+7)x+t^2+4)S_x +x^2+(t^2+3)x+t^2+3\]
	and
	\[S_t -\frac{(2t+1)(x+1)}{x^2+(t^2+1)x+1}S_x- \frac{x^2+(t^2+2)x-2t}{x^2+(t^2+1)x+1}.\]
	A suitable basis of $A=K(x)[S_x]/\<L>$ with $K=\set C(t)$ at $\{\beta_1,\beta_2,\beta_3\}$ is
	\[W=(\omega_1,\omega_2)=\left(1,\frac{1}{x^2 + (t^2+1)x+1}\left(S_x - \frac{x+t^2}{t^2-1}\right)\right),\]
	where $\beta_1=-2$, $\beta_2$ and $\beta_3$ are distinct roots of $x^2+(t^2+1)x+1$. %Then
%	\begin{equation*}
%		\begin{pmatrix}
%			S_x\omega_1\\
%			S_x\omega_2
%		\end{pmatrix} = \frac{1}{x+2}\begin{pmatrix}
%			1&0\\
%			-\frac{x+1}{(t^2-1)^2}&x+2
%		\end{pmatrix}\begin{pmatrix}
%			\omega_1\\
%			\omega_2
%		\end{pmatrix}\quad \text{and}\quad
%		\begin{pmatrix}
%			S_t\omega_1\\
%			S_t\omega_2
%		\end{pmatrix} = \begin{pmatrix}
%			\frac{t(t+2)}{t^2-1}&0\\
%			0&\frac{t^2-1}{t(t+2)}
%		\end{pmatrix}\begin{pmatrix}
%			\omega_1\\
%			\omega_2
%		\end{pmatrix}.
%	\end{equation*}
	\begin{enumerate}[(i)]
		\item The sequence $\frac{1}{x+t}F$ corresponds to $f =\frac{1}{x+t}\in A$. Its representation in the basis is
		\begin{equation*}
			f=\frac{1}{x+t}\omega_1= \frac{1}{x+t}(1,0)W.
		\end{equation*}
		Since its denominator $u=x+t$ is integer-linear, f is proper. By Theorem~\ref{Thm: main}, $f$ has a telescoper of type $(S_t;S_x)$. Indeed, $T=(3t^2+3t+2)S_t^3+(3t^3+3t^2-4t-6)S_t^2 -(6t^3+15t^2+13t+2)S_t+3t^3+9t^2+8t$ is a telescoper for $f$.
		\item The sequence $\frac{1}{x^2+t}F$ corresponds to $f =\frac{1}{x^2+t}\in A$. Since $f=\frac{1}{x^2+t}(1,0)W$ with denominator $u=x^2+t$ being shift-free with respect to $x$ but not integer-linear, it follows from Theorem~\ref{Thm: main_shiftfree} that $f$ does not have any telescoper of type $(S_t;S_x)$.
	\end{enumerate}
	
\end{example}
\section{Integral bases and suitable bases}\label{sec: integral}
Let $K$ be a field of characteristic zero. Throughout the paper, the field $K$ is $C(t)$. For each $\alpha\in \bar K$, every nonzero rational function $u\in K(x)$ can be written as $u = (x-\alpha)^m p/q$ for some $m\in \set Z$, $p,q\in \bar K[x]$ with $\gcd(p,q)=1$ and $(x-\alpha)\nmid pq$. The {\em valuation} of $\nu_\alpha(u)$ of $u$ at $\alpha$ is defined as the integer $m$. The {\em valuation} $\nu_\infty(f)$ of a nonzero rational function $u= p/q$ with $p,q\in K[x]$ is defined as $\nu_\infty(u) = \deg_x(q)- \deg_x(p)$. For each $\alpha\in \bar K\cup\{\infty\}$, set $\nu_\alpha(0)=\infty$. The field $K(x)$ equipped with a valuation $\nu_\alpha$ is a {\em valued field}. The set $K(x)_\alpha :=\{u \in K(x)\mid \nu_\alpha(u)\geq 0\}$ forms a subring of $K(x)$.

Let $L=\ell_0 + \ell_1S_x +\cdots + \ell_rS_x^r \in K[x][S_x]$ with $\ell_i\in K[x]$ and $\ell_0\ell_r\neq 0$. Then $A = K(x)[S_x]/\<L>$ a $K(x)$-vector space of dimension $r$. A map $\val_\alpha
\colon A \to \set R\cup\{\infty\}$ is called a {\em value function} on $A$ at $\alpha$ if for all $f,g\in A$ and $u \in K(x)$, it satisfies three properties: (\romannumeral 1) $\val_\alpha(f)= \infty$ if and only if $f=0$; (\romannumeral 2)~$\val_\alpha(uf)=\nu_\alpha(u)+\val_\alpha(f)$; (\romannumeral 3) $\val_\alpha(f+g)\geq \min\{\val_\alpha(f), \val_\alpha(g)\}$. For each $\alpha\in \bar K\cup \{\infty\}$, such a value function exists. Let $\val_\alpha$ be the value function introduced in~\cite{chen20,chen23b}. An element $f\in A$ is called {\em (locally) integral} at $\alpha$ if $\val_\alpha(f)\geq 0$. For a subset $Z\subseteq \bar K$, an element $f\in A$ is called {\em(locally) integral} at $Z$ if $f$ is locally integral at all $\alpha\in Z$. 

The set of all elements $f\in A$ that are locally integral at some fixed $\alpha\in \bar K\cup\{\infty\}$ forms a $ K(x)_\alpha$-module. A basis of this module is called a {\em local integral basis} at $\alpha$ of $A$. For $Z\subseteq \bar K\cup \{\infty\}$, a basis of~$A$ is called a {\em local integral basis} at $Z$ if it is a {\em local integral basis} at all $\alpha\in Z$. Different from the D-finite case, there may not exist a basis that is a local integral basis of $A$ at all $\alpha \in \bar K$. If $Z$ is a finite subset of $\bar K\cup\{\infty\}$, a local integral basis of $A$ at $Z$ exists and can be computed~\cite{chen20,chen23b}.

A $K(x)$-vector space basis $\{\omega_1,\dots,\omega_r\}$ of $A$ is
called \emph{normal at infinity} if there exist integers $\tau_1,\dots,\tau_r\in \set Z$ such that $\{x^{\tau_1}\omega_1,\dots,x^{\tau_r}\omega_r\}$ is a local integral basis at infinity. Given a basis of $A$, Trager's algorithm~\cite{Trager84} for algebraic function fields can be literally adapted to compute a basis of $A$ that is normal at infinity and generates the same $K[x]$-module as the given basis.

As introduced in~\cite{chen23b}, a suitable basis of $A$ is a local integral basis at some finite subset $Z\subseteq \bar K$. The construction of $Z$ is as follows. Since $\ell_0\ell_r\neq 0$, let $\a_1+\set Z, \ldots,\a_I +\set Z$ be distinct sets such that for each $i$ with $1\leq i\leq I$, the product $\ell_0\ell_r$ has at least one root in $\a_i +\set Z$. For $\beta, \gamma\in \a+\set Z$ with $\a\in \bar K$, we say $\beta<\gamma$ if $\beta-\gamma<0$. For each orbit $\a_i + \set Z$, let $\a_{i,1}<\a_{i,2}<\ldots <\a_{i,J_i}$ be all roots of $\ell_0\ell_r$ in $\a_i+\set Z$. We choose an arbitrary $\beta_i\in \a_i+\set Z$ such that $\alpha_{i, J_i}\leq \beta_i$. A basis of $A$ is called {\em suitable} at $\{\beta_1,\ldots, \beta_I\}$ if it is a local integral basis of $A$ at $Z=\bigcup_{i=1}^I\{\gamma\in \a_i+\set Z \mid \a_{i,1}<\gamma\leq\beta_i\}$ and it generates the same $K(x)_\a$-module as $U=\{1,S_x,\ldots, S_x^{r-1}\}$ for all $\a\in \bar K \setminus Z$. For simplicity, a basis is called {\em suitable} if it is suitable at some $\{\beta_1,\ldots, \beta_I\}$. In order to obtain a suitable basis of $K(x)[S_x]/\<L>$ (instead of $\bar K(x)[S_x]/\<L>$), we assume that if $\alpha_{i,J_i}$ and $\alpha_{i,J_{i'}}$ are conjugate over $K$ for some $1\leq i\neq i'\leq I$, then $\beta_i-\alpha_{i,J_i} = \beta_{i'} - \alpha_{i,J_{i'}}$. In this case, for each $1\leq i\leq I$, all conjugates of $\beta_i$ over $K$ belong to $\{\beta_1,\ldots, \beta_I\}$.

\begin{theorem}\label{Thm: suitable}
	Let $W$ be a suitable basis of $K(x)[S_x]/\<L>$ at $\{\beta_1,\ldots,\beta_I\}$. Let $e\in K[x]$ and $M\in K[x]^{r\times r}$ be such that $S_xW = \frac{1}{e}MW$. Then %$\det(M)/e$ is shift-reduced, i.e., $\gcd(\det(M), \sigma_x^i(e))=1$ for all nonzero $i\in \set Z$.
	\begin{enumerate}[(i)]
		\item\label{it:suitable1} $e$ is shift-free with respect to $x$ and $\beta_1, \ldots, \beta_I$ are all possible roots of $e$ in $\bar K$;
		\item\label{it:suitable2} %$\det(\frac{1}{e}M)$ is shift-reduced w.r.t. $x$ and
		$\beta_1, \ldots, \beta_I$ are all possible roots and poles of $\det(\frac{1}{e}M)$ in $\bar K$;
		\item\label{it:suitable3} $\det(M)$ is shift-free with respect to $x$ and $\beta_1,\ldots, \beta_I$ are all possible roots of $\det(M)$ in $\bar K$;
		\item\label{it:suitable4} $\gcd(\det(M), \sigma_x^i(e))=1$ for all $i\in \set Z\setminus\{0\}$.% So $\det(M)/e$ is shift-reduced w.r.t. $x$.
	\end{enumerate}
\end{theorem}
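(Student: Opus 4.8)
The plan is to reduce all four assertions to two statements about the matrix $\frac{1}{e}M$ of $S_x$ in the basis $W$, and then to establish those by a case analysis organised around the set $Z$.

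\emph{Reductions.} Since $\det(M)=e^{r}\det(\frac{1}{e}M)$, it will be enough to prove: (A) $e$ has no root outside $\{\beta_1,\dots,\beta_I\}$, and (B) $\det(\frac{1}{e}M)$ has neither a zero nor a pole outside $\{\beta_1,\dots,\beta_I\}$. Indeed, (A) and (B) give at once that $\det(M)$ has no root outside $\{\beta_1,\dots,\beta_I\}$, which is (iii), and together with (A) is (ii); and shift-freeness of $e$ and of $\det(M)$, as well as (iv), then follow from the single observation that $\beta_1,\dots,\beta_I$ lie in pairwise distinct $\set Z$-orbits: a common root of $\det(M)$ or $e$ with a nonzero shift $\sigma_x^{i}$ of $e$ or of $\det(M)$ would be of the form $\beta_j=\beta_k-i$ with $i\neq 0$, putting two of the $\beta$'s in one orbit, hence $j=k$ and $i=0$, a contradiction.

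\emph{The companion basis; the generic case.} I would put $U=(1,S_x,\dots,S_x^{r-1})$, so that $S_xU=CU$ with $C$ the companion matrix of $L$; thus $C$ is regular wherever $\ell_r$ is, and $\det(C)=\pm\ell_0/\ell_r$. Writing $W=PU$ with $P\in\mathrm{GL}_r(K(x))$ one has $\frac{1}{e}M=\sigma_x(P)\,C\,P^{-1}$ and $\det(\frac{1}{e}M)=\pm\frac{\ell_0}{\ell_r}\cdot\frac{\sigma_x(\det P)}{\det P}$. Outside $Z$ the basis $W$ generates the same $K(x)_\alpha$-module as $U$, so $P$ and $P^{-1}$ are regular there and $\det P$ is a local unit; hence the zeros and poles of $\det P$ all lie in the finite set $Z$, say $\det P=c\prod_{\gamma\in Z}(x-\gamma)^{n_\gamma}$ with $n_\gamma\in\set Z$. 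If $\beta$ is an ordinary point of $L$ (a root of neither $\ell_0$ nor $\ell_r$) with $\beta,\beta+1\notin Z$, the factorisation above shows immediately that $\frac{1}{e}M$ is regular and $\det(\frac{1}{e}M)$ a unit at $\beta$. By the construction of $Z$ and the choice $\alpha_{i,J_i}\le\beta_i$, every root of $\ell_0\ell_r$ lying outside $\{\beta_1,\dots,\beta_I\}$ either lies in $Z$ or equals the least root $\alpha_{i,1}$ of its orbit, so this generic argument already disposes of every $\beta$ that lies outside $Z$ and is not adjacent to it.

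\emph{The delicate case and the main obstacle.} What remains are $\beta\in Z\setminus\{\beta_1,\dots,\beta_I\}$ and $\beta=\alpha_{i,1}$; in both of these $\beta+1\in Z$. Here I would use the way the value functions $\val_\alpha$ of~\cite{chen23b} transform under $S_x$ to track how the genuine local integral modules $\{f\in A:\val_\alpha(f)\ge 0\}$ propagate along an orbit, and to compare them, at the two ends of $Z$, with the companion module spanned by $U$. This should come down to a local index identity of the shape
\[ n_\gamma=\nu_\gamma(\det P)=\sum_{\alpha_{i,1}\le\gamma'<\gamma}\bigl(\nu_{\gamma'}(\ell_r)-\nu_{\gamma'}(\ell_0)\bigr),\qquad \gamma\in Z\cap(\alpha_i+\set Z), \]
recording the accumulated discrepancy between the integral and companion lattices as $\gamma$ runs through the orbit from below (with $\Lambda_\gamma$ agreeing with the companion module at and below the bottom of $Z$). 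Granted this, (A) follows because $\sigma_x(P)\,C$ turns out to be regular at every $\beta\notin\{\beta_i\}$ — the vanishing of $\det\sigma_x(P)$ there, forced by the identity, compensating the pole of $C$ produced by a root of $\ell_r$ — and (B) follows from an orbit-by-orbit telescoping of the divisor of $\det(\frac{1}{e}M)=\pm(\ell_0/\ell_r)\cdot q/\sigma_x(q)$ with $q=\prod_{\gamma\in Z}(x-\gamma)^{n_\gamma}$, the internal cancellations inside each orbit leaving a contribution only at $\beta_i$. The genuinely hard point, I expect, is precisely this local index identity — equivalently, an explicit description at $\alpha_{i,1}$ and its forward shifts of how $\Lambda_\gamma$ sits relative to both $\Lambda_{\gamma+1}$ and the companion lattice; it is the only step that uses the $S_x$-behaviour of the value functions of~\cite{chen23b} in an essential way, everything else being bookkeeping that reduces to it.
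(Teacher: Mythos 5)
Your skeleton up to the ``delicate case'' is the paper's own: reduce the shift-freeness statements and \eqref{it:suitable4} to the root-location claims plus the observation that $\beta_1,\dots,\beta_I$ lie in distinct $\set Z$-orbits, compare $W$ with the companion basis $U$ through $W=TU$, obtain $\det(\tfrac1eM)=\pm\tfrac{\ell_0}{\ell_r}\cdot\sigma_x(\det T)/\det T$, and use the definition of a suitable basis to confine all zeros and poles of $\det T$ to $Z$, hence all zeros and poles of $\det(\tfrac1eM)$ to $Z'\cup\{\text{roots of }\ell_0\ell_r\}\cup Z$ with $Z'=Z-1$. The genuine gap is precisely at $Z'$ (your ``$\beta\in Z\setminus\{\beta_i\}$ or $\beta=\alpha_{i,1}$''), which is the only place where something remains to be shown. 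The paper settles it qualitatively: by the argument of \cite[Theorem 12]{chen23b}, \emph{both} $W$ and $S_xW$ are local integral bases at $Z'$, so $\tfrac1eM$ is an invertible matrix over $K(x)_\alpha$ for every $\alpha\in Z'$ and its determinant is a local unit there; no index formula is needed. You instead posit the identity $\nu_\gamma(\det T)=\sum_{\alpha_{i,1}\le\gamma'<\gamma}(\nu_{\gamma'}(\ell_r)-\nu_{\gamma'}(\ell_0))$ and explicitly leave it unproven. Note that, via your own telescoping, this identity is \emph{equivalent} to the assertion that $\det(\tfrac1eM)$ has neither zero nor pole on $Z'$ --- it is a restatement of the conclusion at the delicate points, not a tool for reaching it. Moreover it is doubtful as stated: $\nu_\gamma(\det T)$ measures the discrepancy between the local integral lattice (defined by the value functions of \cite{chen23b}, i.e.\ by actual solution behaviour) and the companion lattice, and this discrepancy can be strictly smaller than the accumulated $\sum(\nu(\ell_r)-\nu(\ell_0))$ (e.g.\ at apparent singularities of $L$), so pure valuation bookkeeping on $\ell_0,\ell_r$ cannot yield it.

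A secondary defect concerns \eqref{it:suitable1} (your statement (A)), which the paper simply imports from \cite[Theorem 12]{chen23b} but which you propose to prove: you argue that the vanishing of $\det\sigma_x(T)$ ``compensates'' the pole of the companion matrix $C$ coming from a root of $\ell_r$. Determinant-level cancellation cannot establish entrywise regularity of $\sigma_x(T)\,C\,T^{-1}$, which is what bounding the roots of the denominator $e$ requires; the correct mechanism is again the lattice one (if $W$ and $S_xW$ are both local integral bases at a point, the transition matrix and its inverse have entries in the local ring). So the proposal reproduces the paper's framing and correctly isolates where the difficulty sits, but the one step that carries the proof is left as an acknowledged conjecture, and the route suggested for it would not succeed without the integral-basis argument the paper actually uses.
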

\begin{proof}
	\eqref{it:suitable1} Done in~\cite[Theorem 12.(\romannumeral 1)]{chen23b}.
        
   \eqref{it:suitable3} Since $\det(M) = e^r\det(\frac{1}{e}M)$, it follows from 	\eqref{it:suitable1} and \eqref{it:suitable2} that $\beta_1,\ldots, \beta_I$ are all possible roots of $\det(M)$. Since $\beta_1,\ldots,\beta_I$ belong to distinct cosets of $\set Z$ in $\bar K$, the difference of any two roots of $\det(M)$ in $\bar K$ can not be an integer. So $\det(M)$ is shift-free with respect to $x$.
   
   \eqref{it:suitable4} follows from 	\eqref{it:suitable1} and \eqref{it:suitable3} because $\beta_1,\ldots,\beta_I$ belong to distinct $\set Z$-orbits. 

     %$\beta_i\in\a_i + \set Z$ and $\a_i+\set Z\neq \a_j + \set Z$ for all $1\leq i\neq j\leq I$. 
	
	%Now we prove~\eqref{it:suitable2} using the same technique as in~\eqref{it:suitable1}. 
	\eqref{it:suitable2} Let $U=\{1,S_x,\ldots, S_x^{r-1}\}$. A direct calculation yields that
	\[S_xU=\begin{pmatrix}
		0      & 1 &    &  \\
		\vdots &   & \ddots &  \\
		0      &   &        &1\\
		-\frac{\ell_0}{\ell_r} & -\frac{\ell_1}{\ell_r}&\cdots & -\frac{\ell_{r-1}}{\ell_r}
	\end{pmatrix} U =: \frac{1}{e_u}M_uU,
	\]
	where $e_u \in K[x]$ and $M_u\in K[x]^{r\times r}$. So $\det(\frac{1}{e_u}M_u) = (-1)^{r}(\ell_0/\ell_r)$. Let $T\in K(x)^{r\times r}$ be such that $W=TU$. Taking the shift operation of both sides, we get
	\begin{equation*}%\label{EQ:SW-W}
		S_xW= \sigma_x(T)S_xU=\sigma_x(T)\frac{1}{e_u}M_uU=\sigma_x(T) \frac{1}{e_u}M_uT^{-1}W=\frac{1}{e}MW.
	\end{equation*}
	Since $W$ is a basis, we have $\frac{1}{e}M=\sigma_x(T)\frac{1}{e_u}M_uT^{-1}$. Note that $\sigma_x$ is a ring homomorphism. It follows that
	\begin{equation}\label{EQ: suitable det}
		\det\left(\frac{1}{e}M\right) = \sigma_x(\det(T)) \det\left(\frac{1}{e_u}M_u\right) \det(T)^{-1}
	\end{equation}
	Let $Z=\bigcup_{i=1}^I\{\gamma\in\alpha_i+\set Z\mid \alpha_{i,1}<\gamma\leq \beta_i\}$. Since $W$ and $U$ generate the same $K(x)_\alpha$-module for all $\a \in \bar K\setminus Z$, both $\det(T)$ and $\det(T)^{-1}$ are invertible elements in $K(x)_\a^{r\times r}$. %So $T$ and $T^{-1}$ have no poles at $\bar C\setminus Z$, i.e., all entries of $T$ and $T^{-1}$ have no poles at $\alpha\in \bar C\setminus \set Z$. 
	So $\sigma_x(\det(T))$ is invertible in $K(x)_\alpha$ for all $\alpha \in  \bar K\setminus Z'$, where $Z'=\{\gamma -1 \mid \gamma \in Z\}=\bigcup_{i=1}^I\{\gamma\in \a_i+\set Z \mid \a_{i,1}\leq \gamma \leq \beta_i-1\}$. By~\eqref{EQ: suitable det} we have
	\begin{align}\label{EQ:roots of e_w}
		&\{\text{roots and poles of $\det(\tfrac{1}{e}M)$} \}\nonumber\\
		\subseteq &\{\text{roots and poles of $\sigma_x(\det(T)),\det(\tfrac{1}{e_u}M_u), \det(T)^{-1}$}\}\nonumber\\
		\subseteq& Z' \,\cup \,\{\text{roots of $\ell_0\ell_r$}\}\, \cup\, Z\nonumber\\
		\subseteq &\bigcup_{i=1}^I\left( \left\{\a_{i,1},\ldots, \beta_i-1\right\}\,\cup\,\{\a_{i,1},\a_{i,2},\ldots, \a_{i,J_i}\}\,\cup \,\{\a_{i,1}+1,\ldots, \beta_i\}\right)\nonumber\\
		\subseteq&\bigcup_{i=1}^I\,\{\a_{i,1},\a_{i,1}+1,\ldots, \beta_i\};
	\end{align}
	here we use the assumption that $\a_{i,J_i}\leq\beta_i$.
	
	By the same argument as in~\cite[Theorem 12.(\romannumeral 1)]{chen23b}, both $S_xW$ and $W$ are local integral bases at $Z'$. So~$\frac{1}{e}M$ is an invertible matrix over $K(x)_\alpha$ for all $\alpha\in Z'$ and hence $\det(\frac{1}{e}M)$ has neither a root nor a pole in $Z'$. Combining this fact and the relation in~\eqref{EQ:roots of e_w}, we get
	\[\{\text{roots and poles of $\det(\tfrac{1}{e}M)$} \}\subseteq\{\beta_i \mid 1\leq i\leq I\}.\] 
\end{proof}
The following corollary gives the same result as in~\cite[Theorem 12.(\romannumeral 2)]{chen23b} with a different proof.
\begin{corollary}\label{Cor: tilde e integer-linear}
	Let $W$ be a suitable basis of $K(x)[S_x]/\<L>$ at $\{\beta_1,\ldots,\beta_I\}$. Let $\tilde e\in K[x]$ and $\tilde M\in K[x]^{r\times r}$ be such that $W = \frac{1}{\tilde e}\tilde MS_xW$. Then $\tilde e$ divides $\det(M)$ and $\tilde e$ is shift-free with respect to $x$. Moreover $\beta_1, \ldots, \beta_I$ are all possible roots of $\tilde e$ in $\bar K$.
\end{corollary}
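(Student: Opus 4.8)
The plan is to identify $\tilde e$ with the reduced common denominator of the transition matrix from $W$ to $S_xW$, to show this denominator divides $\det(M)$, and then to let Theorem~\ref{Thm: suitable}\eqref{it:suitable3} do the rest.

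First I would set up the transition matrix. Since $\ell_0\neq0$, the shift $S_x$ is a bijection of $A$, so $S_xW$ is again a $K(x)$-basis of $A$, and the matrix $\tfrac1eM$ relating them is invertible over $K(x)$; in particular $\det(M)\neq0$. From $S_xW=\tfrac1eMW$ we get $W=eM^{-1}\,S_xW$, and since $S_xW$ is a basis, the matrix $P:=eM^{-1}\in K(x)^{r\times r}$ is the \emph{unique} matrix with $W=P\,S_xW$; its entries lie in $K(x)$ because $L\in K[x][S_x]$ and $W$ is a $K(x)$-basis. Writing $P$ in lowest terms, $P=\tfrac1{\tilde e}\tilde M$ with $\tilde e\in K[x]$ primitive, $\tilde M\in K[x]^{r\times r}$, and $\tilde e$ coprime to the entries of $\tilde M$, recovers precisely the representation $W=\tfrac1{\tilde e}\tilde M\,S_xW$ of the statement.

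Next I would prove divisibility. Using the adjugate identity $M^{-1}=\tfrac1{\det(M)}\operatorname{adj}(M)$ with $\operatorname{adj}(M)\in K[x]^{r\times r}$, one has $\det(M)\,P=e\,\operatorname{adj}(M)\in K[x]^{r\times r}$, so $\det(M)$ is a common denominator of the entries of $P$. A reduced common denominator divides every common denominator, hence $\tilde e\mid\det(M)$; this is the first assertion. For the remaining two, recall from Theorem~\ref{Thm: suitable}\eqref{it:suitable3} that $\det(M)$ is shift-free with respect to $x$ and that all its roots in $\bar K$ lie in $\{\beta_1,\dots,\beta_I\}$. Since $\tilde e\mid\det(M)$, every root of $\tilde e$ in $\bar K$ is a root of $\det(M)$ and therefore lies in $\{\beta_1,\dots,\beta_I\}$; and any polynomial divisor of a polynomial that is shift-free with respect to $x$ is itself shift-free with respect to $x$, since two roots of $\tilde e$ differing by a nonzero integer would be two such roots of $\det(M)$. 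This completes the argument.

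I do not expect a real obstacle: once the transition matrix $eM^{-1}$ is correctly recognized as the object whose reduced denominator is $\tilde e$, everything is a one-line divisibility consequence of Theorem~\ref{Thm: suitable}\eqref{it:suitable3}. The only points needing a little care are the normalization convention — that $(\tilde e,\tilde M)$ in the statement is the reduced pair ($\gcd$ of $\tilde e$ with the entries of $\tilde M$ equal to $1$), without which ``$\tilde e\mid\det(M)$'' would fail for trivial scaling reasons — and the observation that $P$ has coefficients in $K(x)$ rather than merely $\bar K(x)$. An alternative, more hands-on route would mirror the proof of Theorem~\ref{Thm: suitable}\eqref{it:suitable2}, using that $W$ and $S_xW$ are simultaneously local integral bases at a shifted index set, but the divisibility argument above is shorter.
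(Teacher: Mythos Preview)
Your proposal is correct and follows essentially the same route as the paper: the paper also computes $\tfrac{1}{\tilde e}\tilde M=(\tfrac{1}{e}M)^{-1}=eM^{-1}=\tfrac{e}{\det(M)}M^*$ via the adjugate, deduces $\tilde e\mid\det(M)$, and then invokes Theorem~\ref{Thm: suitable}\eqref{it:suitable3}. Your write-up is simply more explicit about the normalization convention on $(\tilde e,\tilde M)$ and about why the transition matrix is unique and defined over $K(x)$.
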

\begin{proof}
	Let $e\in K[x]$ and $M\in K[x]^{r\times r}$ be such that $S_xW= \frac{1}{e}M W$. Then $\frac{1}{\tilde e}\tilde M  = (\frac{1}{e}M)^{-1} = eM^{-1} = \frac{e}{\det(M)}M^*$, where $M^*$ is the adjoint matrix of $M$. So $\tilde e\mid \det(M)$ and the conclusion follows from Theorem~\ref{Thm: suitable}.\eqref{it:suitable3}.
\end{proof}
\section{Additive decomposition for univariate P-recursive sequences}\label{sec: add}
The Abramov-Petkov\v sek reduction~\cite{AbramovPetkovsek01,AbramovePetkovsek02} for hypergeometric terms was extended to the case of P-recursive sequences in two ways: one version relies on Lagrange's identity~\cite{vanderHoeven18b,BrochetBruno23}, another on integral bases~\cite{chen23b}. In this section, we recall the additive decomposition~\cite{chen23b} for univariate P-recursive sequences via integral bases. It decomposes a P-recursive sequence $f$ in $A= K(x)[S_x]/\<L>$ into the form $f = \Delta_x(g) + h$ such that 
$f$ is summable in $A$ if and only if $h = 0$.

The generalized Abramov-Petkov\v sek reduction formula~\cite[Lemma 14]{chen23b} is as follows.
\begin{lemma}\label{Lem: AP step}
	Let $W$ be a suitable basis of $A$ at $\{\beta_1,\ldots,\beta_I\}$. Let $e,\tilde e\in K[x]$ and $M,\tilde M\in K[x]^{r\times r}$ be such that $S_xW=\frac{1}{e}MW$ and $\frac{1}{\tilde e}\tilde M=(\frac{1}{e}M)^{-1}$. Let $p_i\in K[x]$ be the minimal polynomial of $\beta_i$ over $K$. Let $q\in K[x]$, $a\in K[x]^r$ and $\ell\in\set Z$.
	\begin{enumerate}[(i)]
		\item\label{it:AP-red1} If $\gcd(q, \sigma_x^j(p_i))=1$ for all $i\in\{1,\ldots, I\}$ and $j\in \set Z$, there exist $g\in A$ and $c\in K[x]^r$ such that 
		\begin{equation}\label{EQ: AP-step}
			\frac{aW}{\sigma_x^\ell(q)}=\Delta_x(g) +  \frac{cW}{qe\tilde e}.
		\end{equation}
		\item\label{it:AP-red2} If $q=p_i^m$ with $i\in\{1,\ldots, I\}$ and $m>0$, there exist $g\in A$ and $c\in K[x]^r$ such that~\eqref{EQ: AP-step} also holds.
	\end{enumerate}
\end{lemma}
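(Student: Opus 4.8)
The plan is to induct on $|\ell|$, peeling off one unit of the shift at each step. The base case $\ell=0$ is immediate for both (i) and (ii): take $g=0$ and $c=e\tilde e\,a$, since $\frac{aW}{q}=\frac{e\tilde e\,a\,W}{q e\tilde e}$. So all the content lies in removing a \emph{nonzero} shift $\sigma_x^{\ell}$ from the denominator at the price of the fixed extra factor $e\tilde e$, and the whole argument is an Abramov--Petkov\v sek shift reduction transcribed to the module $A$ with the suitable basis $W$.

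For the inductive step with $\ell\ge 1$, I would set $g=\frac{bW}{\sigma_x^{\ell-1}(q)}$ with $b\in K[x]^{r}$ still to be chosen. Using $S_xW=\frac1e MW$ one gets
\[
f-\Delta_x(g)=\frac{\bigl(ea-\sigma_x(b)M\bigr)W}{e\,\sigma_x^{\ell}(q)}+\frac{bW}{\sigma_x^{\ell-1}(q)},
\]
so the idea is to choose $b$ with $\sigma_x(b)M\equiv ea\pmod{\sigma_x^{\ell}(q)}$; then the first summand becomes $\frac{tW}{e}$ for some $t\in K[x]^{r}$ and
\[
f=\Delta_x(g)+\frac{tW}{e}+\frac{bW}{\sigma_x^{\ell-1}(q)}.
\]
Now $\frac{bW}{\sigma_x^{\ell-1}(q)}$ carries the same $q$ but has strictly smaller $|\ell-1|$, so the induction hypothesis rewrites it as $\Delta_x(g_1)+\frac{c_1W}{q e\tilde e}$; the term $\frac{tW}{e}$ already has denominator dividing $q e\tilde e$; collecting everything over $q e\tilde e$ yields $g$ and $c$. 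The congruence is solved explicitly from the identity $M\tilde M=e\tilde e\,I$ (which is just $\frac1{\tilde e}\tilde M=(\frac1e M)^{-1}$ cleared of denominators): one may take $\sigma_x(b)\equiv \tilde e^{-1}\,a\tilde M\pmod{\sigma_x^{\ell}(q)}$, which is legitimate as soon as $\det(M)$ and $\tilde e$ are units modulo $\sigma_x^{\ell}(q)$. The case $\ell\le -1$ is dual: from $f=\Delta_x(-f)+S_xf$ with $S_xf=\frac{\sigma_x(a)MW}{e\,\sigma_x^{\ell+1}(q)}$ the shift index improves to $\ell+1$, and a partial-fraction split (legal because $e$ is then coprime to $\sigma_x^{\ell+1}(q)$) separates off a term over $e$ and a term to which the induction hypothesis applies.

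The only structural input is the solvability of that congruence, and it comes straight from Theorem~\ref{Thm: suitable}\eqref{it:suitable1},\eqref{it:suitable3} and Corollary~\ref{Cor: tilde e integer-linear}: the roots of $e$, $\tilde e$ and $\det(M)$ all lie in $\{\beta_1,\dots,\beta_I\}$, and the $\beta_j$ lie in distinct $\set Z$-cosets. In case (i) the hypothesis $\gcd(q,\sigma_x^{j}(p_i))=1$ for all $i,j$ says that no root of $q$, hence no root of $\sigma_x^{\ell}(q)$, is $\set Z$-congruent to any $\beta_j$, so $\det(M)$ and $\tilde e$ are units modulo $\sigma_x^{\ell}(q)$. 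In case (ii), where $q=p_i^{m}$, the roots of $\sigma_x^{\ell}(q)$ are $\beta_i-\ell$ and its $K$-conjugates; for $\ell\ne 0$ such a value lies in $\beta_i$'s own $\set Z$-coset, which contains no other $\beta_j$, so again none of these roots is a $\beta_j$ and the congruence is solvable.

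The step I expect to be the real work, as opposed to the structural idea, is the bookkeeping of denominators: one must verify that \emph{every} term generated along the induction has denominator dividing $q e\tilde e$. In case (i) this is clean, because $q$ is coprime to $e\tilde e$ and the auxiliary terms genuinely live over $e$ (or $e\tilde e$) and simply accumulate there. In case (ii) the subtlety is that $p_i$ may itself be a factor of $e$ or $\tilde e$, so $p_i^{m}$ cannot be split off from $e\tilde e$; the remedy is to carry the (possibly non-coprime) product $p_i^{m}e\tilde e$ as the common denominator throughout, and in the final transition to $\ell=0$ not to attempt a partial-fraction split but simply to clear to this common denominator. With that caveat, the two cases are handled by the same computation.
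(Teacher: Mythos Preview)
The paper does not actually prove this lemma; it is quoted verbatim as \cite[Lemma~14]{chen23b} and used as a black box. So there is no in-paper proof to compare against.

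That said, your argument is correct and is precisely the standard Abramov--Petkov\v sek shift-reduction that one would expect to find in the cited reference. The induction on $|\ell|$, the choice $g=\frac{bW}{\sigma_x^{\ell-1}(q)}$ for $\ell\ge1$, and the use of $f=\Delta_x(-f)+S_xf$ for $\ell\le-1$ are the right moves. The solvability of the congruence $\sigma_x(b)M\equiv ea\pmod{\sigma_x^{\ell}(q)}$ reduces, via $M\tilde M=e\tilde e\,I$, to the invertibility of $\det(M)$ (equivalently of $e\tilde e$) modulo $\sigma_x^{\ell}(q)$, and your appeal to Theorem~\ref{Thm: suitable} and Corollary~\ref{Cor: tilde e integer-linear} for the location of the roots of $e$, $\tilde e$, $\det(M)$ is exactly what is needed: in case~(i) no root of $q$ lies in any $\beta_j+\set Z$, and in case~(ii) the roots of $\sigma_x^{\ell}(p_i)$ for $\ell\neq0$ lie in $\beta_i+\set Z$ but are distinct from every $\beta_j$. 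Your remark about the bookkeeping in case~(ii), where $p_i$ may divide $e$ or $\tilde e$ so that one should carry the product $p_i^{m}e\tilde e$ as a common denominator rather than attempt a partial-fraction split at the last step, is the correct way to handle that wrinkle.

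One minor clarification: for the congruence you really only need $\gcd(\det(M),\sigma_x^{\ell}(q))=1$ (equivalently $\gcd(e\tilde e,\sigma_x^{\ell}(q))=1$); listing both $\det(M)$ and $\tilde e$ is redundant since $\tilde e\mid\det(M)$, but it does no harm.
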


Let $W$ be a suitable basis of $A$ at $\{\beta_1,\ldots, \beta_I\}$. Using the above lemma, we can decompose any element $f\in A$ into the form
\begin{equation}\label{EQ: AP-reduction}
	f = \Delta_x(g)+ h\quad\text{and}\quad h=\frac{cW}{\tilde u},
\end{equation}
where $g\in A$, $c\in K[x]^r$, $\tilde u\in K[x]$ and the product $\tilde u\prod_{i=1}^I(x-\beta_i)$ is shift-free with respect to $x$.

\begin{theorem}\label{Thm: AP-remainder}
	Let $h\in A$ be in~\eqref{EQ: AP-reduction} with $c=(c_1,\ldots,c_r)$ and $\gcd(\tilde u, c_1,\ldots,c_r)=1$. Let $e\in K[x]$ and $M\in K[x]^{r\times r}$ be such that $S_xW = \frac{1}{e}MW$. If $h$ is summable in $A$, then  $\tilde u$ divides $e$ and $h=\Delta_x(bW)$ with $b\in K[x]^r$.
\end{theorem}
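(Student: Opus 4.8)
The plan is to exploit the fact that after the generalized Abramov--Petkov\v{s}ek reduction the denominator $\tilde u$ of $h$ is in ``reduced'' form, so that any further summability forces $\tilde u$ to collapse onto the ``forbidden'' denominator coming from the shift structure of $W$ --- namely $e$. Suppose $h = \frac{cW}{\tilde u}$ is summable, say $h = \Delta_x(g)$ with $g \in A$. Write $g = \frac{bW}{v}$ with $b = (b_1,\dots,b_r) \in K[x]^r$, $v \in K[x]$ and $\gcd(v,b_1,\dots,b_r) = 1$. Using $S_xW = \frac1e MW$ we compute
\[
\Delta_x(g) = S_x(g) - g = \frac{\sigma_x(b)\,\sigma_x(W)}{\sigma_x(v)} - \frac{bW}{v}
= \frac{1}{\sigma_x(v)\,e}\,\sigma_x(b)M W - \frac{bW}{v},
\]
so that, equating with $\frac{cW}{\tilde u}$ and clearing denominators, we obtain a polynomial-vector identity
\[
v\,c\,\bigl(\sigma_x(v)\,e\bigr) \;=\; \tilde u\,\sigma_x(v)\,\sigma_x(b)\,M \;-\; \tilde u\,v\,e\,b
\]
(up to reorganizing which factor multiplies which), which is the relation I will analyze locally, one irreducible factor of $v$ at a time.

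First I would argue that $v$ can be taken shift-free with respect to $x$: if not, a non-trivial $\gcd(v,\sigma_x^j(v))$ for some $j\ne 0$ would, by the standard ``minimality of the denominator'' argument in Abramov-style reductions, allow one to replace $g$ by another certificate with strictly smaller denominator --- exactly the mechanism of Lemma \ref{Lem: AP step}\eqref{it:AP-red1} applied in reverse. This is the step I expect to require the most care, since one must check that the reduction step does not re-inflate $\tilde u$; here the hypothesis that $\tilde u\prod_{i=1}^I(x-\beta_i)$ is shift-free (built into \eqref{EQ: AP-reduction}) is what keeps things under control, together with the fact that $\beta_1,\dots,\beta_I$ are the only roots of $e$ (Theorem \ref{Thm: suitable}\eqref{it:suitable1}) and of $\det M$ (Theorem \ref{Thm: suitable}\eqref{it:suitable3}). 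Once $v$ is shift-free, the term $\sigma_x(v)$ and the term $v$ in the identity above share no common factor, and neither does $v$ with $\sigma_x(v)$-shifted data.

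Next I localize. Let $p$ be an irreducible factor of $v$ with multiplicity $k\ge 1$. Looking at the identity modulo powers of $p$: on the right-hand side $\tilde u\,\sigma_x(v)\,\sigma_x(b)M$ is not divisible by $p$ (since $p\nmid \sigma_x(v)$ by shift-freeness, $p\nmid \tilde u$ would follow from shift-freeness of $\tilde u$ combined with the reduced form unless $p$ divides $\tilde u$ --- which I rule out separately, because a common factor of $\tilde u$ and $v$ would again be removable), whereas the left side carries $v$ and hence $p^k$. Pushing this through forces $p \mid e$ in the relevant $\sigma_x$-shifted copy, and then using that $e$ is shift-free (Theorem \ref{Thm: suitable}\eqref{it:suitable1}) pins the shift down to $0$, giving $p \mid e$ outright. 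Running this over all irreducible factors of $\tilde u$ in the symmetric fashion --- i.e. viewing the same identity from the ``$\tilde u$ side'' --- yields $\tilde u \mid e$. Finally, once $\tilde u \mid e$, I substitute back: the denominator $v$ of the certificate can be cleared entirely because $\tilde u\mid e$ makes $\frac{cW}{\tilde u}$ of the shape $\Delta_x$ of something with polynomial numerator; concretely, solving the now-denominator-free linear system for $b$ over $K[x]$ (the coefficient matrix being invertible over $K(x)$ by Assumption \ref{Assump}, with controlled denominators by Theorem \ref{Thm: suitable}\eqref{it:suitable2}) shows $g$ may be chosen as $bW$ with $b \in K[x]^r$, which is the second assertion.

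The main obstacle, as indicated, is the reverse-reduction argument establishing shift-freeness of the certificate denominator $v$ and the absence of a common factor between $\tilde u$ and $v$: this is where one genuinely needs the output guarantees of the Abramov--Petkov\v{s}ek reduction of Section \ref{sec: add} rather than merely the matrix relations, and it mirrors the classical hypergeometric proof in \cite{AbramovPetkovsek01b} and \cite{Abramov03}, now transported to the matrix/integral-basis setting via Theorem \ref{Thm: suitable}.
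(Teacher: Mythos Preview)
The paper does not actually prove Theorem~\ref{Thm: AP-remainder}: Section~\ref{sec: add} recalls the additive decomposition from~\cite{chen23b}, and this theorem is stated there without proof, so there is no in-paper argument to compare against. Your proposal must therefore be judged on its own.

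The skeleton is the right one --- write $h=\Delta_x(g)$, expand using $S_xW=\tfrac1eMW$, and analyze the resulting polynomial identity --- but two of the steps you flag as delicate are in fact genuine gaps. First, the ``reverse-reduction'' argument that the certificate denominator $v$ may be taken shift-free is not carried out: invoking Lemma~\ref{Lem: AP step}\eqref{it:AP-red1} ``in reverse'' is not a proof, since that lemma outputs a remainder with denominator $qe\tilde e$, and you have not shown that iterating it strictly reduces $v$ without disturbing $\tilde u$. Second, your localization does not close. From the identity
\[
c\,v\,\sigma_x(v)\,e \;=\; \tilde u\,v\,\sigma_x(b)\,M \;-\; \tilde u\,\sigma_x(v)\,e\,b,
\]
fixing an irreducible $p\mid \tilde u$ kills the right-hand side modulo $p$, and since $\gcd(c_1,\dots,c_r,\tilde u)=1$ you get $p\mid v\,\sigma_x(v)\,e$. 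You want $p\mid e$, but to exclude $p\mid v$ or $p\mid\sigma_x(v)$ you appeal to ``a common factor of $\tilde u$ and $v$ would again be removable'' --- this is exactly the statement you are trying to prove, dressed differently, and you never establish it. The analysis at factors of $v$ that precedes this is aimed at the wrong target (it constrains $v$, not $\tilde u$) and does not feed into the conclusion $\tilde u\mid e$.

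What is missing is a tool that bounds the denominator of a certificate $g$ \emph{a priori} in terms of the local structure at the $\beta_i$. In the framework of~\cite{chen23b} this is supplied by the value functions $\val_\alpha$ and the local integrality of the suitable basis $W$: integrality at the relevant points controls the pole orders of $g$ directly and forces any non-integer-linear factor of the certificate denominator to sit in a single $\sigma_x$-orbit already seen in $e$. Your purely polynomial-identity approach tries to bypass this local machinery, and I do not see how to break the circularity above without reintroducing it (or an equivalent). For the actual argument you should consult~\cite{chen23b}.
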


\begin{corollary}\label{Cor: AP-remainder}
	 The decomposition~\eqref{EQ: AP-reduction} can be rewritten as
	 \begin{equation}\label{EQ: AP-remainder}
	 	f = \Delta_x(g)+h\quad\text{and} \quad h= \frac{1}{d}PW+\frac{1}{e}RW,
	 \end{equation}
	 where $d\in K[x]$, $P,R\in K[x]^r$ with $\deg_x(P)<\deg_x(d)$ and $d\prod_{i=1}^I(x-\beta_i)$ being shift-free with respect to~$x$. Moreover, if $f$ is summable in $A$, then $d\in K$ and $P=0$.
\end{corollary}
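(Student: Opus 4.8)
The plan is to show that the decomposition~\eqref{EQ: AP-reduction} can be massaged into the claimed form by a partial-fraction-type splitting of $h$ with respect to $x$, and then to invoke Theorem~\ref{Thm: AP-remainder} for the summability statement. First, starting from $h = \frac{cW}{\tilde u}$ with $\gcd(\tilde u, c_1, \ldots, c_r) = 1$ and $\tilde u \prod_{i=1}^I (x - \beta_i)$ shift-free with respect to $x$, I would write $\tilde u = d \cdot e_1$, where $e_1 = \gcd(\tilde u, e)$ (the part of $\tilde u$ that divides $e$) and $d = \tilde u / e_1$. Since $\gcd(d, e_1) \mid \tilde u$ and $\tilde u \prod_i (x-\beta_i)$ is shift-free, in particular $\tilde u$ itself is shift-free, so $d$ and $e_1$ are coprime. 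By the Bézout relation there are polynomials $A, B \in K[x]$ with $A d + B e_1 = 1$, hence
\[
h = \frac{cW}{d\, e_1} = \frac{(Ad + Be_1)\,cW}{d\,e_1} = \frac{A\,cW}{e_1} + \frac{B\,cW}{d}.
\]
The first summand has denominator $e_1 \mid e$, so after clearing we may absorb it into a term of the form $\frac{1}{e} R' W$ with $R' \in K[x]^r$; the second summand has denominator $d$. A further Euclidean division of the vector of numerators of the $d$-part by $d$ splits it as $\frac{1}{d} P W + (\text{polynomial vector})\,W$; the polynomial part $bW$ with $b \in K[x]^r$ satisfies $bW = \Delta_x(g')$ for a suitable $g' \in A$ only after... — more precisely, any element $bW$ with $b \in K[x]^r$ can itself be further reduced, but since $\deg_x$ considerations do not obstruct, I would instead simply merge that polynomial part with $\frac{1}{e}R'W$ by writing it over the common denominator $e$ and reading off a new $R$. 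This gives the form $h = \frac{1}{d}PW + \frac{1}{e}RW$ with $\deg_x(P) < \deg_x(d)$, and $d \mid \tilde u$ forces $d \prod_{i=1}^I(x-\beta_i)$ to remain shift-free with respect to $x$.

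For the moreover clause, suppose $f$ is summable in $A$. Then $h$ is summable (as $h = f - \Delta_x(g)$). I would like to apply Theorem~\ref{Thm: AP-remainder} directly, but that theorem is stated for the original remainder $h = \frac{cW}{\tilde u}$ with $\gcd(\tilde u, c_1, \ldots, c_r) = 1$, giving $\tilde u \mid e$ and $h = \Delta_x(bW)$ with $b \in K[x]^r$. So I first apply Theorem~\ref{Thm: AP-remainder} to the original $h$: summability of $h$ yields $\tilde u \mid e$. But then in the splitting above, $e_1 = \gcd(\tilde u, e) = \tilde u$ and $d = \tilde u / e_1 = 1 \in K$, and consequently $\deg_x(P) < \deg_x(d) = 0$ forces $P = 0$. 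This is exactly the asserted conclusion.

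The step I expect to require the most care is the bookkeeping in the splitting — specifically, making sure that the "polynomial vector" leftover from the Euclidean division in the $d$-part is genuinely harmless and can be reabsorbed into the $\frac{1}{e}RW$ term without violating the degree constraint $\deg_x(P) < \deg_x(d)$ or the shift-freeness of $d\prod_i(x-\beta_i)$. One must check that writing a polynomial-vector term $bW$ over the denominator $e$ (i.e., as $\frac{1}{e}(eb)W$) and adding it to $\frac{1}{e}R'W$ yields a legitimate $R = R' + eb \in K[x]^r$, which is immediate; the only subtlety is that we are free to choose which terms land in $P$ versus $R$, and the canonical choice (reduce numerators of the $d$-part modulo $d$) is what makes $\deg_x(P) < \deg_x(d)$ hold. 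All the coprimality facts needed ($\gcd(d, e_1) = 1$) follow from shift-freeness of $\tilde u$, which in turn follows from shift-freeness of $\tilde u \prod_i(x - \beta_i)$. No deeper input beyond Theorem~\ref{Thm: AP-remainder} is needed.
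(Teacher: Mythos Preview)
Your overall strategy---split the denominator $\tilde u$ into a part that can be absorbed into $\frac{1}{e}RW$ and a leftover $d$, then invoke Theorem~\ref{Thm: AP-remainder} for the summability clause---is the right one, and the ``moreover'' argument is fine. But the splitting step contains a genuine error. You claim that with $e_1=\gcd(\tilde u,e)$ and $d=\tilde u/e_1$ one has $\gcd(d,e_1)=1$ ``because $\tilde u$ is shift-free''. Shift-freeness says nothing about multiplicities: it only asserts $\gcd(\tilde u,\sigma_x^i(\tilde u))=1$ for $i\neq 0$, which is a condition on roots lying in distinct $\set Z$-orbits, not on $\tilde u$ being squarefree. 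Concretely, if some minimal polynomial $p_i$ of a $\beta_i$ satisfies $p_i^2\mid\tilde u$ while $p_i\parallel e$ (which the reduction in Lemma~\ref{Lem: AP step}\eqref{it:AP-red2} can certainly produce), then $e_1=p_i$, $d=\tilde u/p_i$ still has $p_i\mid d$, and your B\'ezout split collapses. Worse, with that $d$ the sum $\frac{P}{d}+\frac{R}{e}$ can only carry a pole of order $\max(v_{p_i}(d),v_{p_i}(e))<v_{p_i}(\tilde u)$ at $p_i$, so it cannot represent $\frac{c}{\tilde u}$ at all.

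The fix is to choose $d$ differently: for each irreducible $p$, put $p^{v_p(\tilde u)}$ into $d$ precisely when $v_p(\tilde u)>v_p(e)$, and leave it out otherwise. Then $\tilde u\mid\operatorname{lcm}(d,e)$, and a short valuation check shows $\gcd(d,e)\mid de/\tilde u$, so the B\'ezout relation $Pe+Rd=c\cdot(de/\tilde u)$ is solvable; adjusting $P$ by a multiple of $d$ gives $\deg_x(P)<\deg_x(d)$. Since $d\mid\tilde u$, the shift-freeness of $d\prod_i(x-\beta_i)$ is inherited. With this $d$, your ``moreover'' argument goes through unchanged: $\tilde u\mid e$ forces $v_p(\tilde u)\le v_p(e)$ for every $p$, hence $d=1$ and $P=0$. (The paper itself does not give a proof here---the corollary is imported from~\cite{chen23b}---so there is no alternative argument to compare against.)
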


Let $W$ be a suitable basis of~$A$ at $\{\beta_1,\ldots, \beta_I\}$ that is normal at infinity. Let $T = \diag\bigl(x^{\tau_1}, \ldots, x^{\tau_r}\bigr) \in K(x)^{r\times r}$ with $\tau_i\in \set Z$ be such that $V := TW$ is a local integral basis at infinity. Let $e, a\in K[x]$ and $M, B\in K[x]^{r \times r} $ be such that $S_xW = \frac{1}{e}MW$ and $\Delta_x V=\frac{1}{a}BV$. Multiplying both $a$ and $B$ by some polynomials, we may assume that $e$ divides $a$. For $\mu, \delta\in \set Z$ with $\mu \leq \delta$, the subspace $\{\sum_{i=\mu}^{\delta} a_i x^i\,|\,a_i\in K\}$ of Laurent polynomials in $K[x,x^{-1}]$ is denoted by $K[x]_{\mu,\delta}$.

\begin{theorem}\label{Thm:add decomp}
	Let $W, V\in A^r$ be as described above. Then $\frac{1}{e}RW$ in~\eqref{EQ: AP-remainder} can be further reduced and any element $f\in A$ can be decomposed into
	\begin{equation}\label{EQ:add}
		f = \Delta_x(\tilde g) + \frac{1}{d} PW + \frac{1}{a} QV,
	\end{equation}
	where $\tilde g\in A$, $d\in K[x]$, $P\in K[x]^r$ with $\deg_x(P) < \deg_x(d)$ and $Q\in N_V$ such that 
	\begin{enumerate}[(i)]
		\item the product $d\prod_{i=1}^I(x-\beta_i)$ is shift-free with respect to $x$;
		\item $N_V$ is a subspace of $K[x]_{\mu,\delta}^r$ with  $\mu = \min\{-\tau_1, \ldots, -\tau_r, 0\}$ and $\delta = \max\{\deg_x(a),\deg_x(B)\}-1$;
		\item $f$ is summable in $A$ if and only if $P=Q=0$.
	\end{enumerate}
\end{theorem}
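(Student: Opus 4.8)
The plan is to run the Abramov--Petkov\v sek reduction of Corollary~\ref{Cor: AP-remainder} and then append a polynomial reduction ``at infinity'' that compresses the second summand $\frac1eRW$ into the box $K[x]_{\mu,\delta}^r$. By Corollary~\ref{Cor: AP-remainder} we may assume $f=\Delta_x(g)+\frac1dPW+\frac1eRW$ with $\deg_x(P)<\deg_x(d)$ and $d\prod_{i=1}^I(x-\beta_i)$ shift-free, the summand $\frac1dPW$ already having the form demanded in~\eqref{EQ:add}; moreover, $f$ summable forces $d\in K$ and $P=0$. The first move is to rewrite $\frac1eRW$ in the infinity-normal basis: since $e\mid a$ and $V=TW$ with $T=\diag(x^{\tau_1},\dots,x^{\tau_r})$, one gets $\frac1eRW=\frac1a R'V$ with $R'=\frac ae R\,T^{-1}$, i.e.\ $R'_i=\frac ae R_i\,x^{-\tau_i}\in K[x,x^{-1}]$; since $\frac ae R_i\in K[x]$ the $x$-valuation of each $R'_i$ is at least $-\tau_i\ge\mu$, so $R'\in K[x]_{\mu,N}^r$ for some $N$.

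The engine of the reduction is the identity, valid for every $z\in K[x]^r$,
\[
\Delta_x(zV)=\sigma_x(z)\,S_xV-zV=\tfrac1a\bigl(a\,\Delta_x(z)+\sigma_x(z)B\bigr)V=\tfrac1a\bigl(\sigma_x(z)(aI+B)-az\bigr)V,
\]
which follows from $S_xV=(I+\tfrac1aB)V$. For $z\in K[x]^r$ the numerator $a\,\Delta_x(z)+\sigma_x(z)B$ again lies in $K[x]^r$, so subtracting such terms from $\frac1aR'V$ creates no new poles at $0$ and keeps the $x$-valuation at least $\mu$, while it can be used to lower the top $x$-degree. I would therefore eliminate, one degree at a time from the top, every power of $x$ above $\delta$, arriving at $f=\Delta_x(\tilde g)+\frac1dPW+\frac1aQV$ with $Q\in K[x]_{\mu,\delta}^r$, and then define $N_V$ to be a fixed $K$-linear complement inside $K[x]_{\mu,\delta}^r$ of the subspace of numerators $a\,\Delta_x(z)+\sigma_x(z)B$ that happen to land in $K[x]_{\mu,\delta}^r$; a last bounded elimination inside the box then puts $Q$ in $N_V$. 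With this choice, item~(i) is inherited verbatim from Corollary~\ref{Cor: AP-remainder} and item~(ii) is the definition of $\mu,\delta,N_V$.

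The step I expect to be the real obstacle is the degree bookkeeping that makes the top-down elimination stop exactly at $\delta$. Taking $z=cx^k$, the numerator $a\,\Delta_x(z)+\sigma_x(z)B$ has $x$-degree $k+\max\{\deg_x(a)-1,\deg_x(B)\}$ with a leading coefficient equal to $c$ times a matrix assembled from $\operatorname{lc}(a)k\,I$ and the top coefficient matrices of $B$; solving for $c$ at each step needs that matrix invertible for the $k$ in question. Normality of $V$ at infinity is what keeps this under control: it forces $\deg_x(B)\le\deg_x(a)$ (so $\delta=\deg_x(a)-1$) and, because $S_x$ and $S_x^{-1}$ preserve integrality at infinity, makes $I+\tfrac1aB$ invertible over the valuation ring at infinity, which pins down the relevant leading coefficient matrices; the delicate part is to order the elimination, to handle the finitely many ``resonant'' values of $k$ at which the pertinent matrix degenerates, and to bound how far down the elimination has to go --- in effect an analysis of the indicial equation of $L$ at infinity. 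Finally, item~(iii): the ``if'' direction is immediate since then $f=\Delta_x(\tilde g)$; for ``only if'', if $f$ is summable then Corollary~\ref{Cor: AP-remainder} gives $P=0$ and $d\in K$, hence $\frac1eRW$ is summable, Theorem~\ref{Thm: AP-remainder} then gives $\frac1eRW=\Delta_x(bW)$ with $b\in K[x]^r$, and running the infinity reduction on an element already lying in $\Delta_x(A)$ must terminate with $Q=0$ because $N_V$ was chosen transversal to the image of the reduction, whence $P=Q=0$.
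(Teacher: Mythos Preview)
The paper does not actually prove this theorem: Section~\ref{sec: add} is a summary of results imported from~\cite{chen23b}, and Theorem~\ref{Thm:add decomp} is stated without proof as the main output of the additive decomposition developed there. So there is no ``paper's own proof'' to compare against; what one can say is that your plan---run the generalized Abramov--Petkov\v sek step of Corollary~\ref{Cor: AP-remainder}, then perform a polynomial reduction at infinity in the basis~$V$, and define $N_V$ as a fixed $K$-linear complement of the image of $z\mapsto a\,\Delta_x(z)+\sigma_x(z)B$ inside the box $K[x]_{\mu,\delta}^r$---is exactly the architecture of the argument in~\cite{chen23b}, and your derivation of~(iii) from Theorem~\ref{Thm: AP-remainder} together with the transversality of $N_V$ is the right mechanism.

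One concrete point in your sketch is off and is worth flagging. You assert that normality of $V$ at infinity ``forces $\deg_x(B)\le\deg_x(a)$'' because ``$S_x$ and $S_x^{-1}$ preserve integrality at infinity''. In the shift setting this is not true in general: the value function at infinity used in~\cite{chen20,chen23b} is built from generalized series solutions that carry $\Gamma$-factors, and applying $S_x$ can shift the value at infinity by a nonzero amount, so $S_xV=(I+\tfrac1aB)V$ need not have entries integral at infinity. This is precisely why the statement of the theorem writes $\delta=\max\{\deg_x(a),\deg_x(B)\}-1$ rather than $\deg_x(a)-1$. The top-down elimination you outline still works, but the leading-coefficient matrix you have to invert at degree $k$ is read off from whichever of $a$ or $B$ dominates, and the finitely many degenerate $k$ (your ``resonant'' values) are governed by the indicial polynomial of $L$ at infinity; the full bookkeeping, including the lower bound $\mu$ coming from the Laurent shift $T^{-1}$, is carried out in~\cite{chen23b}. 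Apart from this overreach about $\deg_x(B)$, your outline matches the intended proof.
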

\section{Integer-linear polynomials}\label{sec:integer-linear}
Let $G=\<\sigma_t,\sigma_x>$ be the free abelian group generated by $\sigma_t$ and $\sigma_x$ with the multiplication being the composition of shift operators.
Let $p\in C[t,x]$ and $H$ be a subgroup of $G$. The set
\[[p]_H:=\{\sigma(p)\mid \sigma\in H\}\]
is called the {\em$H$-orbit} at $p$. Two irreducible polynomials $p,q\in C[t,x]$ are said to be {\em $H$-shift equivalent}, denoted by $p\sim_H q$ if $[p]_H=[q]_H$, i.e., there exists $\sigma\in H$ such that $q=\sigma(p)$. The relation $\sim_H$ is an equivalence relation. A polynomial $p\in C[t,x]$ is said to be {\em monic} if the leading coefficient of $p$ is $1$ with respect to the pure lexicographic order $t\prec x$. 
\begin{lemma}\label{Lem: integer-linear lem3}
	Let $p\in C[t,x]$ and $m,n\in \set Z$ with $n\neq0$. Then  $\sigma_t^n\sigma_x^m(p) = p$ if and only if $p=h(mt-nx)$ for some univariate polynomial $h\in C[z]$.
\end{lemma}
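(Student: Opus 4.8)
The plan is to handle the ``if'' direction by direct substitution and the ``only if'' direction by a linear change of coordinates that turns the composite shift $\sigma_t^n\sigma_x^m$ into a pure translation in a single variable. Recall first that $\sigma_t^n\sigma_x^m(p)(t,x)=p(t+n,x+m)$. If $p=h(mt-nx)$ for some $h\in C[z]$, then $p(t+n,x+m)=h\bigl(m(t+n)-n(x+m)\bigr)=h(mt-nx)=p(t,x)$, which settles this implication.

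For the converse, suppose $p(t+n,x+m)=p(t,x)$. I would introduce the coordinates $u=mt-nx$ and $v=t$. Since $C$ has characteristic zero and $n\neq 0$, we have $1/n\in C$, so $(t,x)\mapsto(u,v)$ is a $C$-linear automorphism of $C[t,x]$ with inverse $t=v$, $x=(mv-u)/n$; hence $p(t,x)=P(mt-nx,\,t)$ for a unique $P\in C[u,v]$. Under $(t,x)\mapsto(t+n,x+m)$ the form $u=mt-nx$ is invariant whereas $v=t\mapsto v+n$, so the hypothesis becomes the polynomial identity $P(u,v+n)=P(u,v)$ in $C[u,v]$.

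The crux is then the elementary fact that a polynomial in $v$ with coefficients in $C[u]$ that is invariant under $v\mapsto v+n$ must be independent of $v$: writing $P=\sum_{j=0}^{d}c_j(u)v^j$ with $c_d\neq 0$ and assuming $d\geq 1$, the difference $P(u,v+n)-P(u,v)$ has $v$-degree exactly $d-1$ with leading coefficient $d\,n\,c_d(u)\neq 0$ (this is precisely where $\operatorname{char}C=0$ and $n\neq 0$ are needed), contradicting $P(u,v+n)=P(u,v)$. Hence $d=0$, so $P=h(u)$ for some $h\in C[z]$ and therefore $p=h(mt-nx)$. I do not expect a genuine obstacle here; the only points requiring care are verifying that $(t,x)\mapsto(mt-nx,t)$ is an honest polynomial coordinate change — which is exactly where the hypotheses $n\neq 0$ and $\operatorname{char}C=0$ enter, via $1/n\in C$ — and the periodicity argument, which again uses characteristic zero. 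No case distinction on $m$ is required, since the substitution works uniformly for all $m$.
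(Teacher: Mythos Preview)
Your proof is correct. The ``if'' direction is identical to the paper's own argument. For the ``only if'' direction the paper does not give a proof at all but simply refers the reader to \cite[Lemma~3]{AbramovPetkovsek01b}, so there is nothing in the paper to compare your argument against; your self-contained treatment via the linear coordinate change $(t,x)\mapsto(mt-nx,\,t)$ and the periodicity-implies-constancy argument for polynomials is perfectly valid and makes explicit exactly where the hypotheses $n\neq 0$ and $\operatorname{char} C=0$ are used.
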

\begin{proof}
	If $p=h(mt-nx)$ with $h\in C[z]$, then $\sigma_t^n\sigma_x^m(p) = h(mt+mn-nx-nm)=h(mt-nx)=p$. For the proof of the other direction, see~\cite[Lemma 3]{AbramovPetkovsek01b}.
\end{proof}
\begin{lemma}(See~\cite[Theorem 6]{AbramovPetkovsek01b})\label{Lem: integer-linear thm6}
	Let $p\in C[t,x]$. If for every irreducible factor $q$ of $p$, there are $m, n\in \set Z$, $n>0$ such that $\sigma_t^n\sigma_x^m(q)$ divides $p$, then $p$ is integer-linear.
\end{lemma}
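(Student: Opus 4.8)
By definition, $p$ is integer-linear precisely when each of its irreducible factors is; so the plan is to show that an arbitrary irreducible factor $q_0$ of $p$ in $C[t,x]$ has the form $h(mt+nx)$ for suitable $h\in C[z]$ and $m,n\in\set Z$. (We may assume $p\ne 0$, the claim being vacuous when $p\in C$.) Applying the hypothesis repeatedly, I would build an infinite sequence $q_0,q_1,q_2,\ldots$ of irreducible factors of $p$: given the irreducible factor $q_k$, the hypothesis furnishes $m_{k+1},n_{k+1}\in\set Z$ with $n_{k+1}>0$ such that $q_{k+1}:=\sigma_t^{n_{k+1}}\sigma_x^{m_{k+1}}(q_k)$ divides $p$, and $q_{k+1}$ is again irreducible since $\sigma_t$ and $\sigma_x$ are ring automorphisms of $C[t,x]$. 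Because $C[t,x]$ is a UFD, $p$ has only finitely many irreducible factors up to associates, so the pigeonhole principle yields $i<j$ with $q_i$ and $q_j$ associate. As $G=\<\sigma_t,\sigma_x>$ is abelian, $q_j=\sigma_t^{N}\sigma_x^{M}(q_i)$ with $M=m_{i+1}+\cdots+m_j$ and $N=n_{i+1}+\cdots+n_j\ge j-i\ge 1$.

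Next I would remove the unit. Write $q_j=c\,q_i$ with $c\in C^\times$. A shift $\sigma_t^a\sigma_x^b$ maps a monomial $t^\alpha x^\beta$ to $(t+a)^\alpha(x+b)^\beta=t^\alpha x^\beta+(\text{lower total degree})$, so it fixes the top-degree homogeneous part of every polynomial. Comparing leading forms in $\sigma_t^{N}\sigma_x^{M}(q_i)=c\,q_i$ and using $q_i\ne0$ forces $c=1$; hence $\sigma_t^{N}\sigma_x^{M}(q_i)=q_i$ with $N>0$, and Lemma~\ref{Lem: integer-linear lem3} gives $q_i=h(Mt-Nx)$ for some $h\in C[z]$. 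In particular $q_i$ is integer-linear.

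Finally I would transport this back to $q_0$. Abelianness of $G$ gives $q_i=\sigma(q_0)$ for the single shift $\sigma:=\sigma_t^{n_1+\cdots+n_i}\sigma_x^{m_1+\cdots+m_i}$, so $q_0=\sigma^{-1}(q_i)$; and for any $a,b\in\set Z$, $\sigma_t^a\sigma_x^b\big(h(Mt-Nx)\big)=h\big((Mt-Nx)+(Ma-Nb)\big)=\tilde h(Mt-Nx)$ with $\tilde h(z):=h(z+Ma-Nb)\in C[z]$, again of the required form. Thus $q_0$ is integer-linear, and since $q_0$ was arbitrary, so is $p$. I expect the only delicate points to be the elimination of the scalar $c$ via leading forms and the need to invoke the hypothesis afresh at every step so that the cumulative shift $\sigma_t^N$ has $N>0$, which is exactly what makes Lemma~\ref{Lem: integer-linear lem3} applicable; the rest is bookkeeping with the finitely many irreducible factors of $p$.
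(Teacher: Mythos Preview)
Your proof is correct and follows essentially the same route as the paper: iterate the hypothesis to produce an infinite sequence of shifted irreducible factors of $p$, use finiteness of the factor set to force a collision, and invoke Lemma~\ref{Lem: integer-linear lem3}. Two minor remarks: your leading-form argument to kill the scalar $c$ is a nice touch that the paper glosses over, and your transport step from $q_i$ back to $q_0$ can be shortened---since shifts commute and $\sigma$ is an automorphism, $\sigma_t^{N}\sigma_x^{M}(q_i)=q_i$ with $q_i=\sigma(q_0)$ immediately gives $\sigma_t^{N}\sigma_x^{M}(q_0)=q_0$, so Lemma~\ref{Lem: integer-linear lem3} applies directly to $q_0$.
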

\begin{proof}
	By a repeated use of the assumption, for every irreducible factor $q$ of~$p$, there exist $m_i, n_i\in \set Z$, $0<n_1<n_2<\cdots$ such that $q$, $\sigma_t^{n_1}\sigma_x^{m_1}(q)$, $\sigma_t^{n_2}\sigma_x^{m_2}(q),\cdots$ are irreducible factors of $p$. Since $p$ has only finitely many irreducible factors, it follows that $\sigma_t^n\sigma_x^m(q) = q$ for some $m,n\in \set Z$, $n>0$. By Lemma~\ref{Lem: integer-linear lem3}, $q$ is integer-linear and so is $p$.
\end{proof}
 
Let $p\in C[t,x]$ be an irreducible integer-linear polynomials. Then $p=h(mt + nx)$ for some integers $m,n\in \set Z$ and $h\in C[z]$. Without loss of generality, we may assume that $\gcd(m,n)=1$ and $n\geq 0$. By B{\' e}zout's relation, there exist unique integers $a,b\in \set Z$ with $0\leq b < n$ such that $a m + bn = 1$. The shift operator $\sigma_t^a \sigma_x^b$ is denoted by $\tau^{(m, n)}$, or just $\tau$ for brevity if $(m,n)$ is clear. Note that $\tau(h(mt+nx))=h(mt+nx+1)$. So for any two integers $i,j\in \set Z$, we have $\sigma_t^i\sigma_x^j(p) = \tau^{mi+nj}(p)$. %In particular, $\sigma_x(p)=\tau^n(p)$ and $\sigma_t(p) = \tau^m(p)$. %For any two integers $i,j\in \set Z$, $\sigma_t^i\sigma_x^j(p)= \tau^{\ell}(p)$ with $\ell =mi+nj$. 
For any irreducible polynomial $q\in C[t,x]$, if $p$ and $q$ are $\<\sigma_t,\sigma_x>$-shift equivalent, then $q= \tau^{\ell}(p)$ for some $\ell \in \set Z$. The set $\{\tau^\ell(p) \mid \ell \in \set Z\}$ consists of all irreducible polynomials that are $\<\sigma_t,\sigma_x>$-shift equivalent to $p$. For any integer $\ell\in\set Z$, there exist unique integers $i,j\in\set Z$ such that 
$a\ell = i + nj$ and $0\leq i \leq n-1$. Note that $\sigma_t^{n}(p) = \sigma_x^{-m}(p)$. Then 
\begin{equation}\label{EQ:tau p}
	\tau^\ell(p) = \sigma_t^{a\ell}\sigma_x^{b\ell}(p) = \sigma_t^{i+nj}\sigma_x^{b\ell}(p)=\sigma_t^i\sigma_x^{b\ell-mj}(p).
\end{equation}
So for any $\ell \in \set Z$, $\tau^\ell(p)$ is $\<\sigma_x>$-shift equivalent to $\sigma_t^{i}(p)$ for some $i$ with $0\leq i\leq n-1$.
For a Laurent polynomial $\xi=\sum_{i=\ell}^\rho k_i\tau^i\in \set N [\tau,\tau^{-1}]$ with $\ell,\rho \in \set Z$, $k_i\in \set N$ and $\ell \leq \rho$, define
\[p^{\xi} := \tau^\ell(p^{k_\ell})\tau^{\ell+1}(p^{k_{\ell+1}})\cdots \tau^\rho(p^{k_\rho}).\]

For any integer-linear polynomial $q\in C[t,x]$, by grouping together $\<\sigma_t,\sigma_x>$-shift-equivalent factors, it has a factorization of the form
\begin{equation}\label{EQ: shiftless decomp}
	q = c\ q_1^{\xi_1}\cdots q_s^{\xi_s},
\end{equation}
where $c\in C$, $s\in \set N$, $q_i\in C[t,x]$ is a monic irreducible integer-linear polynomial of the form $q_i=h_i(m_it+n_i x)$ for $h_i\in C[z]$, $m_i,n_i\in \set Z$ with $n_i\geq 0$, $\gcd(m_i,n_i)=1$, and $\xi_i\in\set N[\tau_i,\tau_i^{-1}]$ with $\tau_i = \tau^{(m_i,n_i)}$. Moreover,  $q_i$ and $q_{j}$ are not $\<\sigma_t,\sigma_x>$-shift equivalent whenever $i\neq j$. More details about such a factorization can be found in~\cite[Section 5.1]{Huang16}.

\begin{defi}\label{Defi: norm}
	Let $\tau\in \<\sigma_t,\sigma_x>$ and $\xi = \sum_{i=\ell}^{\rho}k_i\tau^i\in\set N[\tau,\tau^{-1}]$ with $\ell,\rho\in \set Z$, $k_i\in\set N$ and $\ell\leq \rho$. The maximal coefficient of $\xi$ is denoted by
	\[||\xi||_{\tau} := \max\{k_i \mid i = \ell, \ldots, \rho\}.\]
	The sum of all coefficients of $\xi$ is denoted by
	\[||\xi||_{\tau}^* := k_\ell + k_{\ell + 1} + \cdots + k_{\rho}.\]
\end{defi}
\begin{lemma}\label{Lem: norm}
	Let $\tau \in \<\sigma_t,\sigma_x>$ and $\xi, \xi_1,\xi_2\in\set N[\tau,\tau^{-1}]$. Let $\sigma = \tau^n$ with $n>0$. Then
	\begin{enumerate}[(i)]
		\item\label{it:norm ineq}
		$||\xi_1+\xi_2||_\tau \leq ||\xi_1||_\tau + ||\xi_2||_\tau$;
		\item\label{it:norm} $||\sigma^j \xi ||_\tau = || \xi ||_\tau$ for all $j\in \set Z$;
		\item\label{it:norm*} $||\sum_{j=0}^{s}\sigma^j\xi ||_\tau \leq || \xi ||_\tau^*$ for all $s\in \set Z$, $s\geq 0$.
	\end{enumerate}
\end{lemma}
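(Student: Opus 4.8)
The plan is to verify each of the three inequalities directly from the definitions in Definition~\ref{Defi: norm}, treating $\xi$, $\xi_1$, $\xi_2$ as finitely supported functions $\set Z \to \set N$ (the coefficient sequences) and noting that all three statements are about how these coefficient sequences behave under addition and under the index-shift induced by multiplication by a power of $\tau$. The only mild subtlety is bookkeeping with the supports, since the exponent ranges $\ell, \rho$ differ between $\xi_1$ and $\xi_2$; I would handle this uniformly by writing every element of $\set N[\tau,\tau^{-1}]$ as $\sum_{i\in\set Z} k_i \tau^i$ with $k_i\in\set N$ and only finitely many $k_i$ nonzero, so that $\|\cdot\|_\tau$ is the sup-norm and $\|\cdot\|_\tau^*$ is the $\ell^1$-norm of the coefficient sequence.

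For~\eqref{it:norm ineq}: write $\xi_1 = \sum_i k_i\tau^i$ and $\xi_2 = \sum_i k_i'\tau^i$, so $\xi_1+\xi_2 = \sum_i (k_i+k_i')\tau^i$. For each index $i$ we have $k_i+k_i' \le \|\xi_1\|_\tau + \|\xi_2\|_\tau$, and taking the maximum over $i$ gives the claim. For~\eqref{it:norm}: multiplication by $\sigma^j = \tau^{nj}$ sends $\sum_i k_i\tau^i$ to $\sum_i k_i\tau^{i+nj}$, i.e.\ it merely relabels the index set by the bijection $i\mapsto i+nj$; since both norms depend only on the multiset of coefficient values, $\|\sigma^j\xi\|_\tau = \|\xi\|_\tau$ (and likewise $\|\sigma^j\xi\|_\tau^* = \|\xi\|_\tau^*$, which I may note in passing but do not need). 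For~\eqref{it:norm*}: write $\xi = \sum_i k_i\tau^i$, so $\sum_{j=0}^s \sigma^j\xi = \sum_{j=0}^s \sum_i k_i\tau^{i+nj} = \sum_m \bigl(\sum_{j=0}^s k_{m-nj}\bigr)\tau^m$. Fix an index $m$; because $n>0$, the indices $m, m-n, m-2n, \ldots, m-sn$ are pairwise distinct, so the coefficients $k_{m}, k_{m-n}, \ldots, k_{m-sn}$ are distinct entries of the coefficient sequence of $\xi$, and hence $\sum_{j=0}^s k_{m-nj} \le \sum_{i} k_i = \|\xi\|_\tau^*$. Taking the maximum over $m$ yields $\bigl\|\sum_{j=0}^s \sigma^j\xi\bigr\|_\tau \le \|\xi\|_\tau^*$.

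There is no real obstacle here; the statement is elementary once the objects are viewed as coefficient sequences. The only point that genuinely uses a hypothesis is in~\eqref{it:norm*}, where $n>0$ (equivalently $n\neq 0$, together with the convention that $\sigma=\tau^n$ for a fixed sign) guarantees that the $s+1$ shifted copies of $\xi$ occupy disjoint "residue-progressions" of indices, so that no coefficient of $\xi$ is counted twice in any single coefficient of the sum; if $n$ were allowed to be $0$ the bound would instead be $(s+1)\|\xi\|_\tau$. I would present parts~\eqref{it:norm ineq} and~\eqref{it:norm} in one or two lines each and give part~\eqref{it:norm*} the short disjointness argument above.
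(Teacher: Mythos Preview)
Your proposal is correct and follows essentially the same approach as the paper's proof: both treat $\xi$ as a finitely supported coefficient sequence and verify each item by direct computation, with the key point in part~\eqref{it:norm*} being that $n>0$ makes the indices $m-nj$ for $0\le j\le s$ pairwise distinct so that each coefficient of the sum is bounded by $\|\xi\|_\tau^*$. Your disjointness phrasing for~\eqref{it:norm*} is slightly cleaner than the paper's explicit index bookkeeping with $\min\{\lfloor i/n\rfloor,s\}$, but the argument is the same.
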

\begin{proof}
	\eqref{it:norm ineq} We write $\xi_1 = \sum_{i\in \set Z} k_i^{(1)}\tau^i$ and $\xi_2 = \sum_{i\in \set Z} k_i^{(2)}\tau^i$, where $k_i^{(1)}, k_i^{(2)}\in \set N$ and only finitely many $k_i^{(1)}$ and $k_i^{(2)}$ are nonzero. Then $\xi_1+\xi_2 = \sum_{i\in\set Z} (k_i^{(1)}+k_i^{2})\tau^i$ and hence \[||\xi_1+\xi_2||_\tau = \max_{i\in\set Z}\{k_i^{(1)}+k_i^{(2)}\} \leq \max_{i\in\set Z} \{k_i^{(1)}\} +\max_{i\in \set Z} \{k_i^{(2)}\} = ||\xi_1||_\tau + ||\xi_2||_\tau.\]
	
	\eqref{it:norm} We write $\xi = \sum_{i\in \set Z} k_i\tau^i$, where $k_i\in \set N$ and only finitely many $k_i$ are nonzero. Since $\sigma = \tau^n$, we have $\sigma^j \xi = \sum_{i\in\set Z} k_i \tau^{nj+i }$. So $||\sigma^j \xi||_\tau = \max_{i\in \set Z} \{k_i \}= ||\xi ||_\tau$.
	
	\eqref{it:norm*} We write $\xi = \tau^\ell \sum_{i= 0}^\infty k_i\tau^i$, where $\ell\in \set Z$, $k_i\in\set N$ and only finitely many $k_i$ are nonzero. Then
	\[\sum_{j=0}^s\sigma^j\xi = \sum_{j=0}^s \tau^{nj}\tau^\ell\sum_{i= 0}^\infty k_i\tau^i = \tau^{\ell} \sum_{j=0}^s\sum_{i= 0}^\infty k_i \tau^{nj+i}=\tau^\ell \sum_{i= 0}^\infty\left(\sum_{j=0}^{\min\{\lfloor i/n\rfloor,s \} } k_{i-nj} \right)\tau^i.\]
	For any $i\geq0$, 
	\[\sum_{j=0}^{\min\{\lfloor i/n\rfloor,s \} } k_{i-nj} \leq \sum_{j=0}^{\infty} k_j = ||\xi||_\tau^*.\]
	So $|| \sum_{j=0}^s\sigma^j\xi||_\tau \leq  ||\xi||_\tau^*$.
\end{proof}
\section{Compatible matrices}\label{sec: compatibility}
Let $A=C(t,x)[S_t,S_x]/J$ be as described in Section~\ref{sec:proper}. Let $W$ be a suitable basis of $A\cong K(x)[S_x]/\<L>$ with respect to $x$, where $K=C(t)$. We write
\begin{equation}\label{EQ:shift matrix rep}
	S_xW=\frac{1}{e}MW\quad\text{and}\quad S_tW=\frac{1}{e_t}M_tW,
\end{equation}
where $e,e_t\in C[t,x]$, $M, M_t\in C[t,x]^{r\times r}$ and the greatest common divisor of all entries of $M$ (resp. $M_t$) and $e$ (resp. $e_t$) is $1$. By Assumption~\ref{Assump}, $\det(M)$ and $\det(M_t)$ are nonzero.
\begin{defi}
	Two matrices $F,G\in C(t,x)^{r\times r}$ are {\em compatible} if they satisfy
	$\sigma_t(F)G=\sigma_x(G)F$.
\end{defi}

%The following theorem generalizes the structure theorem of compatible rational functions in bivariate case~\cite{AbramovPetkovsek01b,}.

\begin{theorem}\label{Thm: compatibility}
	Let $W$ be a suitable basis of $A$ with respect to $x$, which satisfies~\eqref{EQ:shift matrix rep}. Then
	\begin{enumerate}[(i)]
		\item\label{it:compat1} the matrices  $\frac{1}{e}M$ and $\frac{1}{e_t}M_t$ are compatible;
		\item\label{it:compat2} the polynomials $e,\det(M), e_t, \det(M_t)$ are integer-linear.
	\end{enumerate} 
\end{theorem}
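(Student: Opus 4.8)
The plan is to prove part~\eqref{it:compat1} by a direct computation using the commutation $S_tS_x=S_xS_t$ in the Ore algebra, and then derive part~\eqref{it:compat2} from~\eqref{it:compat1} together with the shift-freeness properties of a suitable basis established in Theorem~\ref{Thm: suitable} and the integer-linearity criterion of Lemma~\ref{Lem: integer-linear thm6}.

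For~\eqref{it:compat1}, I would start from $S_tS_xW = S_xS_tW$ and expand both sides using~\eqref{EQ:shift matrix rep}. Applying $S_t$ to $S_xW=\frac1e MW$ gives $S_tS_xW = \sigma_t\bigl(\frac1e M\bigr)S_tW = \sigma_t\bigl(\frac1e M\bigr)\frac1{e_t}M_tW$, where the shift passes through the scalar matrix entrywise since $S_t g = \sigma_t(g)S_t$ for $g\in K(x)$. Symmetrically, $S_xS_tW = \sigma_x\bigl(\frac1{e_t}M_t\bigr)\frac1e MW$. Since $W$ is a basis and the coefficient matrices of $W$ in any such expression are uniquely determined, equating the two yields $\sigma_t\bigl(\frac1e M\bigr)\cdot\frac1{e_t}M_t = \sigma_x\bigl(\frac1{e_t}M_t\bigr)\cdot\frac1e M$, which is exactly the compatibility relation $\sigma_t(F)G=\sigma_x(G)F$ with $F=\frac1e M$ and $G=\frac1{e_t}M_t$.

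For~\eqref{it:compat2}, the idea is that compatibility forces the denominators and the determinants to be shift-stable up to a bounded orbit, which by Lemma~\ref{Lem: integer-linear thm6} is the same as being integer-linear. Concretely, taking determinants of the compatibility relation from~\eqref{it:compat1} gives an identity relating $\sigma_t(\det(\frac1e M))$, $\det(\frac1{e_t}M_t)$, $\sigma_x(\det(\frac1{e_t}M_t))$ and $\det(\frac1e M)$; writing $\det(\frac1e M)=\det(M)/e^r$ and similarly for the $t$-side, one obtains a polynomial identity of the form $\sigma_t(\det M)\,e^r\,\sigma_x(e_t)^r\,\det(M_t) = \sigma_x(\det M_t)\,e_t^r\,\sigma_t(e)^r\,\det M$ (after clearing denominators). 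Now fix an irreducible factor $q$ of $e$. By Theorem~\ref{Thm: suitable}\eqref{it:suitable1}, $e$ is shift-free with respect to $x$, so $q$ divides neither $\sigma_x^i(e)$ for $i\neq0$ nor (by Theorem~\ref{Thm: suitable}\eqref{it:suitable4}) $\sigma_x^i$ applied to anything built only from $\det M$ for $i\ne 0$; tracing where $q$ can appear on the right-hand side of the cleared identity forces $\sigma_t^n\sigma_x^m(q)$ to divide $e$ (hence $p:=e$) for suitable $m,n\in\set Z$ with $n>0$. Lemma~\ref{Lem: integer-linear thm6} then gives that $e$ is integer-linear; the same argument applied to $\det(M)$ (using Theorem~\ref{Thm: suitable}\eqref{it:suitable3}) and, by the $t/x$-symmetry of the setup, to $e_t$ and $\det(M_t)$, completes the proof.

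The main obstacle is the bookkeeping in the last step: making rigorous the claim that a fixed irreducible factor $q\mid e$ must, after enough applications of the shift $\sigma_t$, reappear as a factor of $e$ (rather than escaping into $\det M$, $\det M_t$, or $e_t$). The shift-freeness statements of Theorem~\ref{Thm: suitable} control $\sigma_x$-translates but one also needs to control $\sigma_t$-translates; here one should use that $e,\det M$ can only acquire new irreducible factors under $\sigma_t$ in a controlled way, and invoke finiteness of the factorization together with Lemma~\ref{Lem: integer-linear thm6}'s pigeonhole argument. Care is also needed because conjugate factors over $K=C(t)$ must be handled together, but this is exactly the situation Lemma~\ref{Lem: integer-linear thm6} is designed for, so no extra argument beyond tracking multiplicities should be required.
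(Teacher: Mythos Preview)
Your proof of part~\eqref{it:compat1} is correct and matches the paper's argument exactly.

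For part~\eqref{it:compat2}, however, the determinant approach you outline has a genuine gap that the paper explicitly identifies and works around. After taking determinants, you obtain the compatibility of the rational functions $F=\det(\frac1e M)=\det(M)/e^r$ and $G=\det(\frac1{e_t}M_t)=\det(M_t)/e_t^r$, and from this one can indeed conclude (via the structure theorem for compatible rational functions, or the factor-tracing you sketch) that $F$ and $G$ are integer-linear. But this does \emph{not} imply that $e$, $\det M$, $e_t$, $\det M_t$ are individually integer-linear, because $e$ and $\det(M)$ may share common factors that cancel in the quotient $\det(M)/e^r$: Theorem~\ref{Thm: suitable}\eqref{it:suitable4} guarantees only $\gcd(\det M,\sigma_x^i(e))=1$ for $i\neq 0$, not for $i=0$. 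The paper's Example~\ref{Eg: compat}\eqref{it:compat_eg3} exhibits a case where $\gcd(\det M,e^2)\neq1$, and the Remark following it spells out that for $r>1$ one ``cannot use this short argument''. Your acknowledgment that $q$ might ``escape into $\det M$'' is exactly the obstruction, and no pigeonholing at the determinant level can recover a factor that has already cancelled.

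The paper's fix is to work at the matrix level: Proposition~\ref{Prop: division} shows that if $p\mid e$ then some $\sigma_t$-shift of $p$ must divide the determinant of the off-diagonal block in the compatibility relation, by exploiting that the gcd of the \emph{entries} of $M$ with $e$ is $1$ (a much weaker hypothesis than $\gcd(\det M,e)=1$, and one that holds by normalization). Lemmas~\ref{Lem: compatiblity1} and~\ref{Lem: compatiblity2} then chain these divisibilities to bounce factors between $e\det M$ and $e_t\det M_t$ until Lemma~\ref{Lem: integer-linear thm6} applies. Note also that your appeal to ``$t/x$-symmetry'' fails as stated: $W$ is suitable only with respect to $x$, so there is no analogue of Theorem~\ref{Thm: suitable} available for $e_t,M_t$; the paper handles the two directions with separate, asymmetric arguments (Lemma~\ref{Lem: compatiblity1} versus Lemma~\ref{Lem: compatiblity2}).
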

\begin{proof}
	\eqref{it:compat1} 	By a direct calculation, we have 
	\[S_tS_x W = S_t\frac{1}{e}MW = \sigma_t\left(\frac{1}{e}M\right)\frac{1}{e_t}M_tW\quad\text{and}\quad S_xS_tW = S_x\frac{1}{e_t}M_tW=\sigma_x\left(\frac{1}{e_t}\right)\frac{1}{e}MW.\]
	Since $S_tS_xW=S_xS_tW$ and $W$ is a basis of $A$, it follows that 
	\begin{equation}\label{EQ: compatible}
		\sigma_t\left(\frac{1}{e}M\right)\frac{1}{e_t} M_t= \sigma_x\left(\frac{1}{e_t}M_t\right)\frac{1}{e}M.
	\end{equation}
	
	\eqref{it:compat2} Since $W$ is a suitable basis, it follows from Theorem~\ref{Thm: suitable}.\eqref{it:suitable4} that $\gcd(\det(M), \sigma_x^i(e)) = 1$ (w.r.t.~$x$) for all $i\in \set Z\setminus \{0\}$. Using the compatibility condition~\eqref{EQ: compatible}, we shall prove that $e,\det(M),e_t,\det(M_t)$ are integer-linear. To do this, we need several lemmas.
\end{proof}

\begin{example}\label{Eg: compat}
	\begin{enumerate}[(i)]
		\item Let $W$ be the same as in Example~\ref{Eg: proper1}. Then $S_xW=W$ and $S_tW=W$. So $e=\det(M)=e_t=\det(M_t)=1$. They are integer-linear.
		\item Let $W$ be the same as in~Example~\ref{Eg: proper2}. Then $S_x W = \frac{(1+x)^3}{(t-x)^3}W$ and $S_tW= \frac{(1+t)^3}{(t-x-1)^3}W$. So $e=(t-x)^3$, $\det(M)=(1+x)^3$, $e_t=(1+t-x)^3$ and $\det(M_t)=(1+t)^3$ are integer-linear.
		\item\label{it:compat_eg3} Let $W=(\omega_1,\omega_2)$ be the same as in~Example~\ref{Eg: proper3}. Then
		\begin{equation*}
			\begin{pmatrix}
				S_x\omega_1\\
				S_x\omega_2
			\end{pmatrix} = \frac{1}{(x+2)(t^2-1)^2}\begin{pmatrix}
				(x+t^2)(x+2)(t^2-1)&(x^2+(t^2+1)x+1)(x+2)(t^2-1)^2\\
				-x-1&-(x^2+2x-t^2+2)(t^2-1)
			\end{pmatrix}\begin{pmatrix}
				\omega_1\\
				\omega_2
			\end{pmatrix}		
		\end{equation*}
		and
		\[\begin{pmatrix}
			S_t\omega_1\\
			S_t\omega_2
		\end{pmatrix} =\frac{1}{t(t+2)(t^2-1)} \begin{pmatrix}
			t^2(t+2)^2&t(x+1)(2t+1)(t+2)(t^2-1)\\
			0&(t^2-1)^2
		\end{pmatrix}\begin{pmatrix}
			\omega_1\\
			\omega_2
		\end{pmatrix}.\]
		So $e=(x+2)(t^2-1)^2$, $\det(M)= (t^2-1)^4 (x+2)$, $e_t=t(t+2)(t^2-1)$ and $\det(M_t) =  t^2 (t+2)^2(t^2-1)^2$ are integer-linear.
	\end{enumerate}
\end{example}
\begin{remark}
	Let $F=\det(\frac{1}{e}M)$ and $G=\det(\frac{1}{e_t}M_t)$. By~\eqref{EQ: compatible}, we have $\sigma_t(F)G = \sigma_x(G)F$. So $F$ and $G$ are compatible. By the structure theorem of compatible rational functions~\cite[Theorem 8]{AbramovPetkovsek01b}, $F$ and $G$ are integer-linear (because $F$ is shift-reduced with respect to $x$ by Theorem~\ref{Thm: suitable}.\eqref{it:suitable2}). So if we assume that $\gcd(\det(M),e^r)=\gcd(\det(M_t),e_t^r)=1$ (this is true for $r=1$), then $e,\det(M), e_t,\det(M_t)$ are also integer-linear. However, this assumption may fail when $r>1$ and we can not use this short argument to prove Theorem~\ref{Thm: compatibility}.\eqref{it:compat2} directly. For instance, in Example~\ref{Eg: compat}.\eqref{it:compat_eg3}, neither $\gcd(\det(M),e^2)=(x+2)(t^2-1)^2$ nor $\gcd(\det(M_t),e_t)=t^2(t+2)^2(t^2-1)^2$ is $1$. Since $F=
	\frac{1}{x+2}$ and $G=1$, we can not read the complete information about factors of $e,\det(M), e_t, \det(M_t)$ from $F,G$. In order to prove Theorem~\ref{Thm: compatibility}.\eqref{it:compat2}, we shall generalize the proof of~\cite[Theorem 8]{AbramovPetkovsek01b} to the case of higher dimension.
\end{remark}

%If $r=1$, then Theorem~\ref{Thm: compatibility} falls back to the structure theorem of compatible rational functions~\cite[Theorem 8]{AbramovPetkovsek01b}. 

For an irreducible polynomial $p\in C[x]$, the set $\{f/g \mid f, g\in C[x], \, p \nmid g\}$ is a subring of $C(x)$, denoted by $C[x]_{(p)}$.
\begin{lemma}\label{Lem: gcd}
	Let $p\in C[x]$ be an irreducible polynomial over $C$. Let $F\in C[x]^{r\times r}$ be an invertible matrix over the ring $C[x]_{(p)}$. Let $M=(m_{i,j})$ and $N=(n_{i,j})$ be two matrices in $C[x]^{r\times r}$ such that $N=MF$ or $N=FM$. If~$\gcd(m_{1,1},m_{1,2},\ldots, m_{r,r,},p)=1$, then $\gcd(n_{1,1},n_{1,2},\ldots,n_{r,r},p)=1$.
\end{lemma}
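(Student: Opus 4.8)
The plan is to reduce everything to a statement over the local ring $C[x]_{(p)}$, which is a principal ideal domain (a discrete valuation ring) with unique maximal ideal $pC[x]_{(p)}$. Write $R = C[x]_{(p)}$ and let $\mathfrak m = pR$ be its maximal ideal, with residue field $k = R/\mathfrak m$. The hypothesis $\gcd(m_{1,1},\dots,m_{r,r},p)=1$ says precisely that not all entries of $M$ lie in $\mathfrak m$, i.e.\ the reduction $\bar M \in k^{r\times r}$ of $M$ modulo $\mathfrak m$ is a nonzero matrix; the desired conclusion is that $\bar N \neq 0$ in $k^{r\times r}$. Since $F$ is invertible over $R$, its reduction $\bar F \in k^{r\times r}$ is invertible over the field $k$ (its determinant is a unit of $R$, hence reduces to a nonzero element of $k$).

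First I would pass to residues: reduction modulo $\mathfrak m$ is a ring homomorphism $R^{r\times r}\to k^{r\times r}$, so from $N = MF$ (resp.\ $N=FM$) we get $\bar N = \bar M\,\bar F$ (resp.\ $\bar N = \bar F\,\bar M$). Now the whole question lives over the field $k$: if $\bar M \neq 0$ and $\bar F$ is invertible, then $\bar M\bar F \neq 0$ and $\bar F\bar M \neq 0$, because multiplying a nonzero matrix by an invertible matrix on either side cannot produce the zero matrix (it would force $\bar M = \bar N \bar F^{-1} = 0$, contradiction, and symmetrically on the other side). Hence $\bar N \neq 0$, which is exactly $\gcd(n_{1,1},\dots,n_{r,r},p)=1$.

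The only point that needs a word of justification is that "$\gcd$ of the entries and $p$ equals $1$" is equivalent to "some entry is a unit in $R$", equivalently "the reduction mod $\mathfrak m$ is nonzero". Indeed $\gcd(m_{i,j},p)$ is either $1$ or $p$ since $p$ is irreducible, so $\gcd(\{m_{i,j}\},p)=1$ iff $p\nmid m_{i,j}$ for at least one pair $(i,j)$, iff that $m_{i,j}$ is a unit of $R$, iff $\bar M\neq 0$. The same description applies verbatim to $N$. I do not anticipate a real obstacle here; the lemma is essentially the observation that a valuation (or the residue map at $p$) is a ring homomorphism and invertible matrices over a field have trivial left and right annihilators. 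The slight subtlety is purely bookkeeping: the entries $m_{i,j}$ and $n_{i,j}$ are given as polynomials in $C[x]$, but invertibility of $F$ is over $C[x]_{(p)}$, so one must work in the localization rather than in $C[x]$ itself — the gcd condition involving $p$ is exactly what lets one do this transparently.
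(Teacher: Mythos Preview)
Your proof is correct and follows essentially the same strategy as the paper: both arguments localize at $p$ and exploit that $F$ is invertible over $C[x]_{(p)}$. The paper phrases the argument via B\'ezout relations and an ideal containment $J_M\subseteq J_N$ in $C[x]_{(p)}$, whereas you pass to the residue field $k=C[x]_{(p)}/pC[x]_{(p)}$ and argue that $\bar N=\bar M\bar F$ (or $\bar F\bar M$) is nonzero when $\bar F$ is invertible and $\bar M\neq 0$; these are equivalent packagings of the same idea.
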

\begin{proof}
	Suppose that $\gcd(m_{1,1},m_{1,2},\ldots, m_{r,r}, p) = 1$. Then there exist $a_{i,j}$ and $a_0$ in $C[x]$ such that
	 \begin{equation}\label{EQ: gcd=1}
		\sum_{i=1}^r\sum_{j=1}^ra_{i,j}m_{i,j}+a_0 p = 1.
	\end{equation}
	Let $J_M$ be the ideal of $C[x]_{(p)}$ generated by $m_{1,1}, m_{1,2},\ldots, m_{r,r}$ and $p$. The relation in~\eqref{EQ: gcd=1} implies that $J_M=(1)$. 
	
	Let $J_N$ be the ideal of $C[x]_{(p)}$ generated by $n_{1,1}, n_{1,2},\ldots, n_{r,r}$ and $p$. 	Since $F\in C[x]^{r\times r}$ is invertible over $C[x]_{(p)}$, its inverse $F^{-1}$ belongs to $C[x]_{(p)}^{r\times r}$. So we have $M=NF^{-1}$ or $M=F^{-1}N$. In both cases, each $m_{i,j}$ is a linear combination of the $n_{i,j}$'s with coefficients in $C[x]_{(p)}$. So $J_M \subseteq J_N$. Then we have $J_N=(1)$ because $J_M=(1)$. There are $b_{i,j}$ and $b_0$ in $C[x]_{(p)}$ such that $
	\sum_{i=1}^r\sum_{j=1}^r b_{i,j}n_{i,j} + b_0p= 1$. Since $p$ is irreducible, it follows that $\gcd(n_{1,1},n_{1,2},\ldots,n_{r,r},p)=1$ or $p$. However, the common denominator of the $b_{i,j}$'s and $b_0$ is not divisible by $p$. So we have $\gcd(n_{1,1},n_{1,2},\ldots,n_{r,r},p)=1$.
%	We write $M=(m_{i,j})\in C[x]^{r\times r}$. Then $g_{i,j}=\sum_{k=1}^rf_{i,k}m_{k,j}$ if $G=FM$ or $g_{i,j} = \sum_{k=1}^rm_{i,k}f_{k,j}$ if $G=MF$. Since $C[x]$ is a subring of $C[x]_{(p)}$, in both cases each $g_{i,j}$ is a linear combination of the $f_{i,j}$'s with coefficients in $C[x]_{(p)}$. Then $J_G\subseteq J_F$.
%	
\end{proof}

\begin{prop}\label{Prop: division}
	Let $e$ be a nonzero polynomial in $C[x]$ and $M=(m_{i,j}),F,G\in C[x]^{r\times r}$ be three invertible matrices over $C(x)$ with $\gcd(m_{1,1},m_{1,2},\ldots,m_{r,r},e)=1$ such that
	\begin{equation}\label{EQ: weak compatibility}
		\sigma_x\left(\frac{1}{e}M\right)F = G \frac{1}{e}M \quad\text{or} \quad F\sigma_x\left(\frac{1}{e}M\right) = G\frac{1}{e}M.
	\end{equation}
	\begin{enumerate}[(i)]
		\item\label{it:division1} If $p$ is an irreducible factor of $e$, then there exist $m,n\in \set N$, $m\geq 1$, $n\geq 0$, such that $\sigma_x^m(p)$ divides $\det(F)$ and $\sigma_x^{-n}(p)$ divides $\det(G)$.
		\item\label{it:division2} If $p$ is an irreducible factor of $\det(M)$ and $p$ does not divide $e$, then there exist $m,n\in\set N$, $m\geq 1$, $n\geq 0$, such that $\sigma_x^m(p)$ divides $\det(G)$ and $\sigma_x^{-n}(p)$ divides $\det(F)$.
	\end{enumerate}
\end{prop}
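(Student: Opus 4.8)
The plan is to localize at the irreducible polynomials in the $\sigma_x$-orbit of $p$ and to follow how the pole of the matrix $N:=\tfrac1e M$ (in part~\eqref{it:division1}), respectively the zero of $\det N$ (in part~\eqref{it:division2}), travels across the relation~\eqref{EQ: weak compatibility}. Write $\nu_q$ for the $q$-adic valuation on $C(x)$ attached to an irreducible $q\in C[x]$, extended to matrices entrywise by $\nu_q(H)=\min_{k,l}\nu_q(H_{k,l})$, and set $q_j:=\sigma_x^j(p)$ for $j\in\set Z$. The hypothesis $\gcd(m_{1,1},\dots,m_{r,r},e)=1$ means $\nu_q(M)=0$ for every irreducible $q\mid e$, so $\nu_q(N)=-\nu_q(e)<0$ at each $q\mid e$ while $\nu_q(N)=\nu_q(M)\ge 0$ at each $q\nmid e$; also $\nu_q(\det N)=\nu_q(\det M)-r\,\nu_q(e)$.

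Part~\eqref{it:division2} is the easy one. Put $D_j:=\nu_{q_j}(\det N)$. Since $p\nmid e$, $D_0=\nu_p(\det M)\ge 1$, whereas $D_j=0$ for all but finitely many $j$. Taking determinants in either alternative of~\eqref{EQ: weak compatibility} yields the single identity $\sigma_x(\det N)\,\det F=\det G\,\det N$, hence $\nu_{q_i}(\det G)-\nu_{q_i}(\det F)=D_{i-1}-D_i$ for all $i\in\set Z$; and $\nu_{q_i}(\det F),\nu_{q_i}(\det G)\ge 0$ because $\det F,\det G\in C[x]$. As $D_0>0$ but $D_j$ vanishes for large $j$, the inequalities $D_0\le D_1\le D_2\le\cdots$ cannot all hold, so some $i\ge 1$ has $D_{i-1}>D_i$; for that $i$, $\nu_{q_i}(\det G)=\nu_{q_i}(\det F)+(D_{i-1}-D_i)\ge 1$, i.e.\ $\sigma_x^i(p)\mid\det G$, and we take $m=i$. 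Likewise $D_0\le D_{-1}\le D_{-2}\le\cdots$ cannot all hold, so some $i\le 0$ has $D_i>D_{i-1}$, whence $\nu_{q_i}(\det F)\ge 1$ and we take $n=-i\ge 0$.

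Part~\eqref{it:division1} is more delicate: the determinant identity alone is too weak, since it may happen that $D_j=0$ for every $j$ (cf.\ the Remark following Theorem~\ref{Thm: compatibility}). Here I would use the Smith normal form over the discrete valuation ring $C[x]_{(q)}$: for each irreducible $q$, write $N=P_q\,\Lambda_q\,Q_q$ with $P_q,Q_q\in C[x]_{(q)}^{r\times r}$ invertible over $C[x]_{(q)}$ and $\Lambda_q=\diag(q^{\lambda^{(q)}_1},\dots,q^{\lambda^{(q)}_r})$, where $\lambda^{(q)}_1\le\cdots\le\lambda^{(q)}_r$ are integers with $\lambda^{(q)}_1=\nu_q(N)$; applying $\sigma_x$ carries the factorization of $N$ at $q_{j-1}$ to one at $q_j$. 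Let $a_j:=\nu_{q_j}(e)$. The set $\{j:a_j>0\}$ is finite and contains $0$; let $i_{\max}\ge 0$ and $i_{\min}\le 0$ be its maximum and minimum, so $a_{i_{\max}+1}=a_{i_{\min}-1}=0$. By the first paragraph, $\lambda^{(q_j)}_1=-a_j$ whenever $q_j\mid e$, and all $\lambda^{(q_j)}_k\ge 0$ whenever $q_j\nmid e$. Now substitute the Smith factorizations of $N$ at $q_{i-1}$ and $q_i$ into~\eqref{EQ: weak compatibility} and absorb the resulting invertible $C[x]_{(q_i)}$-factors into $F$ and $G$; this turns the relation into a matrix identity over $C[x]_{(q_i)}$ of one of the shapes
\[
\Lambda_{i-1}\,\widehat F=\widehat G\,\Lambda_i,\qquad
\widehat F\,\Lambda_{i-1}\,\Pi=\widehat G\,\Lambda_i,\qquad
\widehat F\,\Lambda_{i-1}=\widehat G\,\Lambda_i\,\Pi,
\]
where $\Lambda_i=\diag(q_i^{\lambda^{(q_i)}_1},\dots,q_i^{\lambda^{(q_i)}_r})$, $\Lambda_{i-1}=\diag(q_i^{\lambda^{(q_{i-1})}_1},\dots,q_i^{\lambda^{(q_{i-1})}_r})$ is the $\sigma_x$-image of the diagonal factor of $N$ at $q_{i-1}$, $\widehat F,\widehat G\in C[x]_{(q_i)}^{r\times r}$ satisfy $\nu_{q_i}(\det\widehat F)=\nu_{q_i}(\det F)$ and $\nu_{q_i}(\det\widehat G)=\nu_{q_i}(\det G)$, and $\Pi$ is invertible over $C[x]_{(q_i)}$. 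Taking $i=i_{\max}+1$, so that $a_i=0$ (hence every $\lambda^{(q_i)}_k\ge 0$) while $\lambda^{(q_{i-1})}_1=-a_{i-1}<0$, and comparing entries in this identity forces an entire row or column of $\widehat F$ to lie in $q_i\,C[x]_{(q_i)}$; therefore $q_i\mid\det\widehat F$, i.e.\ $\sigma_x^{m}(p)\mid\det F$ with $m=i_{\max}+1\ge 1$. Symmetrically, taking $i=i_{\min}$, so that $a_i>0$ ($\lambda^{(q_i)}_1=-a_i$) while all $\lambda^{(q_{i-1})}_k\ge 0$, forces an entire row or column of $\widehat G$ to lie in $q_i\,C[x]_{(q_i)}$, so $\sigma_x^{-n}(p)\mid\det G$ with $n=-i_{\min}\ge 0$.

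I expect the main obstacle to be the second alternative of~\eqref{EQ: weak compatibility}, where $F$ multiplies $\tfrac1e M$ from the left: there the Smith factorizations leave an extra invertible matrix $\Pi$ that cannot be merged into a diagonal. The plan is to choose the rearrangement so that $\Pi$ is pushed against a diagonal block all of whose $q_i$-exponents are nonnegative — which is exactly the situation at the boundary primes $q_{i_{\max}+1}$ and $q_{i_{\min}}$, by the first paragraph — so that $\Pi$ does not interfere with the entrywise valuation bound. Carrying out this bookkeeping, and keeping track of which of ``row'' versus ``column'' and which of $\widehat F$ versus $\widehat G$ is the relevant one in each of the four sub-cases (two alternatives of~\eqref{EQ: weak compatibility} times two conclusions), is where the real work of the proof lies.
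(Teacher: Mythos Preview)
Your treatment of part~\eqref{it:division2} matches the paper's: both take determinants of~\eqref{EQ: weak compatibility} to obtain the scalar identity $\sigma_x(\det N)\det F=\det G\,\det N$ and read off the required divisibilities from the valuation sequence $D_j$.

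For part~\eqref{it:division1} your Smith-normal-form route is correct and the boundary indices $i_{\max}+1$ and $i_{\min}$ are exactly the right places to look, but the paper gets there with considerably less machinery. Instead of tracking elementary divisors it invokes Lemma~\ref{Lem: gcd} directly: if $q\nmid\det F$ then $F\in\mathrm{GL}_r(C[x]_{(q)})$, and multiplying a matrix whose entries are jointly coprime to $q$ by such an $F$ (on either side) preserves that joint coprimality. Rewriting~\eqref{EQ: weak compatibility} as $e\cdot\sigma_x(M)F=\sigma_x(e)\cdot GM$ (respectively $e\cdot F\sigma_x(M)=\sigma_x(e)\cdot GM$) and taking $q=\sigma_x^m(p)$ with $m=\max\{i:\sigma_x^i(p)\mid e\}+1$, one has $q\mid\sigma_x(e)$ but $q\nmid e$; if $q\nmid\det F$, Lemma~\ref{Lem: gcd} keeps the entries of $\sigma_x(M)F$ (resp.\ $F\sigma_x(M)$) coprime to $q$, forcing $q\mid e$ from the right-hand side --- a contradiction. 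The $\det G$ conclusion is symmetric at $q=\sigma_x^{-n}(p)$ with $n=\min\{i\ge 0:\sigma_x^{-i}(p)\mid e\}$. This handles both alternatives of~\eqref{EQ: weak compatibility} uniformly with no $\Pi$-bookkeeping, since Lemma~\ref{Lem: gcd} is insensitive to whether $F$ multiplies from the left or the right. Your approach buys a more structural picture --- you see the pole of $N$ as an elementary divisor that must be absorbed by $F$ or $G$ at the edge of the orbit --- at the cost of the four-way case split you anticipate at the end.
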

\begin{proof}
	\eqref{it:division1} Suppose $p\mid e$. Let \[m = \max\{i\in \set N\mid \sigma_x^i(p) \mid e\}+1.\] Then $m\geq 1$, $\sigma_x^{m-1}(p)\mid e$ and $\sigma_x^m(p)\nmid e$. So $\sigma_x^m(p)\mid \sigma_x(e)$. If $\sigma_x^m(p) \nmid \det(F)$, then $F$ is an invertible matrix over $C[x]_{(\sigma_x^m(p))}$. Since $\sigma_x^{m-1}(p)$ is a factor of $e$, by the assumption we know that the greatest common divisor of all entries of $M$ and $e$ is $1$. So the greatest common divisor of all entries of $\sigma_x(M)$ and $\sigma_x(e)$ is also $1$. Replacing $M$ by $\sigma_x(M)$ in Lemma~\ref{Lem: gcd}, we obtain that the greatest common divisor of all entries of $\sigma_x(M)F$ (or $F\sigma_x(M)$) and $\sigma_x^m(p)$ is $1$. By~\eqref{EQ: weak compatibility}, we get $\sigma_x^m(p)$ must divide $e$ because $\sigma_x^m(p)\mid \sigma_x(e)$ and $G,M\in C[x]^{r\times r}$. This contradicts to the choice of $m$. Thus $\sigma_x^m(p) \mid\det(F)$.
	
	Let $n= \min\{ i \in \set N \mid \sigma_x^{-i}(p) \mid e\}$. Then $n\geq 0$, $\sigma_x^{-n}(p) \mid e$ and $\sigma_x^{-n-1}(p)\nmid e$. So~$\sigma_x^{-n}(p) \nmid \sigma_x(e)$. If $\sigma_x^{-n}(p)\nmid \det(G)$, then $G$ is an invertible matrix over $C[x]_{(\sigma_t^{-n}(p))}$. By the assumption, the greatest common divisor of all entries of $M$ and $\sigma_x^{-n}(p)$ is $1$. Replacing $F$ by $G$ in Lemma~\ref{Lem: gcd}, we obtain that the greatest common divisor of all entries of $GM$ and $\sigma_x^{-n}(p)$ is $1$. By~\eqref{EQ: weak compatibility}, we get $\sigma_x^{-n}(p)$ must divide $\sigma_x(e)$ because $F,M\in C[x]^{r\times r}$. This contradicts to the choice of $n$. Thus $\sigma_x^{-n}(p) \mid\det(G)$.	
	
	\eqref{it:division2} Suppose $p \mid \det(M)$ and $p\nmid e$. By~\eqref{EQ: weak compatibility}, we have
	\[\sigma_x\left(\frac{1}{e^r}\det(M)\right) \det(F) = \det(G)\frac{1}{e^r}\det(M).\]
	Since $\det(M)\neq 0$, it follows that
	\[\det(G)\sigma_x\left(\frac{1}{\det(M)}e^r\right) =\frac{1}{\det(M)}e^r\det(F).\]
	Now it comes back to the case~\eqref{it:division1} with the dimension of matrices being one (which is also the case in~\cite[Theorem 7]{AbramovPetkovsek01b}). So there exist $m,n\in \set N$, $m\geq 1$, $n\geq 0$, such that $\sigma_x^m(p)\mid \det(G)$ and $\sigma_x^{-n}(p)\mid \det(F)$.
\end{proof}

\begin{lemma}\label{Lem: compatiblity1}
	Let $e,e_t\in C[t,x]$ and $M, M_t\in C[t,x]^{r\times r}$ be invertible matrices over $C(t,x)$ satisfying the compatibility condition~\eqref{EQ: compatible}. Assume that the greatest common divisor (w.r.t. $t$ and $x$) of all entries of $M$ (resp. $M_t$) and $e$ (resp. $e_t$) is $1$. If $p\in C[t,x]$ is an irreducible factor of $e\det(M)$ and $\deg_t(p)\neq 0$, then there exist $i,j\in\set Z$, $i>0$, such that $\sigma_t^i\sigma_x^j(p)$ divides $e_t\det(M_t)$.
\end{lemma}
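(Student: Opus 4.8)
The plan is to reduce the lemma to Proposition~\ref{Prop: division}, applied with respect to $t$ rather than $x$. First I would put the compatibility relation~\eqref{EQ: compatible} into the shape required by that proposition. Multiplying both sides of~\eqref{EQ: compatible} by the scalar $e_t\,\sigma_x(e_t)$ and pulling these (central) scalars inside the matrix products gives
\[
\sigma_t\!\left(\frac{1}{e}M\right) F' = G'\,\frac{1}{e}M,\qquad F' := \sigma_x(e_t)\,M_t,\quad G' := e_t\,\sigma_x(M_t),
\]
where $F',G'\in C[t,x]^{r\times r}$ are both invertible over $C(t,x)$ (since $e_t\det(M_t)\neq 0$), and where $\det(F')=\sigma_x(e_t)^r\det(M_t)$ and $\det(G')=e_t^r\,\sigma_x(\det(M_t))$. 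Since the gcd of all entries of $M$ and $e$ is $1$ by hypothesis, this is precisely the hypothesis~\eqref{EQ: weak compatibility} of Proposition~\ref{Prop: division}, with $\sigma_x$ replaced by $\sigma_t$ and with $F',G'$ in the roles of $F,G$. The proofs of Lemma~\ref{Lem: gcd} and Proposition~\ref{Prop: division} carry over to this setting essentially verbatim: one may read the single-variable ring "$C[x]$" occurring there as the polynomial ring $C(x)[t]$ over the base field $C(x)$, or equivalently re-run the ideal computation of Lemma~\ref{Lem: gcd} in the discrete valuation ring $C[t,x]_{(q)}$. The one thing to check is that, for any fixed nonzero polynomial $q$, the set of integers $i$ with $\sigma_t^{i}(p)\mid q$ is finite; this holds exactly because $\deg_t(p)\neq 0$, so that $p,\sigma_t(p),\sigma_t^2(p),\dots$ are pairwise distinct.

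Next I would split according to whether $p$ divides $e$. If $p\mid e$, then by Proposition~\ref{Prop: division}.\eqref{it:division1} (applied with respect to $t$) there is $m\geq 1$ with $\sigma_t^{m}(p)\mid\det(F')=\sigma_x(e_t)^r\det(M_t)$; since $\sigma_t^{m}(p)$ is irreducible it divides $\sigma_x(e_t)$ or $\det(M_t)$, and applying $\sigma_x^{-1}$ in the former case gives $\sigma_t^{m}\sigma_x^{-1}(p)\mid e_t$. Either way $\sigma_t^{i}\sigma_x^{j}(p)\mid e_t\det(M_t)$ with $i=m>0$ and $j\in\{0,-1\}$. If $p\nmid e$, then $p\mid\det(M)$ because $p$ is a factor of $e\det(M)$, and Proposition~\ref{Prop: division}.\eqref{it:division2} (applied with respect to $t$) yields $m\geq 1$ with $\sigma_t^{m}(p)\mid\det(G')=e_t^r\,\sigma_x(\det(M_t))$; then $\sigma_t^{m}(p)$ divides $e_t$, or else $\sigma_t^{m}\sigma_x^{-1}(p)\mid\det(M_t)$, and again $\sigma_t^{i}\sigma_x^{j}(p)\mid e_t\det(M_t)$ with $i=m>0$. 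This exhausts all cases.

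The real content of the proof is thus the reduction above, together with the verification that Proposition~\ref{Prop: division} may legitimately be invoked with $t$ in place of $x$; once that is granted the remainder is bookkeeping. That verification is the main obstacle: one must confirm the hypotheses (that $F',G'$ are polynomial matrices invertible over $C(t,x)$, which is immediate, and that the gcd of all entries of $M$ and $e$ is $1$, which is assumed) and re-examine the proofs of Lemma~\ref{Lem: gcd} and Proposition~\ref{Prop: division} over the two-variable ring. The only place where the hypothesis $\deg_t(p)\neq 0$ is genuinely needed is in guaranteeing finiteness of the exponent sets that define the integer $m$: were $p$ free of $t$, we would have $\sigma_t^{i}(p)=p$ for every $i$, and that construction would collapse.
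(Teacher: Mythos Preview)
Your proof is correct and follows essentially the same route as the paper: rewrite~\eqref{EQ: compatible} as $\sigma_t(\frac{1}{e}M)\,\sigma_x(e_t)M_t = e_t\sigma_x(M_t)\,\frac{1}{e}M$, observe that $p$ remains irreducible in $C(x)[t]$ because $\deg_t(p)\neq 0$, and then apply Proposition~\ref{Prop: division} with $t$ in place of~$x$ to the two cases $p\mid e$ and $p\mid\det(M)$, $p\nmid e$. Your write-up is more explicit than the paper's about why Proposition~\ref{Prop: division} transfers to the $t$-variable setting and where the hypothesis $\deg_t(p)\neq 0$ enters, but the argument is the same.
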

\begin{proof}
	Note that $p$ is also an irreducible polynomial in $C(x)[t]$. We rewrite~\eqref{EQ: compatible} as
	\[\sigma_t\left(\frac{1}{e}M\right)\underbrace{\sigma_x(e_t)M_t}_{=:F} = \underbrace{e_t\sigma_x(M_t)}_{=:G} \frac{1}{e}M.\]
	
	(\romannumeral 1) If $p \mid e$, then by Proposition~\ref{Prop: division}.\eqref{it:division1}, there exists $m\geq 1$ such that $\sigma_t^m(p)\mid \det(\sigma_x(e_t)M_t)$. Since $p$ is irreducible, it follows that $\sigma_t^m(p) \mid \sigma_x(e_t)$ or $\sigma_t^m(p)\mid \det(M_t)$. Take $(i,j)=(m, -1)$ in the former case, $(i,j) = (m,0)$ in the latter.
	
	(\romannumeral 2) If $p \mid \det(M)$ and $p\nmid e$, then by Proposition~\ref{Prop: division}.\eqref{it:division2}, there is $m\geq 1$ such that $\sigma_t^m(p) \mid \det(e_t\sigma_x(M_t))$. This implies that $\sigma_t^m(p) \mid e_t$ or $\sigma_t^m(p)\mid \sigma_x(\det(M_t))$. Take $(i,j) = (m,0)$ in the former case, $(i,j)=(m,-1)$ in the latter.
\end{proof}
\begin{lemma}\label{Lem: compatiblity2}
	Let $e,e_t\in C[t,x]$ and $M, M_t\in C[t,x]^{r\times r}$ be invertible matrices over $C(t,x)$ satisfying the compatibility condition~\eqref{EQ: compatible}. Assume that the greatest common divisor (w.r.t. $t$ and $x$) of all entries of $M$ (resp. $M_t$) and $e$ (resp. $e_t$) is $1$ and $\gcd(\det(M), \sigma_x^i(e))=1$ (w.r.t. $x$) for all $i\in \set Z\setminus\{0\}$. If $q\in C[t,x]$ is an irreducible factor of $e_t\det(M_t)$ and $\deg_x(q)\neq0$, then there exists $k\in\set Z$ such that $\sigma_x^k(q)$ divides $e\det(M)$.
\end{lemma}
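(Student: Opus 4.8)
The plan is to run the argument of Lemma~\ref{Lem: compatiblity1} with the roles reorganized: now $\frac{1}{e_t}M_t$ plays the part of the ``central'' matrix in the compatibility relation~\eqref{EQ: compatible}, and the extra hypothesis $\gcd(\det(M),\sigma_x^i(e))=1$ for $i\in\set Z\setminus\{0\}$ is exactly what lets us conclude a purely $\sigma_x$-shifted divisibility rather than a mixed $\sigma_t\sigma_x$-shifted one. First I would reduce to working over the field $C(t)$ in the single variable $x$: since $\deg_x(q)\neq0$ and $q$ is irreducible in $C[t,x]$, Gauss's lemma shows $q$ is primitive over $C[t]$ and irreducible (hence prime) in $C(t)[x]$, and the same holds for every $\sigma_x^k(q)$ and every $\sigma_t^i\sigma_x^j(q)$; likewise all the stated coprimality hypotheses survive the passage to $C(t)[x]$ by clearing denominators and Gauss. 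From here on all divisibilities are read in $C(t)[x]$.

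Next I would rewrite~\eqref{EQ: compatible}. Multiplying both sides by the scalar $e\,\sigma_t(e)$ and rearranging yields
\[
\sigma_x\!\left(\tfrac{1}{e_t}M_t\right)F \;=\; G\,\tfrac{1}{e_t}M_t,\qquad F:=\sigma_t(e)\,M,\quad G:=e\,\sigma_t(M),
\]
with $F,G\in C[t,x]^{r\times r}$ invertible over $C(t,x)$, $\det(F)=\sigma_t(e)^r\det(M)$ and $\det(G)=e^r\,\sigma_t(\det(M))$. This is precisely the shape~\eqref{EQ: weak compatibility} demanded by Proposition~\ref{Prop: division}, applied with $(e_t,M_t)$ in place of $(e,M)$; the hypothesis that the entries of $M_t$ together with $e_t$ have gcd $1$ supplies what the proposition needs.

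Now split according to whether $q\mid e_t$ or $q\mid\det(M_t)$ — one of these holds since $q$ is a prime factor of $e_t\det(M_t)$. If $q\mid e_t$, Proposition~\ref{Prop: division}\eqref{it:division1} gives $m\geq1$, $n\geq0$ with $\sigma_x^m(q)\mid\det(F)$ and $\sigma_x^{-n}(q)\mid\det(G)$; as $\sigma_x^m(q)$ and $\sigma_x^{-n}(q)$ are prime, either $\sigma_x^m(q)\mid\det(M)$ or $\sigma_x^{-n}(q)\mid e$ — in both of which we are done — or else we are in the ``parasitic'' case $\sigma_x^m(q)\mid\sigma_t(e)$ and $\sigma_x^{-n}(q)\mid\sigma_t(\det(M))$. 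In that case, applying $\sigma_t^{-1}$ gives $\pi:=\sigma_t^{-1}\sigma_x^m(q)\mid e$ while $\sigma_x^{-(m+n)}(\pi)=\sigma_t^{-1}\sigma_x^{-n}(q)\mid\det(M)$, so the nonunit $\sigma_x^{-(m+n)}(\pi)$ divides $\gcd(\det(M),\sigma_x^{-(m+n)}(e))$, and since $m+n\geq1$ this contradicts the hypothesis. The case $q\nmid e_t$ (so $q\mid\det(M_t)$) is handled symmetrically via Proposition~\ref{Prop: division}\eqref{it:division2}, whose parasitic case, $\sigma_x^m(q)\mid\sigma_t(\det(M))$ and $\sigma_x^{-n}(q)\mid\sigma_t(e)$, is excluded in the same way. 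In every surviving branch we get $\sigma_x^k(q)\mid e\det(M)$ in $C(t)[x]$ for some $k\in\set Z$, and since $\sigma_x^k(q)$ is primitive over $C[t]$, one last application of Gauss's lemma upgrades this to divisibility in $C[t,x]$.

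I expect the only genuine subtlety to be the elimination of the parasitic branches, which is exactly where $\gcd(\det(M),\sigma_x^i(e))=1$ $(i\neq0)$ gets used: without it the argument would merely show that some $\sigma_t^i\sigma_x^j(q)$ divides $e\det(M)$ — the $t\leftrightarrow x$ mirror image of Lemma~\ref{Lem: compatiblity1} — instead of the sharper $x$-only shift claimed here. Everything else is routine bookkeeping with primality and Gauss's lemma.
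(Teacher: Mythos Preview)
Your proposal is correct and follows essentially the same approach as the paper: the same rewriting of~\eqref{EQ: compatible} with $F=\sigma_t(e)M$ and $G=e\,\sigma_t(M)$, the same appeal to Proposition~\ref{Prop: division} with $(e_t,M_t)$ in the central role, and the same use of the hypothesis $\gcd(\det(M),\sigma_x^i(e))=1$ to eliminate the ``parasitic'' branches. Your case analysis is in fact slightly leaner than the paper's (the paper also rules out the branch $\sigma_x^m(q)\mid\det(M)$ \emph{and} $\sigma_x^{-n}(q)\mid e$, which already yields the conclusion and so need not be excluded), and your explicit invocation of Gauss's lemma for the passage between $C[t,x]$ and $C(t)[x]$ makes precise what the paper leaves implicit.
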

\begin{proof}
	Note that $q$ is also an irreducible polynomial in $C(t)[x]$. We rewrite~\eqref{EQ: compatible} as
	\[\underbrace{e\sigma_t(M)}_{=:G} \frac{1}{e_t}M_t = \sigma_x\left(\frac{1}{e_t}M_t\right)\underbrace{\sigma_t(e)M}_{=:F}.\]
	
	(\romannumeral 1) If $q \mid e_t$, then by Proposition~\ref{Prop: division}.\eqref{it:division1}, there exist $m\geq 1$ and $n\geq 0$ such that 
	\[\sigma_x^m(q) \mid \det(\sigma_t(e)M)\quad\text{and}\quad \sigma_x^{-n}(q)\mid \det(e\sigma_t(M)).\]
	Since $q$ is irreducible, it follows that $\sigma_x^m(q)$ divides  $\sigma_t(e)$ or $\det(M)$, and $\sigma_x^{-n}(q)$ divides $e$ or $\sigma_t(\det(M))$. If $\sigma_x^m(q) \mid \sigma_t(e)$ and $\sigma_x^{-n}(q)\mid \sigma_t(\det(M))$, then $\sigma_t^{-1}\sigma_x^{-n}(q)$ would divide $\gcd (\det(M), \sigma_x^{-m-n}(e))$. If $\sigma_x^m(q) \mid \det(M)$ and $\sigma_x^{-n}(q)\mid e$, then $\sigma_x^{m}(q)$ would divide $\gcd (\det(M), \sigma_x^{m+n}(e))$. Since $m+n\neq 0$, these two cases can not happen by the assumption. So we have $\sigma_x^m(q)\mid \det(M)$ (and $\sigma_x^{-n}(q)\mid \det(\sigma_t(M))$), or $\sigma_x^{-n}(q)\mid e$ (and $\sigma_x^m(q)\mid \sigma_t(e)$). Take $k=m$ in the former case, $k=-n$ in the latter.
	
	(\romannumeral 2) If $q\mid \det(M_t)$ and $q\nmid e_t$, then by Proposition~\ref{Prop: division}.\eqref{it:division2}, there exist $m\geq1$ and $n\geq 0$ such that
	\[\sigma_x^m(q) \mid \det(e\sigma_t(M))\quad\text{and}\quad \sigma_x^{-n} \mid \det(\sigma_t(e)M).\]
	Similar to~(\romannumeral1), since $\gcd(\det(M), \sigma_x^i(e))=1$ for all $i\in \set Z\setminus\{0\}$, it follows that $\sigma_x^m(q)\mid e$ or $\sigma_x^{-n}(q)\mid \det(M)$. Take $k=m$ in the former case, $k=-n$ in the latter.
\end{proof}

\begin{proof}[Proof of Theorem~\ref{Thm: compatibility} (continued)]
	Note that polynomials in $C[x]$ and $C[t]$ are integer-linear. 
	
	If $p$ is an irreducible factor of $e\det(M)$ with $\deg_t(p)\neq0$ and $\deg_x(p)\neq0$, then by Lemma~\ref{Lem: compatiblity1}, there exist $i,j\in\set Z$, $i>0$, such that $\sigma_t^i\sigma_x^j(p)$ divides $e_t\det(M_t)$. By Lemma~\ref{Lem: compatiblity2}, there exists $k\in\set Z$ such that $\sigma_t^i\sigma_x^{j+k}(p)$ divides $e\det(M)$. Hence by Lemma~\ref{Lem: integer-linear thm6}, $e\det(M)$ is integer linear.
	
	If $q$ is an irreducible factor of $e_t\det(M_t)$ with $\deg_t(q)\neq0$ and $\deg_x(q)\neq0$, then by Lemma~\ref{Lem: compatiblity2}, there exists $k\in\set Z$ such that $\sigma_x^k(q)$ divides $e\det(M)$. By Lemma~\ref{Lem: compatiblity1}, there exist $i,j\in\set Z$, $i>0$, such that $\sigma_t^i\sigma_x^{j+k}(q)$ divides $e_t\det(M_t)$. Hence by Lemma~\ref{Lem: integer-linear thm6}, $e_t\det(M_t)$ is integer-linear.
	
	Since $e,\det(M),e_t,\det(M_t)$ are nonzero, it follows that $e,\det(M), e_t, \det(M_t)$ are integer-linear.
\end{proof}
\section{Properness implies existence of telescopers}\label{sec: proper implies existence}
For a proper hypergeometric term, Huang~\cite{Huang16} gave lower and upper bounds on the order of its minimal telescopers. Using the notations in Section~\ref{sec:integer-linear}, we shall give an upper bound on the order of minimal telescopers for proper P-recursive sequences, which provides a proof of the existence of telescopers for such sequences.

%Let $W$ be a suitable basis of $A\cong K(x)[S_x]/\<L>$ at $\{\beta_1,\ldots,\beta_I\}$. Let $e,\tilde e\in K[x]$ and $M,\tilde M\in K[x]^{r\times r}$ be such that $S_xW=\frac{1}{e}MW$ and $\frac{1}{\tilde e}\tilde M=(\frac{1}{e}M)^{-1}$. By Theorem~\ref{Thm: compatibility} and Corollary~\ref{Cor: tilde e integer-linear}, $e$ and $\tilde e$ are integer-linear. Let $p_i\in C[t,x]$ be the minimal polynomial of $\beta_i$ over $K$. The set of all integer-linear polynomials in $\{p_1,\ldots, p_I\}$ is denoted by $\Lambda_W$. 
\begin{lemma}\label{Lem: tele reduction}
	Let $W$ be a suitable basis of $A\cong K(x)[S_x]/\<L>$ at $\{\beta_1,\ldots,\beta_I\}$. Let $e,\tilde e\in C[t,x]$ and $M,\tilde M\in C[t,x]^{r\times r}$ be such that $S_xW=\frac{1}{e}MW$ and $\frac{1}{\tilde e}\tilde M=(\frac{1}{e}M)^{-1}$. Let $q=h(mt+nx)\in C[t,x]$ be a monic irreducible integer-linear polynomial, where $h\in C[z]$, $m,n\in\set Z$ with $n>0$ and $\gcd(m,n)=1$. Let $p_i\in C[t,x]$ be the monic minimal polynomial of $\beta_i$ over $K$. %, where $h_i\in C[z]$, $m_i,n_i\in\set Z$ with $n_i\geq 0$ and $\gcd(m_i,n_i)=1$. Let $q=h(mt+nx)\in C[t,x]$ be an irreducible integer-linear polynomial, where $h\in C[z]$, $m,n\in\set Z$ with $n\geq0$ and $\gcd(m,n)=1$. 
	For any $a\in K[x]^r$ and $\xi\in \set N[\tau,\tau^{-1}]$ with $\tau =\tau^{(m,n)}$, 
	\begin{enumerate}[(i)]
		\item\label{it:tele-red1} if $\gcd(q, \sigma_t^\ell\sigma_x^j(p_i))=1$ for all $i\in\{1,\ldots, I\}$ and $\ell,j\in \set Z$, then there exist $g\in A$ and $c\in K[x]^r$ such that
		\begin{equation}\label{EQ: tele-red}
			\frac{aW}{q^\xi}=\Delta_x(g) +  \frac{cW}{b},
		\end{equation}
		where $b=e\tilde e\sum_{j=0}^{n-1}\sigma_t^j(q^{||\xi||_\tau})$.
		\item\label{it:tele-red2} if $q=p_i$ for some $i\in\{1,\ldots, I\}$, then there exist $g\in A$ and $c\in K[x]^r$ such that~\eqref{EQ: tele-red} holds with $b =e\tilde e\sum_{p\in \Lambda_q}\sum_{j=0}^{n-1}\sigma_t^j(p^{||\xi||_\tau})$, where $\Lambda_q$ is the set consisting of all polynomials in $\{p_1,\ldots,p_I\}$ that are $\<\sigma_t,\sigma_x>$-shift equivalent to $q$.
	\end{enumerate}
\end{lemma}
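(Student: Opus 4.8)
The statement is a shifted-summation analogue of the single-variable Abramov--Petkov\v sek reduction step (Lemma~\ref{Lem: AP step}), now adapted to collapse a whole $\tau$-orbit of integer-linear denominator factors into a single ``canonical'' denominator supported on $n$ consecutive $\sigma_t$-shifts of one polynomial. My plan is to reduce everything to repeated application of Lemma~\ref{Lem: AP step} together with the bookkeeping established in Section~\ref{sec:integer-linear}, so that the proof is essentially a $\tau$-shift version of the certified reduction.

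\medskip
\noindent\textbf{Step 1: normalize the denominator.} Write $\xi=\sum_{i=\ell}^{\rho}k_i\tau^i$ with $k_i\in\set N$, so $q^\xi=\prod_{i=\ell}^{\rho}\tau^i(q)^{k_i}$. Recall from \eqref{EQ:tau p} that each $\tau^i(q)$ equals $\sigma_x^{s_i}(\sigma_t^{r_i}(q))$ for suitable integers with $0\le r_i\le n-1$; in particular every factor $\tau^i(q)$ is $\<\sigma_x>$-shift equivalent to exactly one of the $n$ polynomials $q,\sigma_t(q),\ldots,\sigma_t^{n-1}(q)$. The target denominator $b$ in \eqref{EQ: tele-red} (case \eqref{it:tele-red1}) is $e\tilde e\prod_{j=0}^{n-1}\sigma_t^{j}(q)^{\|\xi\|_\tau}$; note this is exactly $e\tilde e$ times a product over those $n$ representatives, each appearing to the maximal exponent $\|\xi\|_\tau=\max_i k_i$.

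\medskip
\noindent\textbf{Step 2: kill one factor at a time, pushing it onto a representative.} I would induct on $\sum_i k_i=\|\xi\|_\tau^*$ (or on $\rho-\ell$), each time applying Lemma~\ref{Lem: AP step}.\eqref{it:AP-red1} (in case \eqref{it:tele-red1}, where the $\gcd$ hypothesis guarantees $q$ and hence every $\tau^i(q)$ is coprime to all $\sigma_x^j(p_i)$, so \eqref{it:AP-red1} applies) or \eqref{it:AP-red2} (in case \eqref{it:tele-red2}, where $q=p_i$). Concretely, take an ``extremal'' factor $\tau^i(q)$ of the current denominator that is a proper shift $\sigma_x^{s}(q_0)$ of its representative $q_0\in\{\sigma_t^j(q):0\le j\le n-1\}$ with $s\neq 0$; writing the term as $\frac{a'W}{\sigma_x^{s}(q_0)\cdot(\text{rest})}$ and applying Lemma~\ref{Lem: AP step} converts $\frac{a'W}{\sigma_x^{s}(q_0)}$ into $\Delta_x(g')+\frac{c'W}{q_0\, e\tilde e}$. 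Iterating, every occurrence of every $\tau$-shift of $q$ migrates onto its representative $\sigma_t^j(q)$, the multiplicities in $b$ being controlled by Lemma~\ref{Lem: norm}: \eqref{it:norm} shows $\sigma_t$-translation does not change the maximal coefficient, and \eqref{it:norm ineq}/\eqref{it:norm*} bound the accumulation of exponents so the final exponent on each $\sigma_t^j(q)$ is at most $\|\xi\|_\tau$. The extra factors $e\tilde e$ introduced at each step merge into the single $e\tilde e$ prefactor in $b$ because, by Theorem~\ref{Thm: suitable}.\eqref{it:suitable1} and Corollary~\ref{Cor: tilde e integer-linear}, $e$ and $\tilde e$ are shift-free with respect to $x$ and supported on the $\beta_i$, hence coprime to all $\tau$-shifts of $q$ in case \eqref{it:tele-red1}. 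In case \eqref{it:tele-red2} one tracks which $p_k$ lie in the $\<\sigma_t,\sigma_x>$-orbit $\Lambda_q$; \eqref{EQ:tau p} applied to $q=p_i$ shows the representatives that actually appear lie among $\{\sigma_t^j(p):p\in\Lambda_q,\,0\le j\le n-1\}$, which is precisely the support of $b$ in \eqref{it:tele-red2}.

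\medskip
\noindent\textbf{Main obstacle.} The delicate point is the exponent accounting in Step 2: when two $\tau$-shifts of $q$ both reduce onto the \emph{same} representative $\sigma_t^j(q)$, their contributions add, and one must verify the total never exceeds $\|\xi\|_\tau$ rather than $\|\xi\|_\tau^*$. This is exactly what Lemma~\ref{Lem: norm}.\eqref{it:norm*} is designed for: grouping the reduction by which residue class mod $n$ (equivalently, which representative) each $\tau$-power maps to, the exponent landing on a fixed representative is $\le\|\xi\|_\tau$ after collecting, since within one residue class the relevant sum telescopes to at most the column-sum bound. I would make this precise by organizing the induction not factor-by-factor but residue-class-by-residue-class: fix $j$, collect all $\tau^i(q)$ with $i\equiv$ (the class corresponding to $j$), and reduce that sub-product in one block, invoking \eqref{it:norm*} for the bound and \eqref{it:norm} to confirm the $\sigma_t^j$-twist is harmless. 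Everything else---existence of $g\in A$, polynomiality of $c$---is inherited verbatim from Lemma~\ref{Lem: AP step}.
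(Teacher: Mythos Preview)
Your overall strategy---push each $\tau$-shift of $q$ onto one of the $n$ representatives $\sigma_t^j(q)$ via Lemma~\ref{Lem: AP step}---is the right one, and it is what the paper does. But two things in your plan are off, and the paper's route is considerably shorter.

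\textbf{The missing first move: partial fractions.} The paper does \emph{not} iterate factor-by-factor. Since the polynomials $\tau^i(q)$, $\ell\le i\le\rho$, are pairwise coprime irreducibles, one applies partial fractions to the coefficients of $\frac{a}{q^\xi}W$ and obtains
\[
\frac{aW}{q^\xi}=\Bigl(p+\sum_{i=\ell}^{\rho}\frac{a_i}{\tau^i(q)^{k_i}}\Bigr)W,\qquad a_i\in K[x]^r,\ \deg_x a_i<k_i\deg_x q.
\]
This immediately reduces the lemma to the monomial case $\xi=k\tau^i$. For that case, \eqref{EQ:tau p} gives $\tau^i(q)=\sigma_t^{i_0}\sigma_x^{j_0}(q)$ with $0\le i_0\le n-1$, and a \emph{single} application of Lemma~\ref{Lem: AP step} (part~\eqref{it:AP-red1} in case~\eqref{it:tele-red1}; part~\eqref{it:AP-red1} or~\eqref{it:AP-red2} in case~\eqref{it:tele-red2}, according to whether $\sigma_t^{i_0}(q)$ is coprime to all $\sigma_x^j(p_{i'})$ or equals some $\sigma_x^j(p_{i'})$ with $p_{i'}\in\Lambda_q$) lands the term on $\sigma_t^{i_0}(q)^{k}e\tilde e$ (resp.\ $p_{i'}^{k}e\tilde e$). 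Since $k=k_i\le||\xi||_\tau$, this already divides $b$, and one simply pads the denominator.

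\textbf{Your Step~2 has a genuine gap.} Lemma~\ref{Lem: AP step} is stated for $\frac{aW}{\sigma_x^\ell(q)}$ with $a\in K[x]^r$ \emph{polynomial}. In your ``peel off one factor'' scheme you would be applying it with $a'=a/(\text{rest})$ rational, which the lemma does not cover. You cannot avoid separating the factors first, and partial fractions is precisely how one does that.

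\textbf{Your ``main obstacle'' is a phantom.} Once the terms are separated, when several of them reduce to the \emph{same} representative $\sigma_t^{j}(q)$ the exponents do not add: combining $\frac{c_1}{p^{k_1}}+\frac{c_2}{p^{k_2}}$ yields denominator $p^{\max(k_1,k_2)}$, and $\max_i k_i=||\xi||_\tau$ by definition. No residue-class bookkeeping and no appeal to Lemma~\ref{Lem: norm}.\eqref{it:norm*} are needed here. (That bound \emph{is} used, but only later in Theorem~\ref{Thm: tele add}, to control the genuinely cumulative contribution of $e_t$ under iterated $S_t$---a different phenomenon.) Likewise the $e\tilde e$ factor appears exactly once per term and is common to all of them, so it does not pile up.
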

\begin{proof}
	We write $\xi=\sum_{i=\ell}^\rho k_i \tau^i$, where $\ell, \rho\in \set Z$ with $\ell\leq \rho$, $k_i\in \set N$ and only finitely many $k_i$ are nonzero. Note that the irreducible polynomials $\tau^{i}(q)$ with $i\in\set Z$ are pairwise coprime. Applying the partial fraction decomposition
	of rational functions to all coeﬀicients of $\frac{1}{q^\xi}aW$, we get
	\[\frac{aW}{q^\xi} = \left(p + \sum_{i=\ell}^\rho \frac{a_i}{q^{k_i\tau^i}}\right)W,\]
	where $p, a_i\in K[x]$, $\deg_x(a_i)<\deg_x(q_i^{k_i})$. Since $||k_i\tau^i||_\tau = k_i \leq ||\xi||_\tau$ for all $i$ with $\ell \leq i\leq \rho$, we only need to show the case that $\xi = k\tau^i$ with $k\in \set N$ and $i\in \set Z$. By~\eqref{EQ:tau p}, there exist $i_0,j_0\in\set Z$ with $0\leq i_0 \leq n-1$ such that $\tau^i(q) = \sigma_t^{i_0}\sigma_x^{j_0}(q)$.
	
	\eqref{it:tele-red1} By the assumption, $\gcd(\sigma_t^{i_0}(q), \sigma_x^j(p_i))=1$ for all $i\in\{1,\ldots, I\}$ and $j\in \set Z$. So by Lemma~\ref{Lem: AP step}.\eqref{it:AP-red1}, there exist $g\in A$ and $\tilde c\in K[x]^r$ such that 
	\[\frac{aW}{q^{k\tau^i}} = \frac{aW}{\sigma_t^{i_0}\sigma_x^{j_0}(q^k)}=\Delta_x(g) + \frac{\tilde cW}{\sigma_t^{i_0}(q^k) e\tilde e}=\Delta_x(g)+\frac{cW}{b},\]
	where $c = \tilde c b/(e\tilde e \sigma_t^{i_0}(q^k))$. Then $c\in K[x]^r$, because $e\tilde e\sigma_t^{i_0}(q^k)$ is a factor of $b$.
	
	\eqref{it:tele-red2} Note that $q\in \Lambda_q$. If $\gcd(\sigma_t^{i_0}(q), \sigma_x^j(p_{i'}))=1$ for all $i'\in\{1,\ldots, p_I\}$ and $j\in \set Z$, then~\eqref{EQ: tele-red} holds by the same argument as in~\eqref{it:tele-red1}. If $
	\gcd(\sigma_t^{i_0}(q),\sigma_x^{j}(p_{i'}))\neq 1$ for some $i'\in\{1,\ldots, I\}$ and $j\in \set Z$, then $\sigma_t^{i_0}(q)=\sigma_x^{j}(p_{i'})$ because $q$ and $p_{i'}$ are monic irreducible polynomials. So $p_{i'}\in \Lambda_q$. By Lemma~\ref{Lem: AP step}.\eqref{it:AP-red2}, there exist $g\in A$ and $\tilde c\in K[x]^r$ such that
	\[\frac{aW}{q^{k\tau^i}}=\frac{aW}{\sigma_t^{i_0}\sigma_x^{j_0}(q^k)}=\frac{aW}{\sigma_x^{j+j_0}(p_{i'}^k)} = \Delta_x(g) + \frac{\tilde c W}{ p_{i'}^ke\tilde e}=\Delta_x(g) + \frac{cW}{b},\]
	where $c=\tilde cb/(e\tilde ep_{i'}^k)\in K[x]^r$.
\end{proof}

\begin{theorem}\label{Thm: tele add}
	With the notations introduced in Theorem~\ref{Thm:add decomp},  we assume that $f\in A$ has a decomposition
	\begin{equation}\label{EQ:AP-step tele}
		f = \Delta_x(g) + \frac{1}{d} PW + \frac{1}{e} RW,
	\end{equation}
	where $g\in A$, $d\in K[x]$, $P,R\in K[x]^r$ with $\deg_x(P)<\deg_x(d)$ and $d\prod_{i=1}^I(x-\beta_i)$ being shift-free with respect to $x$. If $d$ is integer-linear, then there exists a polynomial $b\in K[x]$ such that for every nonnegative integer $\ell\geq 0$, $S_t^\ell f$ can be decomposed into
	\begin{equation}\label{EQ: tele add}S_t^\ell f = \Delta_x(\tilde g_\ell)+\frac{1}{b}P_\ell W + \frac{1}{a}Q_\ell V,
	\end{equation}
	where $\tilde g_\ell\in A$, $P_\ell\in K[x]^r$ with $\deg_x(R_\ell)<\deg_x(b)$ and  $Q_\ell\in N_V$.
\end{theorem}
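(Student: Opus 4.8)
The plan is to apply $S_t^\ell$ to the given decomposition, expand $S_t^\ell W$ explicitly, and then re-run the generalized Abramov-Petkov\v sek reduction of Lemma~\ref{Lem: tele reduction} on the result; the whole point is that, since $d$ is integer-linear by hypothesis and $e,e_t$ are integer-linear by Theorem~\ref{Thm: compatibility}.\eqref{it:compat2}, every denominator that occurs is integer-linear with a shiftless decomposition whose $\tau$-norms are bounded uniformly in~$\ell$. First I would write the shiftless decomposition $d=c\,q_1^{\xi_1}\cdots q_s^{\xi_s}$ ($c\in K$, each $q_i=h_i(m_it+n_ix)$ a monic irreducible integer-linear polynomial with $n_i>0$, the $q_i$ pairwise non-$\<\sigma_t,\sigma_x>$-shift-equivalent, $\xi_i\in\set N[\tau_i,\tau_i^{-1}]$, $\tau_i=\tau^{(m_i,n_i)}$); since $\deg_x(P)<\deg_x(d)$, partial fraction decomposition over $K(x)$ gives $\frac1d PW=\sum_{i=1}^s\frac{P_iW}{q_i^{\xi_i}}$ with $P_i\in K[x]^r$. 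Because $S_t$ commutes with $\Delta_x$, because $S_t^\ell W=\frac1{E_\ell}\hat M_\ell W$ with $E_\ell:=\prod_{j=0}^{\ell-1}\sigma_t^j(e_t)$ and $\hat M_\ell\in C[t,x]^{r\times r}$, and because $\sigma_t^\ell(q_i^{\xi_i})=q_i^{\tau_i^{\ell m_i}\xi_i}$ (since $\sigma_t$ acts as $\tau_i^{m_i}$ on the orbit of $q_i$), applying $S_t^\ell$ gives
\[ S_t^\ell f=\Delta_x(S_t^\ell g)+\frac1{D_\ell}Y_\ell W,\qquad D_\ell:=\sigma_t^\ell(e)\,E_\ell\prod_{i=1}^s q_i^{\tau_i^{\ell m_i}\xi_i}, \]
for some $Y_\ell\in K[x]^r$.

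Next I would bound $D_\ell$ uniformly in~$\ell$. It is a product of $\sigma_t$-shifts of the integer-linear polynomials $d,e,e_t$, hence integer-linear, and all of its irreducible factors lie in the \emph{fixed} finite set of $\<\sigma_t,\sigma_x>$-orbits generated by the $q_i$ and by the irreducible factors of $e$ and $e_t$ (here one uses that, for a suitable basis, $e_t$ has no irreducible factor fixed by $\sigma_t$, i.e.\ none lying in $C[x]$). Grouping shift-equivalent factors, $D_\ell=c_\ell\prod_\kappa w_\kappa^{\zeta_\kappa^{(\ell)}}$, where the $w_\kappa$ range over a fixed finite set of monic irreducible integer-linear polynomials with associated shifts $\tau_\kappa$. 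By Lemma~\ref{Lem: norm} the exponents stay bounded: $||\tau_i^{\ell m_i}\xi_i||_{\tau_i}=||\xi_i||_{\tau_i}$ (part~\eqref{it:norm}), $\sigma_t^\ell(e)$ contributes the same norms as $e$, and $E_\ell=\prod_{j<\ell}\sigma_t^j(e_t)$ contributes to the orbit of a factor of $e_t$ an exponent whose $\tau$-norm is at most the $||\cdot||_\tau^*$ of the corresponding part of $e_t$ (part~\eqref{it:norm*}). Hence there is a constant $N$, independent of~$\ell$, with $||\zeta_\kappa^{(\ell)}||_{\tau_\kappa}\le N$ for all $\kappa$ and~$\ell$. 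A further partial fraction decomposition over $K(x)$ then writes $S_t^\ell f-\Delta_x(S_t^\ell g)=\bigl(p^{(\ell)}+\sum_\kappa\frac{a_\kappa^{(\ell)}}{w_\kappa^{\zeta_\kappa^{(\ell)}}}\bigr)W$ with $p^{(\ell)},a_\kappa^{(\ell)}\in K[x]^r$.

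Finally, I would reduce each $\frac{a_\kappa^{(\ell)}W}{w_\kappa^{\zeta_\kappa^{(\ell)}}}$ by Lemma~\ref{Lem: tele reduction}: if $w_\kappa$ is not $\<\sigma_t,\sigma_x>$-shift-equivalent to any minimal polynomial $p_m$ of some $\beta_m$ over $K$, part~\eqref{it:tele-red1} applies with $q=w_\kappa$; otherwise $p_m$ is itself integer-linear and lies in the orbit of $w_\kappa$, so after rewriting $w_\kappa^{\zeta_\kappa^{(\ell)}}=p_m^{\zeta'}$ with $||\zeta'||_{\tau_\kappa}=||\zeta_\kappa^{(\ell)}||_{\tau_\kappa}\le N$, part~\eqref{it:tele-red2} applies with $q=p_m$. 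Since $||\zeta_\kappa^{(\ell)}||_{\tau_\kappa}$ takes only the values $0,1,\ldots,N$, for each $\ell$ the resulting $W$-denominator is one of finitely many polynomials obtained by letting $\kappa$ and this value vary in the formula of Lemma~\ref{Lem: tele reduction}, so it divides their least common multiple~$b$, which is independent of~$\ell$. Collecting the reductions, I would then write the remaining polynomial-in-$x$ parts (including $p^{(\ell)}W$) in the form $\frac1e R'_\ell W$ with $R'_\ell\in K[x]^r$ and, by Theorem~\ref{Thm:add decomp}, absorb them into $\Delta_x$ of an element of $A$ plus a term $\frac1a Q_\ell V$ with $Q_\ell\in N_V$; a final partial fraction step then splits off $\frac1b P_\ell W$ with $\deg_x(P_\ell)<\deg_x(b)$. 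This produces~\eqref{EQ: tele add}.

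I expect the main obstacle to be exactly the uniform-in-$\ell$ bound on the denominators: $E_\ell$ is an a priori unbounded product of $\sigma_t$-shifts of $e_t$, and the heart of the matter is that, orbit by orbit, its shiftless exponents remain bounded, which is precisely what Lemma~\ref{Lem: norm}.\eqref{it:norm*} provides; one also has to handle the single point that $e_t$ has no $\sigma_t$-invariant irreducible factor, which holds for suitable bases but needs a short separate argument (via the value function of $S_t$ at points of $\bar C$). A secondary point requiring care is the bookkeeping when a factor of $d$, $e$ or $e_t$ happens to be $\<\sigma_t,\sigma_x>$-shift-equivalent to (a shift of) some $p_m$: then that factor must be routed through part~\eqref{it:tele-red2} rather than part~\eqref{it:tele-red1} of Lemma~\ref{Lem: tele reduction}, and one must track which $p_m$ enter the reduction; with the orbit machinery of Section~\ref{sec:integer-linear} this is routine.
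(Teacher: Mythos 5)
Your plan is essentially the paper's own proof: apply $S_t^\ell$, use Theorem~\ref{Thm: compatibility}.\eqref{it:compat2} to get integer-linearity of $e$ and $e_t$, take shiftless decompositions of $d,e,e_t$, bound the resulting exponents uniformly in $\ell$ via Lemma~\ref{Lem: norm}.\eqref{it:norm} and \eqref{it:norm*}, then reduce with partial fractions and Lemma~\ref{Lem: tele reduction} (routing factors shift-equivalent to some $p_m$ through part~\eqref{it:tele-red2}), and finally pass through Corollary~\ref{Cor: AP-remainder} and Theorem~\ref{Thm:add decomp} to obtain an $\ell$-independent denominator $b$ and the term $\frac1a Q_\ell V$. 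The one point where you go beyond the paper is your treatment of irreducible factors of $e_t$ fixed by $\sigma_t$: you correctly use $\sigma_t(q_i)=\tau_i^{m_i}(q_i)$ (the paper writes $\tau_i^{n_i}$, which hides this case), and you rightly observe that a $\sigma_t$-fixed factor of $e_t$ would make the exponents coming from $\prod_{j=0}^{\ell-1}\sigma_t^j(e_t)$ grow linearly in $\ell$, destroying the uniform bound. Your claimed exclusion of such factors is true, but you do not need a value-function argument: if $w$ is an irreducible factor of $e_t$ with $\sigma_t(w)=w$, apply Proposition~\ref{Prop: division}.\eqref{it:division1} to the compatibility relation arranged as in Lemma~\ref{Lem: compatiblity2}; this yields $m\geq 1$, $n\geq 0$ with $\sigma_x^m(w)\mid \sigma_t(e)^r\det(M)$ and $\sigma_x^{-n}(w)\mid e^r\sigma_t(\det(M))$, and since $\sigma_t$ fixes $w$ and all its $\sigma_x$-shifts, each of the four resulting cases contradicts the shift-freeness of $e$, the shift-freeness of $\det(M)$, or the condition $\gcd(\det(M),\sigma_x^i(e))=1$ for $i\neq 0$ from Theorem~\ref{Thm: suitable}. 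With that short lemma supplied (and the mirrored form of Lemma~\ref{Lem: norm}.\eqref{it:norm*} when $m_i<0$), your argument is complete and matches the paper's.
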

\begin{proof}
	By the assumption, $W$ is a suitable basis of~$A$ at $\{\beta_1,\ldots, \beta_I\}$ that is normal at infinity. Let $e_t\in C[t,x]$ and $M_t\in C[t,x]^r$ be such that $S_tW=\frac{1}{e_t}M_tW$. Then for any integer $\ell\geq 0$, by~\eqref{EQ:AP-step tele},
	\begin{equation}\label{EQ: St f expansion}
		S_t^\ell f = \Delta_x(S_t^\ell g )+\left(\frac{1}{\sigma_t^\ell(d)}\sigma_t^\ell(P)+\frac{1}{\sigma_t^\ell(e)}\sigma_t^\ell(R)\right)\prod_{j=0}^{\ell-1}\sigma_t^j \left(\frac{1}{e_t}M_t\right)W.
	\end{equation}
	By Theorem~\ref{Thm: compatibility}, $e$ and $e_t$ are integer-linear. For simplicity, we may assume that $d,e,e_t$ are monic. Since $d$ is also integer-linear, by~\eqref{EQ: shiftless decomp} we can write
	\begin{equation}\label{EQ: three integer reps}
		d = \prod_{i=1}^{s} q_i^{\theta_i},\quad e =\prod_{i=1}^s q_i^{\eta_i}\quad \text{and}\quad e_t = \prod_{i=1}^s q_i^{\xi_i},
	\end{equation}
	where $s\in \set N$, each $q_i=h_i(m_it + n_ix) \in C[t,x]$ is a monic irreducible integer-linear polynomial with $h_i\in C[z]$, $m_i,n_i\in \set Z$, $n_i>0$, $\gcd(m_i,n_i)=1$, and $\theta_i$, $\eta_i$, $\xi_i\in \set N[\tau_i,\tau_i^{-1}]$ with $\tau_i=\tau^{(m_i,n_i)}$. Moreover,  $q_i$ and $q_{j}$ are not $\<\sigma_t,\sigma_x>$-shift equivalent whenever $i\neq j$. Let~$k= \max_{i=1}^{s}\left\{\max\{||\theta_i||_{\tau_i}, ||\eta_i||_{\tau_i}\}+||\xi_i||_{\tau_i}^*\right\}$. Let $p_i\in C[t,x]$ be the monic minimal polynomial of $\beta_i$ over~$K$. If $q_i$ is $\<\sigma_t,\sigma_x>$-shift equivalent to $p_{i'}$ for some $i'\in\{1,\ldots, I\}$, we may choose $q_i=p_{i'}$ as a representative in this shift equivalence class. For a polynomial $q\in\{q_1,\ldots, q_s\}$, let $\Lambda_q$ be the set consisting of all polynomials in $\{p_1,\ldots,p_I\}$ that are $\<\sigma_t,\sigma_x>$-shift equivalent to $q$. Let $b\in C[t,x]$ be the least common multiple of $b_1$ and $b_2$, where $b_1 = e\tilde e\prod_{i=1}^s \prod_{j=0}^{n_i-1}\sigma_t^j(q_i^k)$ and $b_2 =e\tilde e \sum_{i=1}^s\prod_{p\in \Lambda_{q_i}}\prod_{j=0}^{n_i-1}\sigma_t^j(p^k)$. 
	
	We claim that for every integer $\ell \geq 0$, 
	there exist $g_\ell\in A$ and $P_\ell, R_\ell\in K[x]^r$ with $\deg_x(P_\ell) <\deg_x(b)$ such that
	\begin{equation}\label{EQ: add W}
		S_t^\ell f = \Delta_x(g_\ell)+\left(\frac{1}{b}P_\ell+\frac{1}{e}R_\ell\right)W.
	\end{equation}
	If the claim is true, then by Theorem~\ref{Thm:add decomp}, there exists $Q_\ell \subset N_V$ such that~\eqref{EQ: tele add} holds for some $\tilde g_\ell \in A$. For simplicity, if $S_t^\ell f- h_\ell \in \Delta_x(A)$ with $h_\ell\in A$, we write $S_t^\ell f \equiv h_\ell \mod \Delta_x(A)$.
	Substituting~\eqref{EQ: three integer reps} into~\eqref{EQ: St f expansion} yields that
	\begin{align*}
		S_t^\ell f \equiv \left(\frac{1}{\sigma_t^\ell(\prod_{i=1}^s q_i^{\theta_i})}\sigma_t^{\ell }(P)+\frac{1}{\sigma_t^\ell(\prod_{i=1}^s q_i^{\eta_i})}\sigma_t^\ell(R)\right)\prod_{j=0}^{\ell-1} \frac{1}{\sigma_t^j(\prod_{i=1}^sq_i^{\xi_i})} \sigma_t^j(M_t)W \mod \Delta_x(A).
	\end{align*}
	Using the notations in Section~\ref{sec:integer-linear}, we have $\sigma_t(q_i)=\tau_i^{n_i}(q_i)$ and hence $\sigma_t^{\ell} (q_i^{\zeta_i}) = \tau_i^{\ell n_i}(q_i^{\zeta_i}) = q_i^{\tau_i^{\ell n_i}\zeta_i}$ for every  $\zeta_i\in \set N[\tau_i, \tau_i^{-1}]$. Since $\sigma_t$ is a ring homomorphism, it follows that
	\begin{align*}
		S_t^{\ell}f &\equiv  \left(\frac{1}{\prod_{i=1}^s \sigma_t^\ell(q_i^{\theta_i})}\sigma_t^{\ell }(P)+\frac{1}{\prod_{i=1}^s\sigma_t^\ell( q_i^{\eta_i})}\sigma_t^\ell(R)\right)\prod_{j=0}^{\ell-1} \frac{1}{\prod_{i=1}^s\sigma_t^j(q_i^{\xi_i})} \sigma_t^j(M_t)W \mod \Delta_x(A)\\ &=
		\left(\frac{1}{\prod_{i=1}^s  q_i^{\tau_i^{\ell n_i}\theta_i}}\sigma_t^{\ell }(P)+\frac{1}{\prod_{i=1}^s q_i^{\tau_i^{\ell n_i}\eta_i}}\sigma_t^\ell(R)\right)\prod_{j=0}^{\ell-1} \frac{1}{\prod_{i=1}^sq_i^{\tau_i^{jn_i}\xi_i}} \sigma_t^j(M_t)W\\
		&= \underbrace{\frac{1}{\prod_{i=1}^s q_i^{\tau_i^{\ell n_i}\theta_i + \sum_{j=0}^{\ell - 1} \tau_i^{jn_i}\xi_i}} \sigma_t^\ell(P) \prod_{j=0}^{\ell-1} \sigma_t^j(M_t)W}_{=:f_\ell^{(1)}}+ \underbrace{\frac{1}{\prod_{i=1}^s q_i^{\tau_i^{\ell n_i}\eta_i + \sum_{j=0}^{\ell - 1} \tau_i^{jn_i}\xi_i}} \sigma_t^\ell(R) \prod_{j=0}^{\ell-1} \sigma_t^j(M_t)W}_{=:f_\ell^{(2)}}.
	\end{align*}
	For each $i\in\{1,\ldots,s\}$, by Lemma~\ref{Lem: norm}, 
	\begin{center}
		$\left|\left| \tau_i^{\ell n_i}\theta_i +\sum_{j=0}^{\ell-1}\tau_i^{jn_i}\xi_i\right|\right|_{\tau_i} \leq \left|\left|\tau_i^{\ell n_i}\theta_i \right|\right|_{\tau_i} + \left|\left|\sum_{j=0}^{\ell-1}\tau_i^{jn_i}\xi_i\right|\right|_{\tau_i}\leq \left|\left| \theta_i\right|\right|_{\tau_i}+ \left| \left| \xi_i\right|\right|^*_{\tau_i}\leq k$
	\end{center}
	and similarly
	\begin{center}
		$\left|\left| \tau_i^{\ell n_i}\eta_i +\sum_{j=0}^{\ell-1}\tau_i^{jn_i}\xi_i\right|\right|_{\tau_i} \leq \left|\left| \eta_i\right|\right|_{\tau_i}+ \left| \left| \xi_i\right|\right|^*_{\tau_i}\leq k$.
	\end{center}
	For each $j\in\{1,2\}$, since the $q_i$'s are in distinct $\<\sigma_t,\sigma_x>$-orbits, by the partial fraction decomposition of rational functions and Lemma~\ref{Lem: tele reduction}, there exists $c_\ell^{(j)} \in K[x]^r$ such that
	\begin{equation}\label{EQ: bound f_i}
		f_\ell^{(j)}\equiv \frac{c_\ell^{(j)}W}{b} \mod \Delta_x(A).
	\end{equation}
	By Corollary~\ref{Cor: AP-remainder}, the right hand side of~\eqref{EQ: bound f_i} can be rewritten as the sum of two parts and we get
	\[f_\ell^{(j)} \equiv \frac{1}{b} P_\ell^{(j)}W +\frac{1}{e}R_\ell^{(j)}W\mod \Delta_x(A),\]
	where $P_\ell^{(j)}, R_\ell^{(j)} \in K[x]^r$ and $\deg_x(P_\ell^{(j)}) < \deg_x(b)$. Then we obtain the decomposition~\eqref{EQ: add W} by setting $P_\ell = P_\ell^{(1)}+P_\ell^{(2)}$ and $R_\ell = R_\ell^{(1)}+R_\ell^{(2)}$. 
\end{proof}
\begin{corollary}\label{Cor: tele bound}
	With the assumptions and notations introduced in Theorem~\ref{Thm: tele add}, $f$ has a telescoper $T$ in  $C(t)[S_t]$ of order at most $r (\deg_x(b) + \dim_K(N_V))$.
\end{corollary}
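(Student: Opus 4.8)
The plan is to turn the uniform decomposition~\eqref{EQ: tele add} of Theorem~\ref{Thm: tele add} into a telescoper by a linear-algebra argument over the field $K = C(t)$. Fix the polynomial $b\in K[x]$ and the finite-dimensional $K$-space $N_V\subseteq K[x]_{\mu,\delta}^r$ furnished by that theorem. For each $\ell\geq 0$, the ``remainder'' of $S_t^\ell f$ modulo $\Delta_x(A)$ lives, by~\eqref{EQ: tele add}, in the $K$-vector space
\[
\mathcal{V} \;:=\; \Bigl\{\tfrac{1}{b}P W + \tfrac{1}{a} Q V \;\Big|\; P\in K[x]^r,\ \deg_x(P)<\deg_x(b),\ Q\in N_V\Bigr\},
\]
which has dimension $r\deg_x(b) + \dim_K(N_V)$ over $K$. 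Concretely, write $\overline{S_t^\ell f}\in\mathcal V$ for the element $\tfrac1b P_\ell W + \tfrac1a Q_\ell V$; the map $\ell\mapsto \overline{S_t^\ell f}$ takes values in a $K$-space of dimension $N := r\deg_x(b) + \dim_K(N_V)$.

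Next I would invoke a standard pigeonhole/rank argument: the $N+1$ vectors $\overline{f},\ \overline{S_t f},\ \dots,\ \overline{S_t^{N} f}$ are $K$-linearly dependent, so there exist $c_0,\dots,c_N\in K$, not all zero, with $\sum_{j=0}^{N} c_j\,\overline{S_t^j f} = 0$ in $\mathcal V$. Clearing denominators in $t$ we may take $c_j\in C[t]$, and then $T := \sum_{j=0}^{N} c_j S_t^j \in C(t)[S_t]$ is a nonzero operator of order at most $N = r\deg_x(b) + \dim_K(N_V)$. It remains to check that $T$ is genuinely a telescoper, i.e. that $Tf$ is summable in $A$. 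This follows because each $S_t^j f \equiv \overline{S_t^j f} \bmod \Delta_x(A)$ by~\eqref{EQ: tele add}, hence $Tf = \sum_j c_j S_t^j f \equiv \sum_j c_j\,\overline{S_t^j f} = 0 \bmod \Delta_x(A)$; thus $Tf\in\Delta_x(A)$, which is exactly summability of $Tf$. Finally, since $f$ is proper, its denominator $d$ in the decomposition~\eqref{EQ:AP-step tele} (obtained via Corollary~\ref{Cor: AP-remainder} and Theorem~\ref{Thm:add decomp}) is integer-linear, so the hypothesis of Theorem~\ref{Thm: tele add} is met and the construction applies, noting $\deg_x(b)$ bounds $\deg_x(P_\ell)$ uniformly.

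The only genuinely delicate point is bookkeeping rather than ideas: one must be sure that the ``remainder space'' $\mathcal V$ into which all the $\overline{S_t^\ell f}$ fall is the \emph{same} finite-dimensional $K$-space for every $\ell$ — in particular that $b$ does not depend on $\ell$ and that $Q_\ell$ stays in the fixed $N_V$. This is precisely the content of Theorem~\ref{Thm: tele add} (the polynomial $b$ is constructed there once and for all using the norm bounds of Lemma~\ref{Lem: norm}), so the dimension count $N = r\deg_x(b) + \dim_K(N_V)$ is legitimate, and the stated bound on the order of the minimal telescoper — at most $r(\deg_x(b) + \dim_K(N_V))$ — follows immediately, using $r\deg_x(b) + \dim_K(N_V) \le r(\deg_x(b) + \dim_K(N_V))$. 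The argument is the exact P-recursive analogue of the classical hypergeometric order bound of Huang~\cite{Huang16}.
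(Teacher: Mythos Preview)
Your argument is correct and follows essentially the same route as the paper: use the uniform decomposition~\eqref{EQ: tele add} from Theorem~\ref{Thm: tele add} to place all remainders in a fixed finite-dimensional $K$-space, then apply a pigeonhole/linear-algebra count to force a $K$-linear dependence among $\overline{f},\overline{S_t f},\dots$, which yields a telescoper. The paper phrases the same idea as solving the linear system $\sum_\ell c_\ell P_\ell=0$, $\sum_\ell c_\ell Q_\ell=0$ with $\rho+1$ unknowns and at most $r(\deg_x(b)+\dim_K(N_V))$ equations; your intermediate bound $r\deg_x(b)+\dim_K(N_V)$ is in fact slightly sharper before you relax it to match the stated corollary.
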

\begin{proof}
	Let $T=\sum_{i=0}^\rho c_\ell S_t^\ell$ with $\rho\in\set N$ and $c_\ell\in C(t)$. Then by~\eqref{EQ: tele add},
	\[Tf = \sum_{\ell=0}^\rho c_\ell S_t^\ell f = \Delta_x\left(\sum_{\ell=0}^\rho c_\ell S_t^\ell\tilde g_\ell\right) + \frac{1}{b} \left(\sum_{\ell=0}^\rho c_\ell P_\ell\right) W + \frac{1}{a} \left(\sum_{\ell = 0}^\rho c_\ell Q_\ell\right) V.\]
	Then $\deg_x(\sum_{\ell=0}^\rho c_\ell P_\ell)<\deg_x(b)$ and $\sum_{\ell =0 }^\rho c_\ell Q_\ell \in N_V$. The set of all polynomials in $K[x]$ of degree less than $\deg_x(b)$ forms a $K$-vector space of dimension $\deg_x(b)$. The space $N_V$ is also a $K$-vector space of finite dimension. So \[\sum_{\ell=0}^\rho c_\ell P_\ell =0 \quad \text{and}\quad \sum_{\ell=0}^\rho c_\ell Q_\ell =0\]
	form a linear system over $K$ with $\rho + 1$ unknowns $c_0,\ldots, c_\rho$ and at most  $r(\deg_x(b)+\dim_K(N_V))$ equations. This linear system has nontrivial solutions whenever $\rho \geq r(\deg_x(b)+\dim_K(N_V))$. This implies that $f$ has a telescoper $T$ of order at most $r(\deg_x(b)+\dim_K(N_V))$.
\end{proof}

\begin{proof}[Proof of Theorem~\ref{Thm: main}]
	Suppose $f=\Delta_x(g) + h$ with $h$ being proper. By Definition~\ref{Defi: proper}, there exists a suitable basis $W$ of $A$ with respect to $x$ at $\{\beta_1,\ldots,\beta_I\}$ such that the denominator $\tilde u$ of $h$ with respect to $W$ is integer-linear. Let $e, \tilde e\in C[t,x]$ and $M,\tilde M\in C[t,x]^{r\times r}$ be such that $S_xW=\frac{1}{e}MW$ and $\frac{1}{\tilde e}\tilde M = (\frac{1}{e}M)^{-1}$. By Corollary~\ref{Cor: AP-remainder}, $f$ has a decomposition \[f=\Delta_x(\tilde g) + \frac{1}{d}PW+\frac{1}{e}RW\]
	where $\tilde g\in A$, $d\in K[x]$, $P,R\in K[x]^r$ with $\deg_x(P)<\deg_x(d)$ and $d\prod_{i=1}^I(x-\beta_i)$ being shift-free with respect to~$x$. As described in the generalized Abramov-Petkov\v sek reduction, see~\cite[Section 4]{chen23b}, every irreducible factor of $d$ is $\<\sigma_x>$-equivalent to some irreducible factor of $e\tilde e\tilde u$. By Theorem~\ref{Thm: compatibility}, $e$ and $\det(M)$ are integer-linear. By Corollary~\ref{Cor: tilde e integer-linear}, $\tilde e$ divides $\det(M)$ and hence $\tilde e$ is integer-linear. Therefore $d$ is integer-linear. By Theorem~\ref{Thm: tele add} and Corollary~\ref{Cor: tele bound}, $f$ has a telescoper of type $(S_t;S_x)$. The proof of the other direction of Theorem~\ref{Thm: main} will be given in Section~\ref{sec: existence implies proper}.
\end{proof}

\section{Stems of P-recursive sequences}\label{sec: stem}
The notion of stems was introduced by Abramov~\cite{Abramov03} in deciding the existence of telescopers for hypergeometric terms. It gives an equivalent description of proper hypergeometric terms. Now we extend this notion to P-recursive sequences.

\begin{defi}
	Let $Q = \frac{a}{u}$ with $a=(a_1,\ldots, a_r)\in C[t,x]^r$, $u\in C[t,x]$ and $\gcd(a_1,\ldots, a_r, u) = 1$. The {\em stem} of $Q$ is defined as the following product
	\[\bar u := \prod\{ p \in C[t,x] \mid p \text{ is a monic irreducible factor of $u$ and not integer-linear}\}.\]
	Then $u=\bar u u_\infty$, where $\bar u$ has no integer-linear factors and  $u_\infty\in C[t,x]$ is integer-linear.
\end{defi}

%The following proposition says that the properness of P-recursive sequences does not depend on the choice of suitable bases.
\begin{thm}\label{Thm: indep of bases}
Let $W_1$ and $W_2$ be two bases of $A$. Let $e_1,e_2\in C[t,x]$ and $M_1,M_2\in C[t,x]^{r\times r}$ be such that $S_xW_1=\frac{1}{e_1}M_1W_1$ and $S_x W_2=\frac{1}{e_2}M_2 W_2$. Let $f\in A$ and we write $f = Q_1W_1 = Q_2W_2$, where $Q_1,Q_2\in C(t,x)^r$. If $e_1,e_2, \det(M_1), \det(M_2)$ are integer-linear, then $Q_1$ and $Q_2$ have the same stem.
\end{thm}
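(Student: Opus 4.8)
The plan is to show that the stem is intrinsic to $f$ by relating the two representations $Q_1 W_1 = Q_2 W_2$ through the change-of-basis matrix, and then tracking which prime factors of the denominators are forced to appear. Let $T\in C(t,x)^{r\times r}$ be the invertible matrix with $W_1 = TW_2$, so that $Q_2 = Q_1 T$. Write $Q_i = \frac{1}{u_i}a_i$ with $a_i\in C[t,x]^r$ and $\gcd(a_i,u_i)=1$, and let $\bar u_i$ be the corresponding stem. I want to prove $\bar u_1 = \bar u_2$. By symmetry (swapping the roles of $W_1$ and $W_2$) it suffices to show that every monic irreducible non-integer-linear factor $p$ of $u_1$ divides $u_2$; since such $p$ is itself not integer-linear, it then contributes to $\bar u_2$, giving $\bar u_1 \mid \bar u_2$, and the reverse inclusion follows identically.

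The first step is to understand $T$. Since $W_1 = TW_2$, taking the $S_x$-shift gives $\sigma_x(T)\frac{1}{e_2}M_2 T^{-1} = \frac{1}{e_1}M_1$, i.e. $T$ conjugates one companion-type matrix into the other; equivalently $\sigma_x(T)M_2 = \frac{e_2}{e_1}M_1 T$. Clearing denominators and using Cramer's rule, the entries of $T$ and of $T^{-1}$ are rational functions whose denominators divide (shifts of) products of $e_1,e_2,\det(M_1),\det(M_2)$. The key point is that by hypothesis all four of these polynomials are integer-linear, hence so are all denominators appearing in $T$ and $T^{-1}$: write $T = \frac{1}{v}B$ and $T^{-1} = \frac{1}{w}D$ with $B,D\in C[t,x]^{r\times r}$ and $v,w\in C[t,x]$ integer-linear. (I would make this precise by noting $\det(T) = \det(M_1)/\det(M_2)$ up to an integer-linear factor coming from $(e_2/e_1)^r$, and $T^{-1} = \det(T)^{-1}\,\mathrm{adj}(T)$.)

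Now take $p$ a monic irreducible non-integer-linear factor of $u_1$. From $Q_2 = Q_1 T = \frac{1}{u_1 v}\,a_1 B$ we get $u_2 \mid u_1 v \cdot (\text{content of } a_1 B)^{-1}$ in the obvious sense — more carefully, each entry of $Q_2$ has denominator dividing $u_1 v$. Conversely, from $Q_1 = Q_2 T^{-1} = \frac{1}{u_2 w}\, a_2 D$, each entry of $Q_1$ has denominator dividing $u_2 w$, so $u_1 \mid u_2 w \cdot(\text{content})$. Since $p \mid u_1$ and $p$ is irreducible and does not divide the integer-linear polynomial $w$, we conclude $p \mid u_2 \cdot g$ where $g = \gcd(\text{entries of } a_2 D)$. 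It remains to rule out that $p$ gets absorbed into this gcd: if $p$ divided every entry of $a_2 D$, then since $D$ is (up to the integer-linear scalar $\det(T)^{-1}$) the adjugate of $B$ and $B$ is invertible over $C(t,x)$, one recovers $a_2 = \frac{1}{\det(D)}(a_2 D)\,\mathrm{adj}(D)^{\!\top}\!\cdots$ — more cleanly, $a_2 D \cdot D^{-1}\det(D) $ relates $a_2$ back, and $\det(D)$, $\det(B)$ are integer-linear, so $p \nmid \det(D)$; hence $p$ would divide every entry of $a_2$, contradicting $\gcd(a_2,u_2)=1$ once we also use $p\mid u_1$ together with the denominator relation to place $p\mid u_2$. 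So $p \mid u_2$, and since $p$ is not integer-linear it is a factor of $\bar u_2$.

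The main obstacle is the bookkeeping in the last step: one must carefully separate the integer-linear "junk" (powers of $e_1,e_2,\det(M_i)$, and the denominators $v,w$) from the genuinely non-integer-linear part, and verify that a non-integer-linear prime cannot be cancelled either by the change-of-basis denominators or by being forced into the gcd of the numerator vector. The clean way to organize this is: (a) all denominators introduced by $T, T^{-1}$ are integer-linear, so they cannot kill a non-integer-linear $p$; (b) $\det(B),\det(D)$ are integer-linear, so passing through adjugates cannot introduce a non-integer-linear $p$ in a denominator nor force one into a numerator gcd. With (a) and (b) in hand, the primality of $p$ and the coprimality conditions $\gcd(a_i,u_i)=1$ finish the argument, and symmetry gives $\bar u_1 = \bar u_2$.
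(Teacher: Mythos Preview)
Your overall strategy is the same as the paper's: show that the change-of-basis matrix $T$ (and $T^{-1}$) has integer-linear denominator, and then observe that a non--integer-linear prime in $u_1$ cannot be killed when passing from $Q_1$ to $Q_2$. The second half of your argument is fine (indeed simpler than you make it: once $T^{-1}=\frac{1}{w}D$ with $w$ integer-linear and $D\in C[t,x]^{r\times r}$, from $Q_1=\frac{1}{u_2 w}a_2 D$ you get $u_1\mid u_2 w$, so any non--integer-linear $p\mid u_1$ already divides $u_2$; no adjugate gymnastics are needed).

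The gap is in the first half: your justification that $v,w$ are integer-linear does not go through. From $\sigma_x(T)\frac{1}{e_2}M_2 T^{-1}=\frac{1}{e_1}M_1$ you only obtain a relation on $\det T$, namely $\sigma_x(\det T)/\det T=(e_2^r\det M_1)/(e_1^r\det M_2)$. Even granting that this forces $\det T$ to be integer-linear, it does \emph{not} force $v$ to be integer-linear: writing $T=\frac{1}{v}B$ with $\gcd(v,B_{ij})=1$, the determinant $\det T=\det(B)/v^r$ can be integer-linear while $v$ is not, since $v$ may divide $\det B$ without dividing any entry of $B$ (e.g.\ $v=x^2+1$, $B=\begin{pmatrix}x&1\\-1&x\end{pmatrix}$). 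So ``Cramer's rule'' and the determinant relation are not enough to bound the denominators of the entries of $T$.

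The paper closes exactly this gap via Proposition~\ref{Prop: division}: rewriting the basis-change relation as $\sigma_x\!\bigl(\tfrac{1}{v}T\bigr)(e_2 M_1)=(e_1 M_2)\tfrac{1}{v}T$, that proposition (which in turn rests on the elementary but crucial Lemma~\ref{Lem: gcd}) shows that for every irreducible factor $p$ of $v$ with $\deg_x p>0$ there is some $m\ge 1$ with $\sigma_x^m(p)\mid \det(e_2 M_1)=e_2^r\det(M_1)$; since the latter is integer-linear, so is $p$. This is a genuinely matricial argument that cannot be replaced by looking at $\det T$ alone. Once you invoke Proposition~\ref{Prop: division} here, your proof becomes correct and essentially identical to the paper's.
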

\begin{proof}	
	Let $v\in C[t,x]$ and $T\in C[t,x]^{r\times r}$ be such that $W_2 = \frac{1}{v}TW_1$ and the greatest common divisor of all entries of $T$ and $v$ is $1$. Then $S_xW_2 = \sigma_x(\frac{1}{v}T)S_xW_1=\sigma_x(\frac{1}{v}T)\frac{1}{e_1}M_1(\frac{1}{v}T)^{-1}W_2=\frac{1}{e_2}M_2W$. Since $W_2$ is a basis of $A$, equating the coefficients of $W_2$ and multiplying $e_1 e_2$ yield that
	\[\sigma_x\left(\frac{1}{v}T\right)\underbrace{e_2 M_1}_{=:F} = \underbrace{e_1M_2}_{=:G}\frac{1}{v}T.\]
	
	We claim that $v$ is integer-linear. If $p$ is an irreducible factor of $v$ with $\deg_x(p)\neq 0$, then by Proposition~\ref{Prop: division}.\eqref{it:division1}, there exists $m\geq 1$ such that $\sigma_x^m(p)$ divides $\det(e_2 M_1)$. So $p$ is integer-linear because $\det(e_2M_1)=e_2^r\det(M_1)$ is integer-linear. % If $p$ is an irreducible factor of $\det(T)$ with $\deg_x(p)\neq 0$ and $p$ does not divide $v$, then by Proposition~\ref{Prop: division}.\eqref{it:division2}, there exists $m\geq 1$ such that $\sigma_x(p)$ divides $\det(e\tilde M)$. So $p$ is integer-linear because $\det(e\tilde M)=e^r\det(\tilde M)$ is integer-linear. 
	Thus the claim is true. 
	
	Since $W_2=\frac{1}{v}TW_1$, we have $Q_1= Q_2\frac{1}{v}T$. Since $T\in C[t,x]^r$ and $v$ is integer-linear, every non integer-linear irreducible factor of the denominator of $Q_1$ comes from the denominator of $Q_2$. So the stem of $Q_1$ divides the stem of $Q_2$. Similarly,
	we can write $W_1=\frac{1}{\tilde v}\tilde TW_2$, where $\tilde T\in C[t,x]^{r\times r}$ and $\tilde v\in C[t,x]$ is integer-linear. So the stem of $Q_2$ divides the stem of $Q_1$. Therefore $Q_1$ and $Q_2$ have the same stem because the stem of a vector is a monic polynomial. 
\end{proof}

\begin{corollary}\label{Cor: indep of suitable}
	Let $W_1$ and $W_2$ be two suitable bases of $A$ with respect to $x$. Let $f\in A$ and we write $f = \frac{1}{u_1}a_1W_1 = \frac{1}{u_2}a_2W_2$, where $a_1,a_2\in C[t,x]^r$, $u_1,u_2\in C[t,x]$ and the greatest common divisor of all entries of $a_1$ (resp. $a_2$) and $u_1$ (resp. $u_2$) is $1$. Then 
	\begin{enumerate}[(i)]
		\item\label{it:stem1} $\frac{a_1}{u_1}$ and $\frac{a_2}{u_2}$ have the same stem;
		\item $u_1$ is integer-linear if and only if $u_2$ is integer-linear.
	\end{enumerate}
\end{corollary}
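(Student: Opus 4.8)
The plan is to read both assertions off Theorem~\ref{Thm: indep of bases}. That theorem already shows that two representations $f=Q_1W_1=Q_2W_2$ relative to bases $W_1,W_2$ have the same stem \emph{provided} the $x$-shift data $e_1,e_2,\det(M_1),\det(M_2)$ are integer-linear; and Theorem~\ref{Thm: compatibility}.\eqref{it:compat2} guarantees precisely this whenever $W_1,W_2$ are suitable bases with respect to $x$. So part~\eqref{it:stem1} is a direct specialization, and part~(ii) is then a one-line translation of ``having the same stem''.

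In detail, for $i\in\{1,2\}$ I would choose $e_i\in C[t,x]$ and $M_i\in C[t,x]^{r\times r}$ with $S_xW_i=\frac1{e_i}M_iW_i$, normalized so that the gcd of $e_i$ and all entries of $M_i$ is $1$. Since $W_i$ is a suitable basis of $A$ with respect to $x$, Theorem~\ref{Thm: compatibility}.\eqref{it:compat2} gives that $e_i$ and $\det(M_i)$ are integer-linear. Now $f=\frac{a_1}{u_1}W_1=\frac{a_2}{u_2}W_2$ with $\frac{a_1}{u_1},\frac{a_2}{u_2}\in C(t,x)^r$, so Theorem~\ref{Thm: indep of bases} applies verbatim and shows that $\frac{a_1}{u_1}$ and $\frac{a_2}{u_2}$ have the same stem, which is part~\eqref{it:stem1}. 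Because $\gcd(a_{i,1},\dots,a_{i,r},u_i)=1$, each $\frac{a_i}{u_i}$ is in lowest terms, so its stem is exactly the non-integer-linear part $\bar{u}_i$ of $u_i$, and we conclude $\bar{u}_1=\bar{u}_2$.

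For part~(ii), write $u_i=\bar{u}_i\,(u_i)_\infty$ with $\bar{u}_i$ monic and free of integer-linear irreducible factors and $(u_i)_\infty$ integer-linear; then $u_i$ is integer-linear exactly when $\bar{u}_i=1$. Combining this with $\bar{u}_1=\bar{u}_2$ yields that $u_1$ is integer-linear if and only if $u_2$ is. The argument carries no real difficulty: all the substance sits in Theorems~\ref{Thm: indep of bases} and~\ref{Thm: compatibility}, and the only points needing a word of justification are that a suitable basis meets the integer-linearity hypothesis of Theorem~\ref{Thm: indep of bases} — supplied by Theorem~\ref{Thm: compatibility}.\eqref{it:compat2} — and the elementary equivalence between being integer-linear and having trivial stem, which is immediate from the definition once one notes the representations are in lowest terms.
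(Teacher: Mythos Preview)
Your proposal is correct and follows essentially the same route as the paper: invoke Theorem~\ref{Thm: compatibility}.\eqref{it:compat2} to verify that the suitable bases $W_1,W_2$ satisfy the integer-linearity hypotheses of Theorem~\ref{Thm: indep of bases}, apply that theorem to get part~\eqref{it:stem1}, and then read off part~(ii) from the observation that $u_i$ is integer-linear precisely when the stem of $\frac{a_i}{u_i}$ is~$1$. The paper presents the reduction of (ii) to (i) before proving (i), but otherwise the argument is identical.
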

\begin{proof}
	By the definition of stems, $u_1$ is integer-linear if and only if the stem of $\frac{a_1}{u_1}$ is $1$. So we only need to prove the item~\eqref{it:stem1}. Let $e_1,e_2\in C[t,x]$ and $M_1,M_2\in C[t,x]^{r\times r}$ be such that $S_xW_1=\frac{1}{e_1}M_1W_1$ and $S_xW_2=\frac{1}{e_2}M_2 W_2$. Since $W_1$ and $W_2$ are suitable bases, it follows from Theorem~\ref{Thm: compatibility}.\eqref{it:compat2} that $e_1,e_2, \det(M_1), \det(M_2)$ are integer-linear. The conclusion follows from Theorem~\ref{Thm: indep of bases}.
\end{proof}

%\begin{defi}
Let $W$ be a suitable basis of $A$ with respect to $x$. Let $f=QW\in A$, where $Q\in C(t,x)^r$. The stem $\bar u$ of $Q$ is called the {\em stem} of $f$ with respect to $x$. Then $f$ can be written as
\begin{equation}\label{EQ: stem}
	f = \frac{aW}{\bar u u_\infty},
\end{equation}
where $a=(a_1,\ldots, a_r)\in C[t,x]^r$, $\bar u\in C[t,x]$ has no integer-linear factors, $u_\infty\in C[t,x]$ is integer-linear and  $\gcd(a_1,\ldots,a_r, \bar u u_\infty) = 1$. 
%\end{defi}
By Corollary~\ref{Cor: indep of suitable}, the stem of $f$ is uniquely defined and independent of the choice of suitable bases. By the definition of properness and stems, we have the following lemma.
\begin{lemma}\label{Lem: stem 1}
	Let $f\in A$. Then $f$ is proper with respect to $x$ if and only if the stem of $f$ with respect to $x$ is equal to $1$.
\end{lemma}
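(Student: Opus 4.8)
The plan is to unwind both definitions and show they match. Fix a suitable basis $W$ of $A$ with respect to $x$ at some $\{\beta_1,\ldots,\beta_I\}$, and write $f = \frac{aW}{u}$ with $a=(a_1,\ldots,a_r)\in C[t,x]^r$, $u\in C[t,x]$ and $\gcd(a_1,\ldots,a_r,u)=1$. By the factorization $u = \bar u\, u_\infty$ into the stem $\bar u$ (product of the monic irreducible non-integer-linear factors of $u$) and the integer-linear part $u_\infty$, the denominator $u$ is integer-linear if and only if $\bar u = 1$. So for this particular basis $W$, $u$ is integer-linear iff the stem of $f$ (computed via $W$) is $1$.

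Next I would invoke the well-definedness already established: by Corollary~\ref{Cor: indep of suitable}.\eqref{it:stem1}, the stem of $f$ with respect to $x$ does not depend on the choice of suitable basis, so the statement "the stem of $f$ is $1$" is unambiguous. For the forward direction, if $f$ is proper then by Definition~\ref{Defi: proper} there exists \emph{some} suitable basis $W'$ such that the corresponding denominator $u'$ is integer-linear; hence the stem computed via $W'$ is $1$, and by the independence this is the stem of $f$. Conversely, if the stem of $f$ is $1$, then for the chosen suitable basis $W$ we get $\bar u = 1$, so $u$ is integer-linear, which exhibits a suitable basis witnessing properness; hence $f$ is proper.

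There is essentially no obstacle here: the lemma is a direct translation between Definition~\ref{Defi: proper} and the definition of stem, and all the real content — that the stem is basis-independent and that on suitable bases the relevant matrix data ($e$, $\det M$, $\tilde e$) is integer-linear — has already been done in Corollary~\ref{Cor: indep of suitable} (via Theorem~\ref{Thm: compatibility} and Theorem~\ref{Thm: indep of bases}). The only point to state carefully is the equivalence "$u$ integer-linear $\iff$ $\bar u = 1$", which is immediate from the definition of the stem, since $\bar u$ collects exactly the obstructions to integer-linearity. I would therefore keep the proof to a couple of sentences, citing Corollary~\ref{Cor: indep of suitable} for the independence and Definition~\ref{Defi: proper} for the notion of properness.
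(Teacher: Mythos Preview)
Your proposal is correct and matches the paper's approach: the paper simply states ``By the definition of properness and stems, we have the following lemma'' without further proof, treating it as an immediate consequence of Definition~\ref{Defi: proper}, the definition of the stem, and the basis-independence of Corollary~\ref{Cor: indep of suitable}. Your write-up is a faithful unpacking of exactly this, with the one useful clarification that the existential quantifier over suitable bases in Definition~\ref{Defi: proper} is handled via Corollary~\ref{Cor: indep of suitable}.\eqref{it:stem1}.
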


\begin{corollary}\label{Cor: sym}
	Let $f\in A$. Then the stem of $f$ with respect to $x$ is equal to the stem of $f$ with respect to~$t$. In particular, $f$ is proper with respect to $x$ if and only if $f$ is proper with respect to $t$.
\end{corollary}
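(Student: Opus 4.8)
The plan is to reduce the symmetry statement for stems to a statement about bases, using the machinery already built up. The key observation is that Corollary~\ref{Cor: sym} concerns the same algebra $A=C(t,x)[S_t,S_x]/J$ viewed via two presentations: once as $K(x)[S_x]/\langle L\rangle$ with $K=C(t)$ (the ``with respect to $x$'' picture), and once as $K'(t)[S_t]/\langle L'\rangle$ with $K'=C(x)$ (the ``with respect to $t$'' picture). All the constructions — suitable bases, the value functions $\val_\alpha$, Theorem~\ref{Thm: suitable}, Theorem~\ref{Thm: compatibility} — are symmetric in $t$ and $x$ up to swapping the roles of the two shift operators, since the compatibility condition $\sigma_t(F)G=\sigma_x(G)F$ is itself symmetric under $(t,x,S_t,S_x,F,G)\mapsto(x,t,S_x,S_t,G,F)$. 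So my first step is to record this: if $W$ is a suitable basis of $A$ with respect to $x$ and $W'$ is a suitable basis of $A$ with respect to $t$, then writing $S_xW=\tfrac1eMW$, $S_tW=\tfrac1{e_t}M_tW$, $S_xW'=\tfrac1{e'}M'W'$, $S_tW'=\tfrac1{e'_t}M'_tW'$, Theorem~\ref{Thm: compatibility}.\eqref{it:compat2} applied in both presentations shows that all eight polynomials $e,\det(M),e_t,\det(M_t),e',\det(M'),e'_t,\det(M'_t)$ are integer-linear.

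The second step is to invoke Theorem~\ref{Thm: indep of bases}, but I need to check its hypotheses are met across the two presentations. Write $f=Q\,W=Q'\,W'$ with $Q,Q'\in C(t,x)^r$. Theorem~\ref{Thm: indep of bases} as stated compares two $K(x)$-vector space bases of $A$, but its proof only uses: (a) $S_xW_1=\tfrac1{e_1}M_1W_1$ and $S_xW_2=\tfrac1{e_2}M_2W_2$ with $e_1,e_2,\det(M_1),\det(M_2)$ integer-linear; (b) the change-of-basis matrix $T$ with $W_2=\tfrac1vTW_1$ has polynomial entries and $v\in C[t,x]$. Condition (b) holds for \emph{any} two $C(t,x)$-spanning families $W,W'$ of $A$ — both are $C(t,x)$-bases of the $r$-dimensional space $A$, so the change of basis is an invertible matrix over $C(t,x)$, which we can clear to polynomial form. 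For condition (a), the subtle point is that Theorem~\ref{Thm: indep of bases}'s proof actually only needs the $S_x$-relations of both bases; since $W'$ is a basis of $A$ over $C(t,x)$, it satisfies \emph{some} relation $S_xW'=\tfrac1{e'}M'W'$ over $C(t,x)$, and I must argue that $e',\det(M')$ here (not the ones coming from the suitable-basis-in-$t$ structure, which govern $S_t$) are integer-linear. For this I apply Theorem~\ref{Thm: compatibility} in the $x$-presentation but to the basis $W'$: $W'$ need not be suitable with respect to $x$, so Theorem~\ref{Thm: suitable}.\eqref{it:suitable4} may fail for it, and I cannot directly conclude. The fix is to use Theorem~\ref{Thm: indep of bases} \emph{twice}, bridging through a common suitable-in-$x$ basis: let $W$ be suitable with respect to $x$ and $W'$ suitable with respect to $t$; I claim $Q$ and $Q'$ have the same stem. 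Apply the argument of Theorem~\ref{Thm: indep of bases}'s proof to the pair $(W,W')$ directly — its proof only requires that $e,\det(M)$ (for the distinguished basis $W$) be integer-linear and that the change-of-basis between $W$ and $W'$ be polynomial after clearing one denominator $v\in C[t,x]$; symmetrically, clearing $\tilde v\in C[t,x]$ for the reverse direction. The integer-linearity of $e',\det(M')$ is needed only in the direction showing the stem of $Q'$ divides that of $Q$, and there I use that $W'$ \emph{is} suitable with respect to $t$, hence by Theorem~\ref{Thm: compatibility}.\eqref{it:compat2} (applied in the $t$-presentation) the polynomials governing $S_t$ on $W'$ are integer-linear — but I need the ones governing $S_x$. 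So I instead swap: run Theorem~\ref{Thm: indep of bases} in the $t$-presentation, where $W'$ is the distinguished suitable basis and $W$ is the arbitrary one, using $S_tW'=\tfrac1{e'_t}M'_tW'$ with $e'_t,\det(M'_t)$ integer-linear, to conclude the stem-with-respect-to-$t$ of $Q'$ equals the stem-with-respect-to-$t$ of $Q$.

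The third step then finishes the argument: by Corollary~\ref{Cor: indep of suitable}, the stem of $f$ with respect to $x$ is well-defined via any suitable-in-$x$ basis, and likewise with respect to $t$. Picking $W$ suitable in $x$ and $W'$ suitable in $t$, I have $f=QW=Q'W'$. The point is that the ``stem'' of a vector $Q\in C(t,x)^r$ — the product of its monic irreducible non-integer-linear denominator factors — is a notion \emph{symmetric} in $t$ and $x$, and does not depend on which variable we privilege; it is just a property of the rational functions $Q_1,\dots,Q_r$. So ``stem of $f$ with respect to $x$'' $=$ stem$(Q)$ and ``stem of $f$ with respect to $t$'' $=$ stem$(Q')$, and I must show stem$(Q)=$ stem$(Q')$. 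Since $W'=\tfrac1vTW$ with $T\in C[t,x]^{r\times r}$ and $v\in C[t,x]$ integer-linear (by the claim in Theorem~\ref{Thm: indep of bases}'s proof, which needs only $e,\det(M)$ integer-linear — true since $W$ is suitable in $x$), we get $Q=Q'\tfrac1vT$, so every non-integer-linear denominator factor of $Q$ divides a denominator of $Q'$: stem$(Q)\mid$ stem$(Q')$. For the reverse, $W=\tfrac1{\tilde v}\tilde TW'$ with $\tilde T\in C[t,x]^{r\times r}$ and $\tilde v\in C[t,x]$ integer-linear — here I need $e',\det(M')$ (the $S_x$-data of $W'$) integer-linear, which follows because $W'$ is suitable with respect to $t$: apply Theorem~\ref{Thm: compatibility}.\eqref{it:compat2} in the $t$-presentation to get that $e'$ and $e'_t$ and the determinants are all integer-linear (Theorem~\ref{Thm: compatibility} yields all four polynomials $e,\det M,e_t,\det M_t$ integer-linear, for the suitable basis, regardless of which variable is privileged — the statement is symmetric). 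Hence stem$(Q')\mid$ stem$(Q)$, and since stems are monic, stem$(Q)=$ stem$(Q')$. The ``in particular'' clause is then immediate from Lemma~\ref{Lem: stem 1}: $f$ is proper with respect to $x$ iff stem$(Q)=1$ iff stem$(Q')=1$ iff $f$ is proper with respect to $t$.

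The main obstacle I anticipate is bookkeeping the asymmetry in Theorem~\ref{Thm: indep of bases}'s hypotheses: that theorem privileges one variable ($x$) in its statement, and I must be careful that when I apply it to a basis suitable with respect to $t$, I feed it the correct ($S_x$-)relation data and verify those particular polynomials are integer-linear. The resolution is that Theorem~\ref{Thm: compatibility} simultaneously controls \emph{all four} of $e,\det(M),e_t,\det(M_t)$ for any suitable basis — and ``suitable'' there is really a two-sided notion once Assumption~\ref{Assump} holds for both $S_t$ and $S_x$ — so both $S_x$- and $S_t$-data of a suitable basis (in either variable) are integer-linear, which is exactly what both directions of the divisibility need. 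No genuinely new computation is required beyond what is in the proofs of Theorem~\ref{Thm: indep of bases} and Theorem~\ref{Thm: compatibility}.
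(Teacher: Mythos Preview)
Your proposal is correct and follows essentially the same route as the paper: pick a suitable basis $W$ with respect to $x$ and a suitable basis $W'$ with respect to $t$, observe that Theorem~\ref{Thm: compatibility}.\eqref{it:compat2} (applied once directly and once with the roles of $t$ and $x$ swapped) makes the $S_x$-data of \emph{both} bases integer-linear, and then invoke Theorem~\ref{Thm: indep of bases} to conclude that the coordinate vectors $Q$ and $Q'$ have the same stem, finishing via Lemma~\ref{Lem: stem 1}. Your write-up takes several detours before settling on this argument (the excursion through ``running Theorem~\ref{Thm: indep of bases} in the $t$-presentation'' in your second step is unnecessary once you notice, as you do in your third step, that the stem of a vector in $C(t,x)^r$ is intrinsically symmetric in $t$ and $x$), but the final reasoning is exactly the paper's.
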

\begin{proof}
	Let $W_1$ be a suitable basis with respect to $x$. Let $e_1\in C[t,x]$ and $M_1\in C[t,x]^{r\times r}$ be such that $S_xW_1=\frac{1}{e_1}M_1W$. By Theorem~\ref{Thm: compatibility}.\eqref{it:compat2}, $e_1$ and $
	\det(M_1)$ are integer-linear.
	
	%Suppose that $f$ is proper with respect to $x$. By Definition~\ref{Defi: proper}, there is a suitable basis $W$ with respect to $x$ such that $f=\frac{1}{u}aW$, where $a\in C[t,x]^{r}$ and $u\in C[t,x]$ is integer-linear. Let $e\in C[t,x]$ and $M\in C[t,x]^{r\times r}$ be such that $S_xW=\frac{1}{e}MW$. By Theorem~\ref{Thm: compatibility}, $e$ and $det(M)$ are integer-linear. 
	
	Let $W_2$ be a suitable basis of $A$ with respect to $t$. By Assumption~\ref{Assump}, such a basis $W_2$ exists and it can be computed similarly to a suitable basis with respect to $x$. Let $e_2\in C[t,x]$ and $M_2\in C[t,x]^{r\times r}$ be such that $S_xW_2=\frac{1}{e_2}M_2W$.  Swapping $x$ and $t$ in Theorem~\ref{Thm: compatibility}.\eqref{it:compat2}, we obtain that $e_2$ and $
	\det(M_2)$ are integer-linear. 
	
	We write $f=Q_1W_1=Q_2W_2$, where $Q_1,Q_2\in C[t,x]^r$. By Theorem~\ref{Thm: indep of bases}, $Q_1$ and $Q_2$ have the same stem. So the stem of $f$ with respect to $x$ is equal to the stem of $f$ with respect to $t$. Then the second assertion follows from Lemma~\ref{Lem: stem 1}.
\end{proof}

\begin{corollary}\label{Cor: hypergeo}
	Let $J\subseteq C(t,x)[S_t,S_x]$ be the annihilating ideal of a hypergeometric term $F$. Then $F$ is a proper hypergeometric term if and only if $1\in A = C(t,x)[S_t,S_x]/J$ is proper with respect to both $t$ and~$x$.
\end{corollary}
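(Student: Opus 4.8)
The plan is to specialise everything to the case $r=1$. Since $F$ is hypergeometric, $J$ is generated by $S_x-a$ and $S_t-b$ with $a=\sigma_x(F)/F$ and $b=\sigma_t(F)/F$ in $C(t,x)$; hence $A$ is one-dimensional over $C(t,x)$, the class $\bar 1$ of $1$ is a cyclic vector, and under the identification $A\cong C(t,x)$ sending $\bar 1$ to $1$ (so that $\bar 1$ ``is'' $F$) the operators act by $S_x\colon g\mapsto a\,\sigma_x(g)$ and $S_t\colon g\mapsto b\,\sigma_t(g)$. By Corollary~\ref{Cor: sym} the properness of $\bar 1$ with respect to $x$ is the same as with respect to $t$, so ``proper with respect to both'' is just ``proper with respect to $x$'', and by Lemma~\ref{Lem: stem 1} this is equivalent to the stem of $\bar 1$ with respect to $x$ being $1$. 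Thus the claim reduces to: $F$ is a proper hypergeometric term $\iff$ that stem equals $1$.

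The heart of the argument is to compute this stem classically. Write $F=S\cdot H$ for the multiplicative (Gosper--Petkov\v sek) decomposition with respect to $x$, with $S\in C(t,x)$ rational, $\sigma_x(H)/H$ shift-reduced with respect to $x$, and $S$ unique up to a factor in $C(t)^\times$. Using the $r=1$ instance of the integral-basis construction of~\cite{chen23b}, I would prove that a suitable basis of $A$ with respect to $x$ is, up to an integer-linear factor, the element $W=(w)$ with $w$ corresponding to the kernel $H$; then $\bar 1=S\,W$, so the stem of $\bar 1$ with respect to $x$ is exactly the product of the non-integer-linear irreducible factors of the denominator of $S$. Informally, the finite set $Z$ attached to $L$ sits over the non-integer-linear zeros and poles of $a=\sigma_x(F)/F$, and passing to a suitable basis clears precisely the genuine rational (i.e.\ non-integer-linear) part of the denominator of $F$, since the remaining ``factorial/exponential'' factor $H$ has only integer-linear zeros and poles.

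It remains to match ``the denominator of $S$ is integer-linear'' with ``$F$ is proper''. If $F=P\,\zeta^x\,\prod_i(a_it+b_ix+c_i)!\big/\prod_j(u_jt+v_jx+w_j)!$ is of the classical proper form, then its telescoping part $S$ has integer-linear denominator (the denominators all come from integer-linear factorials), so the stem is $1$. Conversely, assume the denominator of $S$ is integer-linear. The certificate $\sigma_x(H)/H$ is shift-reduced with respect to $x$ and is compatible (Section~\ref{sec: compatibility}) with $\sigma_t(H)/H$, so by the structure theorem for compatible rational functions~\cite[Theorem~8]{AbramovPetkovsek01b} (cf.\ the Remark after Theorem~\ref{Thm: compatibility}) both are integer-linear; hence $H$ is, up to a constant, $\zeta^x$ times a ratio of factorials of integer-linear forms, and absorbing the integer-linear denominator of $S$ into such factorials displays $F$ as a polynomial times $\zeta^x$ times a ratio of factorials of integer-linear forms, i.e.\ as a proper hypergeometric term. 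Putting this together with the previous paragraph, $F$ is proper $\iff$ the denominator of $S$ is integer-linear $\iff$ the stem of $\bar 1$ with respect to $x$ (equivalently, with respect to $t$) is $1$ $\iff$ $\bar 1$ is proper with respect to both $t$ and $x$.

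The main obstacle is the middle step: making the abstract value functions and local integral bases of~\cite{chen23b} explicit in dimension one, that is, proving that for $r=1$ a suitable basis of $A$ is the Gosper--Petkov\v sek kernel $H$ of $F$ up to integer-linear factors --- equivalently, that the $r=1$ value function at $\alpha$ records the order of the genuine rational zero or pole of $F$ at $\alpha$. The rest is bookkeeping together with the classical structure theory of hypergeometric terms.
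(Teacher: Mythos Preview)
Your reduction to ``$F$ proper $\iff$ the stem of $\bar 1$ with respect to $x$ equals $1$'' via Lemma~\ref{Lem: stem 1} and Corollary~\ref{Cor: sym} is exactly how the paper begins, but from there you take a substantially harder route than necessary. You propose to identify the abstract suitable basis with the Gosper--Petkov\v sek kernel $H$ (up to integer-linear factors) by unwinding the integral-basis construction in dimension one, and you correctly flag this as the main obstacle; then you further re-derive the equivalence ``denominator of $S$ integer-linear $\iff$ $F$ proper'' from the structure theorem for compatible rational functions.

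The paper sidesteps both steps. It never identifies the suitable basis explicitly. Instead, for \emph{any} suitable basis $W$ with $S_xW=\frac{1}{e}MW$ and $\bar 1=QW$, one computes in one line
\[
u_x=\frac{M}{e}\cdot\frac{\sigma_x(Q)}{Q},
\]
and observes that this is already a rational normal form of $u_x$: Theorem~\ref{Thm: compatibility} gives $\gcd(M,e)=1$ with $M,e$ integer-linear, and Theorem~\ref{Thm: suitable}.\eqref{it:suitable4} gives $\gcd(M,\sigma_x^i(e))=1$ for all $i\in\set Z$, so $M/e$ is shift-reduced. At that point Abramov's Corollaries~1 and~2 in~\cite{Abramov03} apply verbatim and say precisely that $F$ is a proper hypergeometric term if and only if the stem of $Q$ is $1$. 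No explicit description of $W$, and no reconstruction of the proper-term criterion, is needed.

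So your approach is not wrong in spirit, but the unproven identification of $W$ with $H$ is unnecessary work: the properties of a suitable basis already force the displayed factorisation to be a rational normal form, which is exactly the hypothesis under which Abramov's classical criterion applies. If you still want to prove your identification, the cleanest way is via this same observation together with the (partial) uniqueness of rational normal forms, rather than by computing value functions.
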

\begin{proof}
	Since $F$ is a hypergeometric term, its annihilating ideal $J$ is generated by $S_t - u_t$ and $S_x - u_x$ for some $u_t, u_x\in C(t,x)$. Then $A$ is $C(t,x)$-vector space of dimension one. The hypergeometric term $F$ corresponds to $f=1\in A$. Let $W$ be a suitable basis of $A$ with respect to $x$. By Theorem~\ref{Thm: compatibility}, there are integer-linear polynomials $e,M\in C[t,x]$ such that $S_xW=\frac{1}{e}MW$ and $\gcd(M,e)=1$. Let $Q\in C(t,x)$ be such that $f=QW$. Since $f$ is a basis of $A$, the rational function $Q$ is nonzero. Then
	\[S_xf = \sigma_x(Q)S_xW=\sigma_x(Q)\frac{1}{e}MW=\sigma_x(Q)\frac{1}{e}MQ^{-1}f= u_xf.\]
	By the last equality, we get \begin{equation}\label{EQ: RNF}
		u_x = \frac{M}{e}\frac{\sigma_x(Q)}{Q}.
	\end{equation}
	Since $\gcd(M,e)=1$, it follows from Theorem~\ref{Thm: suitable}.\eqref{it:suitable4} that $\gcd(M,\sigma_x^i(e))=1$ for all $i\in \set Z$. So~\eqref{EQ: RNF} is a rational normal form (see its definition in~\cite[Section 2.2]{Abramov03}) of $u_x$ with respect to $x$. By Corollaries 1 and 2 in~\cite{Abramov03}, $F$ is a proper hypergeometric term if and only the stem of $Q$ is equal to $1$. By Lemma~\ref{Lem: stem 1} and Corollary~\ref{Cor: sym}, the latter is equivalent to the fact that $f=QW$ is proper with respect to both $t$ and $x$.
\end{proof}	

The notion of proper $\partial$-finite ideals was first introduced by Chen, Kauers and Koutschan~\cite{chen14}, which provides a sufficient condition for the existence of telescopers for $1\in A$. The properness in Definition~\ref{Defi: proper} and~\cite[Definition 4]{chen14} are connected as follows.
\begin{remark}\label{Rem: compare proper}
Let $A=C(t,x)[S_t,S_x]/J$ be a finite-dimensional vector space over $C(t,x)$ satisfying Assumption~\ref{Assump}. For a basis $W$ of $A$, let $e\in C[t,x]$ and $M\in C[t,x]^{r\times r} $ be such that $S_xW=\frac{1}{e}MW$.
The ideal $J$ is proper with respect to $x$ in the sense of~\cite[Definition 4]{chen14} if and only if there exists an {\em $x$-admissible} basis $W$ of $A$, i.e., the element $1\in A$ is represented as $1=aW$ with $a\in C(t)[x]^r$, for which the polynomial $e$ is integer-linear. 

Suppose that $1\in A$ is proper in the sense of Definition~\ref{Defi: proper}. Then there exists a suitable basis $W$ with respect to $x$ such that $1=\frac{a}{u}W$, where $a\in C[t,x]^r$ and $u\in C[t,x]$ is integer-linear. By Theorem~\ref{Thm: suitable}, for this basis $W$, the polynomial $e$ is integer-linear. Let $U=\frac{1}{u}W$. Then $U$ is an $x$-admissible basis because $1=aU$. Since $S_xU = \frac{1}{\sigma_x(u)}S_xW=\frac{1}{\sigma_x(u)}\frac{1}{e}MW =\frac{u}{\sigma_x(u)}\frac{1}{e}MU$ with $\sigma_x(u)e$ being integer-linear, it follows that $J$ is proper in the sense of~\cite[Definition 4]{chen14}. 
\end{remark}

\section{Existence of telescopers implies properness}\label{sec: existence implies proper}
A polynomial $u\in K[x]$ is {\em spread} with respect to $x$ if for any irreducible polynomial $p$ which divides $u$ there is $m\in \set Z\setminus\{0\}$ such that $\sigma_x^m(p)$ divides $u$. Spread polynomials were introduced in~\cite{Abramov03} to give the existence criterion of telescopers for hypergeometric terms. %He proved that if a hypergeometric term has a telescoper, then it can be decomposed into a summable part and a remainer that is a proper hypergeometric term. 
Following the underlying idea in Abramov's paper~\cite{Abramov03}, we shall prove Theorem~\ref{Thm: main_shiftfree} using the generalized Abramov-Petkov\v sek reduction in Section~\ref{sec: add}.

%\begin{theorem}
%	A P-recursive sequence $f\in A$ has a telescoper of type $(S_t;S_x)$ if and only if 
%	\[f = \Delta_x(g) +h\]
%	where $g,h\in A$ and $h$ is proper with respect to $x$.  
%\end{theorem}

\begin{theorem}\label{Thm:spread_sum}
	Let $f\in A$ be a bivariate P-recursive sequence whose stem is not spread with respect to~$x$. Then $f$ is not summable in $A$ with respect to $x$.
\end{theorem}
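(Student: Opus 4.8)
The plan is to argue by contradiction: if $f$ were summable, a hypothetical primitive $g$ would have to have a pole somewhere along the $\sigma_x$-orbit of an ``isolated'' irreducible factor $p$ of the stem, and I would then show that $\Delta_x$ forces $f$ to have a pole at one of the two \emph{extreme} shifts of $p$ in that orbit --- a pole that the non-spread denominator of $f$ cannot carry. This follows the underlying idea of Abramov's hypergeometric argument~\cite{Abramov03}; the new ingredient is the passage from scalars to the matrix $\frac{1}{e}M$, which is controlled by the integer-linearity results of Section~\ref{sec: compatibility}.

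First I would fix a suitable basis $W$ of $A$ with respect to $x$ realizing the stem, and write $f=\frac{aW}{u}$ with $a\in C[t,x]^r$, $u\in C[t,x]$ and $\gcd(a_1,\ldots,a_r,u)=1$, so that $u=\bar u\, u_\infty$ with $\bar u$ the stem and $u_\infty$ integer-linear. Let $e\in C[t,x]$ and $M\in C[t,x]^{r\times r}$ satisfy $S_xW=\frac{1}{e}MW$; by Theorem~\ref{Thm: compatibility}.\eqref{it:compat2} both $e$ and $\det(M)$ are integer-linear. Since the stem is not spread with respect to $x$, there is an irreducible $p\mid\bar u$ with $\sigma_x^m(p)\nmid\bar u$ for all $m\in\set Z\setminus\{0\}$; because $p$ is not integer-linear, no shift $\sigma_x^m(p)$ is either, hence $\sigma_x^m(p)$ divides none of $u_\infty$, $e$, $\det(M)$ for any $m$, and $\sigma_x^m(p)\nmid u$ for all $m\neq0$ while $p\mid u$. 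For an irreducible $q\in C[t,x]$ I write $\nu_q$ for the $q$-adic valuation on $C(t,x)$, extended to vectors by the componentwise minimum; relative to the fixed basis $W$ I say $h=QW\in A$ (with $Q\in C(t,x)^r$) has a pole at $q$ if $\nu_q(Q)<0$, and then $f=\frac{aW}{u}$ has a pole at $q$ if and only if $q\mid u$.

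Next, assuming $f=\Delta_x(g)=S_xg-g$, I would write $g=\frac{bW}{v}$ with $b\in C[t,x]^r$, $v\in C[t,x]$, $\gcd(b_1,\ldots,b_r,v)=1$. From $S_xW=\frac{1}{e}MW$ the coefficient vector of $\Delta_x(g)$ with respect to $W$ equals $\bigl(v\,\sigma_x(b)M-\sigma_x(v)\,e\,b\bigr)/\bigl(v\,\sigma_x(v)\,e\bigr)$; equating with $a/u$ and applying $\nu_p$ (using $\nu_p(e)=0$ and that $\sigma_x(b)M$ and $b$ are polynomial vectors) yields $\nu_p(u)\le\max\{\nu_p(v),\nu_p(\sigma_x(v))\}$, so since $\nu_p(u)\ge1$ the finite set $S:=\{m\in\set Z:\sigma_x^m(p)\mid v\}$ is nonempty, with extremes $m_{\min}\le m_{\max}$. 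Then I would ``read the ends of the orbit''. At $q'=\sigma_x^{m_{\max}+1}(p)$ one has $\nu_{q'}(\sigma_x(v))=\nu_{\sigma_x^{m_{\max}}(p)}(v)\ge1$ and $\nu_{q'}(e)=0$, while $\gcd(b_1,\ldots,b_r,v)=1$ forces $q'\nmid\sigma_x(b)$ and $q'\nmid\det(M)$ keeps $M$ invertible modulo $q'$, so no cancellation occurs and $S_xg=\frac{\sigma_x(b)M}{\sigma_x(v)e}W$ has a pole at $q'$; since $q'\nmid v$ (because $m_{\max}+1\notin S$), $g$ has none, so $f=S_xg-g$ has a pole at $q'$, forcing $q'\mid u$ and hence $m_{\max}=-1$. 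Symmetrically, at $q''=\sigma_x^{m_{\min}}(p)$ the element $g=\frac{bW}{v}$ has a pole while $S_xg$ has none (because $\sigma_x^{m_{\min}-1}(p)\nmid v$ and $\nu_{q''}(e)=0$), so $f$ has a pole at $q''$, forcing $q''\mid u$ and hence $m_{\min}=0$. Then $0=m_{\min}\le m_{\max}=-1$, the desired contradiction, so $f$ is not summable.

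I expect the main obstacle to be the ``no-cancellation'' bookkeeping at the top of the orbit: one must verify that right-multiplying the polynomial vector $\sigma_x(b)$ by $M$ does not make every component divisible by $q'=\sigma_x^{m_{\max}+1}(p)$. This is exactly where $\gcd(b_1,\ldots,b_r,v)=1$ has to be combined with the integer-linearity of $\det(M)$, so that $q'$, being non-integer-linear, does not divide $\det(M)$ and $M$ reduces to an invertible matrix modulo $q'$. The remaining ingredients --- the identity for the coefficient vector of $\Delta_x(g)$, the valuation inequality giving $S\neq\emptyset$, the shift rule $\nu_q(\sigma_x(h))=\nu_{\sigma_x^{-1}(q)}(h)$, and the elementary fact that the valuation of a difference equals the smaller of the two valuations when these differ --- are routine once the setup is in place.
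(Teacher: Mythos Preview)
Your argument is correct and takes a genuinely different route from the paper's proof. The paper proceeds via the generalized Abramov--Petkov\v sek reduction: it decomposes $f=\Delta_x(g)+\frac{1}{d}RW+\frac{1}{e}SW$ as in Corollary~\ref{Cor: AP-remainder}, invokes the tracking property of the reduction from~\cite[Section~4]{chen23b} to guarantee that some shift $\sigma_x^{m_0}(p)$ of the isolated factor survives as a factor of $de$, and then uses the integer-linearity of $e$ (Theorem~\ref{Thm: compatibility}.\eqref{it:compat2}) to force $\sigma_x^{m_0}(p)\mid d$, whence $d\notin K$ and Corollary~\ref{Cor: AP-remainder} rules out summability. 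You instead work directly with a hypothetical primitive $g=\frac{bW}{v}$ and track $p$-adic valuations along the $\sigma_x$-orbit, reading the two ends of the finite set $S=\{m:\sigma_x^m(p)\mid v\}$ to force $m_{\min}=0$ and $m_{\max}=-1$. Both proofs ultimately rest on the same structural input, namely the integer-linearity of $e$ and $\det(M)$ from Theorem~\ref{Thm: compatibility}.\eqref{it:compat2}; your no-cancellation step at the top of the orbit (invertibility of $M$ modulo $q'$) is precisely where $q'\nmid\det(M)$ enters, and your handling of it via Lemma~\ref{Lem: gcd}-type reasoning is sound. What your approach buys is self-containedness: it avoids the black-box appeal to the reduction algorithm in~\cite{chen23b} and gives an elementary valuation argument in the spirit of Abramov's original hypergeometric proof. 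What the paper's approach buys is economy once the reduction machinery is already in place for the rest of Section~\ref{sec: add} and for the constructive side of Theorem~\ref{Thm: main}.
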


\begin{proof}
	Let $W$ be a suitable basis of $A\cong K(x)[S_x]/\<L>$ at $\{\beta_1,\ldots,\beta_I\}$. Let $e\in C[t,x]$ and $M\in C[t,x]^{r\times r}$ be such that $S_x W =\frac{1}{e}MW$. By Corollary~\ref{Cor: AP-remainder}, $f$ can be decomposed into
	\begin{equation}\label{EQ: AP-decomp}
		f = \Delta_x(g)+ \frac{1}{d}RW+\frac{1}{e}SW,
	\end{equation}
	where $g\in A$, $d\in K[x]$, $R,S\in K[x]^r$ with $\deg_x(R)<\deg_x(d)$ and $d\prod_{i=1}^I(x-\beta_i)$ being shift-free.
	
	Let $\bar u\in C[t,x]$ be the stem of $f$. By~\eqref{EQ: stem}, $f$ can be written as $f =\frac{1}{\bar u u_{\infty}}aW$, where $a\in C[t,x]^r$ and $u_\infty\in C[t,x]$ is integer-linear. Since $\bar u$ is not spread, $\bar u$ has an irreducible factor $p$ such that $\sigma_x^m(p)\nmid \bar u$ for any $m\in\set Z\setminus\{0\}$.
	%This means that except $p_0$, the denominator of $f$ has no other irreducible factor that is shift-equivalent (w.r.t. $x$) to $p_0$. 
	As shown in~\cite[Section 4]{chen23b}, there exist a decomposition~\eqref{EQ: AP-decomp} and an integer $m_0\in \set Z$ such that $\sigma_x^{m_0}(p) \mid de$.
	
	By the definition of stems, $p$ is not integer-linear. By Theorem~\ref{Thm: compatibility}.\eqref{it:compat2}, the polynomial $e$ is integer-linear. So $\sigma_x^{m_0}(p) \nmid e$ and hence $\sigma_x^{m_0}(p)\mid d$. Then $d$ is not a constant in $K$ because $p$ is of positive degree in $x$. By Corollary~\ref{Cor: AP-remainder}, $f$ is not summable in $A$ with respect to $x$.
\end{proof}

\begin{theorem}\label{Thm:spread_Tf}
	Let $f\in A$ be a bivariate P-recursive sequence which is not proper. Let the stem of $f$ be shift-free with respect to $x$. Then for any nonzero operator $T\in C(t)[S_t]$, the stem of $Tf$ is is not spread with respect to $x$.
\end{theorem}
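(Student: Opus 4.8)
The plan is to argue by contradiction: assume that for some nonzero $T\in C(t)[S_t]$ the stem of $Tf$ is spread with respect to $x$, and derive from this that the stem of $f$ itself would have to be $1$, contradicting the hypothesis that $f$ is not proper. Fix a suitable basis $W$ of $A$ at $\{\beta_1,\ldots,\beta_I\}$, and write $S_xW=\frac1eMW$, $S_tW=\frac1{e_t}M_tW$; by Theorem~\ref{Thm: compatibility}.\eqref{it:compat2} all four of $e,\det(M),e_t,\det(M_t)$ are integer-linear. Write $f=\frac{aW}{\bar u u_\infty}$ as in~\eqref{EQ: stem}, where $\bar u$ is the stem of $f$ (no integer-linear factors, and by hypothesis shift-free with respect to~$x$), and $u_\infty$ is integer-linear. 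Applying $T=\sum_{\ell=0}^\rho c_\ell S_t^\ell$ and using $S_t^\ell W=\frac1{\sigma_t^{\ell-1}(e_t)\cdots e_t}\big(\sigma_t^{\ell-1}(M_t)\cdots M_t\big)W$, the denominator of $Tf$ in the basis $W$, before cancellation, is a divisor of $\bigl(\prod_{\ell}\sigma_t^\ell(\bar u)\bigr)$ times an integer-linear polynomial; since the $\sigma_t^\ell(\bar u)$ need not be coprime one must track exactly which irreducible factors of $\prod_\ell\sigma_t^\ell(\bar u)$ survive into the reduced denominator of $Tf$. This is the computation to carry out carefully.

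The key mechanism is the following. Let $p$ be an irreducible factor of $\bar u$. Because $\bar u$ is shift-free with respect to~$x$, no $\sigma_x$-shift of $p$ other than $p$ itself divides $\bar u$; and because $p$ is not integer-linear, $p$ is also not $\langle\sigma_t\rangle$-fixed, so the polynomials $\sigma_t^\ell(p)$, $\ell\in\set Z$, lie in infinitely many distinct $\langle\sigma_x\rangle$-orbits (here is where Lemma~\ref{Lem: integer-linear lem3} and the factorization~\eqref{EQ: shiftless decomp} enter: an $x$-free shift $\sigma_t^m(p)=p$ with $m\neq0$ would force $p=h(m t)$, which is integer-linear, a contradiction; so for each fixed $m\neq 0$, $\sigma_t^m(p)$ is genuinely a different orbit representative). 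The upshot is that among the irreducible factors of the denominator of $Tf$ one finds, for some extreme value of $\ell$ in the support of $T$, a shift $\sigma_t^{\ell_0}\sigma_x^{j}(p)$ that is $\langle\sigma_t,\sigma_x\rangle$-shift-equivalent to $p$ but whose $\langle\sigma_x\rangle$-orbit contains no other factor of that denominator. Concretely: consider the smallest $\ell$ with $c_\ell\neq 0$ (or the largest, whichever is convenient); the corresponding term $c_\ell S_t^\ell f$ contributes $\sigma_t^\ell(p)$ to the denominator, and the integer-linearity of $e,e_t,\det(M),\det(M_t)$ guarantees that the extra factors introduced by reduction via Corollary~\ref{Cor: AP-remainder} are all integer-linear, hence cannot cancel the non-integer-linear factor $\sigma_t^\ell(p)$; meanwhile shift-freeness of $\bar u$ means no other summand $c_{\ell'}S_t^{\ell'}f$ can contribute a $\langle\sigma_x\rangle$-shift of $\sigma_t^\ell(p)$ (such a shift would be $\sigma_t^{\ell'}\sigma_x^k(p')$ for a factor $p'$ of $\bar u$, forcing $p'$ to be a $\langle\sigma_t,\sigma_x\rangle$-shift of $p$, hence equal to $p$ by the orbit-separation above, and then $\ell'$ is pinned down). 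Therefore $\sigma_t^\ell(p)$ (or a bounded $\sigma_x$-shift of it) appears in the reduced stem of $Tf$ with no nontrivial $\sigma_x$-companion, so that stem is not spread.

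The main obstacle is the bookkeeping in the previous paragraph: one must show that after performing the generalized Abramov–Petkov\v sek reduction of Corollary~\ref{Cor: AP-remainder} on $Tf$ — which legitimately may shift denominators around by $\sigma_x$ and merge factors from different summands — the factor $\sigma_t^\ell(p)$ really does survive into the reduced stem, isolated in its $\sigma_x$-orbit. The clean way to handle this is to work with $\langle\sigma_t,\sigma_x\rangle$-orbits of irreducible factors rather than individual polynomials, exactly as in the factorization~\eqref{EQ: shiftless decomp} and Definition~\ref{Defi: norm}: group the denominator of $Tf$ by $\langle\sigma_t,\sigma_x\rangle$-orbit; the orbit of $p$ receives contributions only from those summands $c_{\ell'}S_t^{\ell'}f$ whose index $\ell'$ makes $\sigma_t^{\ell'}(\bar u)$ meet that orbit, and since $\bar u\cap[p]_{\langle\sigma_t,\sigma_x\rangle}$ consists of the single $\sigma_x$-orbit of $p$ (shift-freeness), the total contribution to the orbit $[p]$ is supported on a single $\sigma_x$-orbit's worth of shifts, namely $\{\sigma_t^{\ell'}\sigma_x^{*}(p)\}$ with $\ell'$ ranging over the support of $T$, and after reduction the surviving part sits in finitely many $\sigma_x$-classes none of which can contain a $\sigma_x$-shifted companion — because a companion would have to come from the same orbit $[p]$ yet from a different $\sigma_x$-class, impossible once one checks the $\sigma_t$-indices cannot match. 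Once this orbit-by-orbit analysis is in place, the conclusion "the stem of $Tf$ is not spread" is immediate by the definition of spread, and this is precisely the hypothesis needed to invoke Theorem~\ref{Thm:spread_sum} in the next step toward proving Theorem~\ref{Thm: main_shiftfree}.
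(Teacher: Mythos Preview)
Your proposal has a genuine gap. The claim that ``$\bar u\cap[p]_{\langle\sigma_t,\sigma_x\rangle}$ consists of the single $\sigma_x$-orbit of~$p$ (shift-freeness)'' is false. Shift-freeness of $\bar u$ with respect to $x$ says only that distinct irreducible factors of $\bar u$ lie in \emph{distinct} $\langle\sigma_x\rangle$-orbits; it does \emph{not} say that a given $\langle\sigma_t,\sigma_x\rangle$-orbit meets $\bar u$ in at most one such $\langle\sigma_x\rangle$-orbit. For instance $\bar u$ may well contain both $p$ and $\sigma_t\sigma_x(p)$ (or $p$ and $\sigma_t(p)$): these are in the same $\langle\sigma_t,\sigma_x\rangle$-orbit but in different $\langle\sigma_x\rangle$-orbits, and nothing in the hypotheses rules this out. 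Your orbit-by-orbit argument collapses at exactly this point, because the contribution to the orbit $[p]$ from the various summands $c_{\ell'}S_t^{\ell'}f$ is then supported on \emph{several} $\langle\sigma_x\rangle$-classes, and you can no longer isolate one.

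Even if you patch the argument by choosing $p$ extremal in the $\sigma_t$-direction (say $p\mid\bar u$ but $\sigma_t^i(p)\nmid\bar u$ for all $i>0$) and taking the top index $\ell=\rho$, the direct claim ``$\sigma_t^\rho(p)$ has no $\sigma_x$-companion in the stem of $Tf$'' still need not hold: a companion $\sigma_x^{j_0}\sigma_t^\rho(p)$ with $j_0\neq0$ may come from a summand $c_{i_0}S_t^{i_0}f$ via a factor $q=\sigma_t^{\rho-i_0}\sigma_x^{j_0}(p)$ of $\bar u$, and the maximality condition (which concerns only pure $\sigma_t$-shifts of $p$) does not forbid this. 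What \emph{does} work is to turn this obstruction into the contradiction: assuming spreadness, the preceding sentence shows that for the chosen $p$ (hence, after undoing the $\sigma_t^{i_0}$-replacement, for \emph{every} irreducible factor of $\bar u$) some $\sigma_t^i\sigma_x^j$-shift with $i>0$ again divides $\bar u$; then Lemma~\ref{Lem: integer-linear thm6} forces $\bar u$ to be integer-linear, contradicting $\bar u\neq1$. One also needs Lemma~\ref{Lem: gcd}, together with the integer-linearity of $\det(M_t)$, to certify that $\sigma_t^\rho(p)$ really survives in the denominator of $c_\rho S_t^\rho f$ (your invocation of Corollary~\ref{Cor: AP-remainder} here is unnecessary and does not address this cancellation issue).
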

\begin{proof}
	Let $\bar u\in C[t,x]$ be the stem of $f$. Let $W$ be a suitable basis of $A$ with respect to $x$. By~\eqref{EQ: stem}, $f$ can be written as $f =\frac{1}{\bar u u_{\infty}}aW$, where $a\in C[t,x]^r$ and $u_\infty\in C[t,x]$ is integer-linear. Let~$\tilde f = \frac{1}{u_\infty}aW$. Then $f = \frac{1}{\bar u}\tilde f$. Since $f$ is not proper, there exists an irreducible polynomial $p\in C[t,x]$ such that $p\mid \bar u$. Then $p$ is not integer-linear. Replacing $p$ by $\sigma_t^{i_0}(p)$ if necessary, where $i_0\in \set N $ is the maximal integer such that $\sigma_t^{i_0}(p)\mid \bar u$, we may assume that $p\mid \bar u$ and $\sigma_t^{i}(p) \nmid \bar u $ for all $i>0$.
	
	Let $e_t\in C[t,x]$ and $M_t\in C[t,x]^{r\times r}$ be such that $S_tW = \frac{1}{e_t}M_tW$. Then
	\begin{equation}\label{EQ: T tilde f}
		S_t^i\tilde f = \sigma_t^i\left(\frac{a}{u_\infty}\right)\sigma_t^{i-1}\left(\frac{1}{e_t}M_t\right) \cdots \sigma_t\left(\frac{1}{e_t}M_t\right) \frac{1}{e_t}M_t W
	\end{equation}
	is proper for all $i\geq 0$, because by Theorem~\ref{Thm: compatibility}.\eqref{it:compat2}, $e_t$ is integer-linear. We write $T=\sum_{i=0}^\rho c_iS_t^i\in C(t)[S_t]$, where $\rho\in\set N$, $c_i\in C(t)$ and $c_\rho\neq 0$. Then $c_i(t)$ is integer-linear. Since $S_t^if = S_t^i\frac{1}{\bar u} \tilde f=\sigma_t^i(\frac{1}{\bar u})S_t^i\tilde f$, by~\eqref{EQ: T tilde f} we obtain that
	\[Tf = \sum_{i=0}^\rho c_iS_t^if,\]
	where
	\begin{align*}%\label{EQ: }
		S_t^if&=\sigma_t^i\left(\frac{1}{\bar u}\right)\sigma_t^i\left(\frac{a}{u_\infty}\right)\sigma_t^{i-1}\left(\frac{1}{e_t}M_t\right) \cdots \sigma_t\left(\frac{1}{e_t}M_t\right) \frac{1}{e_t}M_t W\\
		&=\sigma_t^i\left(\frac{1}{\bar u}\right) \underbrace{\sigma_t^i(\frac{1}{u_\infty})\prod_{j=0}^{i-1}\sigma_t^{j}\left(\frac{1}{e_t}\right)}_{\text{integer-linear}}\sigma_t^i(a) \underbrace{\prod_{j=0}^{i-1}\sigma_t^j(M_t)}_{=:M_i} W.
	\end{align*}
    Since $\gcd(a_1,\ldots, a_r, \bar uu_\infty) = 1$ and $p\mid \bar u$, it follows that $\gcd(\sigma_t^i(a_1), \ldots, \sigma_t^i(a_r), \sigma_t^i(p)) = 1$ for all $i\geq 0$. By Theorem~\ref{Thm: compatibility}.\eqref{it:compat2}, $\det(M_t)$ is integer-linear. So $\det(M_i)$ is integer-linear. Since $p$ is not integer-linear, $M_i\in K[x]^{r\times r}$ is an invertible matrix over $K[x]_{(\sigma_t^i(p))}$. By Lemma~\ref{Lem: gcd}, the greatest common divisor of all entries of the vector $\sigma_t^i(a)M_i$ and $\sigma_t^i(p)$ is $1$. So $\sigma_t^i(p)$ is a factor of the denominator of $S_t^i f$. On the other hand, for each $i'$ with $0\leq i'< i$, $\sigma_t^i(p)$ is not a factor of the denominator of $S_t^{i'}f$. Otherwise $\sigma_t^i(p)\mid \sigma_t^{i'}(\bar u)$, which implies $\sigma_t^{i-i'}(p)\mid \bar u$. This contradicts the choice of $p$. Therefore $\sigma_t^\rho(p)$ is a factor of the denominator of $Tf$. Since $p$ is not integer-linear, $\sigma_t^\rho(p)$ is a factor of the stem of $Tf$. 
    
    Suppose that the stem of $Tf$ is spread with respect to $x$. There exists $j_0\in \set Z\setminus\{0\}$ such that $\sigma_x^{j_0}\sigma_t^\rho(p)$ is a factor of the stem of $Tf$. Then $\sigma_x^{j_0}\sigma_t^\rho(p)$ is a factor of the denominator of $c_{i_0}S_t^{i_0}f$ for some $i_0$ with $0\leq i_0\leq \rho$. So $\sigma_x^{j_0}\sigma_t^{\rho}(p)\mid\sigma_t^{i_0}(\bar u)$. Thus both $p$ and $\sigma_t^{\rho-i_0}\sigma_x^{j_0}(p)$ divide $\bar u$. Since $j_0\neq 0$ and $\bar u$ is shift-free with respect to $x$, we have $\rho-i_0\neq 0$ and hence $\rho - i_0 > 0$. So for every irreducible factor $p$ of $\bar u$, there exist $i,j\in\set Z$, $i>0$ such that $\sigma_t^i\sigma_x^j(p)$ divides $\bar u$. Hence by Lemma~\ref{Lem: integer-linear thm6}, $\bar u$ is integer-linear. This leads to a contradiction.
    %there exist $i_1,j_1\in \set Z$, $i_1>0$, such that $\sigma_t^{i_1}\sigma_x^{j_1}(p)$ divides $\bar u$. For this fixed integer $j_1$, we may choose $i_1$ to be such a maximal integer. Then $\sigma_t^{i}\sigma_x^{j_1}(p)\nmid \bar u$ for all $i>i_1$. Similarly, there exist $i_2,j_2\in \set Z$, $i_2>i_1$, such that $\sigma_t^{i_2}\sigma_x^{j_2}(p)$ divides $\bar u$. By continuing this process, we obtain that $p, \sigma_t^{i_1}\sigma_x^{j_1}(p)$, $\sigma_t^{i_2}\sigma_x^{j_2}(p),\ldots$ are irreducible factors of $\bar u$ for some integers $i_k, j_k\in \set Z$ and $0<i_1<i_2<\cdots$. Since $\bar u$ has only finitely many irreducible factors, it follows that $\sigma_t^i\sigma_x^j(p) = p$ for some $i,j\in \set Z$, $i>0$. By Lemma~\ref{Lem: integer-linear lem3}, $p$ is integer-linear. This leads to a contradiction. 
\end{proof}

\begin{proof}[Proof of Theorem~\ref{Thm: main_shiftfree}] Suppose that $f=\frac{aW}{u}$, where $u$ is shift-free with respect to $x$, has a telescoper of type $(S_t;S_x)$. There exists a nonzero operator $T\in C(t)[S_t]$ such that $Tf$ is summable in $A$ with respect to $x$. By Theorem~\ref{Thm:spread_sum}, the stem of $Tf$ is spread with respect to $x$. By Theorem~\ref{Thm:spread_Tf}, this condition is not satisfied unless the stem of $f$ is $1$, i.e., $f$ is proper. Thus $u$ is integer-linear.
\end{proof}
\begin{proof}[Proof of Theorem~\ref{Thm: main} (continued)]
	Suppose that $f$ has a telescoper of type $(S_t;S_x)$. Let $W$ be a suitable basis of $A$ with respect to $x$. By the decomposition~\eqref{EQ: AP-reduction}, we get
	\[	f = \Delta_x(g)+ h\quad\text{and}\quad h=\frac{cW}{\tilde u},\]
where $g\in A$, $c\in K[x]^r$, $\tilde u\in K[x]$ is shift-free with respect to $x$. For any operator $T\in C(t)[S_t]$, we have $Tf = \Delta_x(Tg) + Th$. So $f$ has a telescoper of type $(S_t;S_x)$ if and only if $h$ has a telescoper of type $(S_t;S_x)$. By Theorem~\ref{Thm: main_shiftfree}, $\tilde u$ is integer-linear, i.e., $h$ is proper.
\end{proof}

\section*{Acknowledgement}
I would like to thank Manuel Kauers for helpful discussions, especially on the properties of properness in Section~\ref{sec: stem}. I also would like to thank Shaoshi Chen for his support.

%\bibliographystyle{plain}
%%\balance
%\bibliography{integral}

\end{document}